\documentclass{IEEEtran} 

\usepackage{graphics} 
\usepackage{epsfig} 
\usepackage{amsmath} 
\usepackage{amssymb}  

\usepackage{algpseudocode}
\usepackage{algorithm}
\usepackage{epstopdf}

\usepackage{hyperref}

\usepackage{amsthm}

\newtheorem{assumption}{\bf Assumption}

\newtheorem{theorem}{\bf Theorem}
\newtheorem{proposition}{\bf Proposition}
\newtheorem{lemma}{\bf Lemma}
\newtheorem{corollary}{\bf Corollary}
\newtheorem{remark}{\bf Remark}

\usepackage{setspace}
\usepackage{tikz}               
\usepgflibrary{arrows}

\usepackage{enumitem}

\begin{document}
\IEEEoverridecommandlockouts
\IEEEpubid{\begin{minipage}{\textwidth}\ \\[12pt] \\ \\
         \copyright 2020 IEEE.  Personal use of this material is  permitted.  Permission from IEEE must be obtained for all other uses, in  any current or future media, including reprinting/republishing this material for advertising or promotional purposes, creating new  collective works, for resale or redistribution to servers or lists, or  reuse of any copyrighted component of this work in other works.
     \end{minipage}}
\title{A computationally efficient robust model predictive control framework for uncertain nonlinear systems  \\- extended version}
\author{Johannes K\"ohler$^1$, Raffaele Soloperto$^1$, Matthias A. M\"uller$^2$, Frank Allg\"ower$^1$
\thanks{%
This work was supported by the German Research Foundation under Grants GRK 2198/1 - 277536708, AL 316/12-2, and MU 3929/1-2 - 279734922. 
The authors thank the International Max Planck Research School for Intelligent Systems (IMPRS-IS) for supporting Raffaele Soloperto.
}
\thanks{$^1$Johannes K\"ohler, Raffaele Soloperto,   and Frank Allg\"ower are with the Institute for Systems Theory and Automatic Control, University of Stuttgart, 70550 Stuttgart, Germany.
(email:$\{$ johannes.koehler, raffaele.soloperto, frank.allgower\}@ist.uni-stuttgart.de)}
\thanks{$^2$Matthias A. M\"uller is with the Institute of Automatic Control, Leibniz University Hannover, 30167 Hannover, Germany (e-mail:mueller@irt.uni-hannover.de)}
}

\maketitle
%
\begin{abstract}
In this paper, we present a nonlinear robust model predictive control (MPC) framework for general (state and input dependent) disturbances.
This approach uses an online constructed tube in order to tighten the nominal (state and input) constraints. 
To facilitate an efficient online implementation, the shape of the tube is based on an offline computed incremental Lyapunov function with a corresponding (nonlinear) incrementally stabilizing feedback. 
Crucially, the online optimization only implicitly includes these nonlinear functions in terms of scalar bounds, which enables an efficient implementation.   
Furthermore, to account for an efficient evaluation of the worst case disturbance, a simple function is constructed offline that upper bounds the possible disturbance realizations in a neighbourhood of a given point of the open-loop trajectory. 
The resulting MPC scheme ensures robust constraint satisfaction and practical asymptotic stability with a moderate increase in the online computational demand compared to a nominal MPC. 
We demonstrate the applicability of the proposed framework in comparison to state of the art robust MPC approaches with a nonlinear benchmark example. 
This paper is an extended version of~\cite{koehler2020robust}, and contains further details and additional considers: continuous-time systems (App.~\ref{app:cont}), more general nonlinear constraints  (App.~\ref{app:nonlin_constraints}) and special cases (Sec.~\ref{sec:RMPC_special}). 
\end{abstract}

\begin{IEEEkeywords}
Nonlinear MPC, Robust MPC, Constrained control, Uncertain systems.  
\end{IEEEkeywords}

\section{Introduction}
\subsection*{Motivation}
Model Predictive Control (MPC)~\cite{rawlings2017model,kouvaritakis2016model,grune2017nonlinear} is an optimization based control method that can handle general nonlinear dynamics and constraints. 
There exists a great body of literature on the design of MPC schemes that ensure rigorous theoretical properties (such as recursive feasibility, constraint satisfaction and stability), assuming that an accurate model of the plant is available. 
 Designing MPC schemes that ensure these properties despite disturbances and/or uncertainty in the model is the topic of robust MPC. 

There exist sophisticated methods to account for uncertainty and disturbances using Min-Max MPC~\cite{raimondo2009min}, scenario MPC~\cite{calafiore2013robust,schildbach2014scenario} and stochastic MPC~\cite{lorenzen2017stochastic,hewing2019cautious}. 
These schemes enjoy rigorous theoretical properties and can yield good performance, but typically suffer from a significantly increased online computational demand. 
Under suitable conditions, nominal MPC schemes have inherent robustness properties \cite{yu2014inherent,allan2017inherent}, however, in the presence of hard state constraints this robustness margin can be arbitrary small or even nonexistent~\cite{grimm2004examples}. 

The practical compromise (in terms of performance and complexity) is robust tube-based MPC, which uses a tube around the nominal trajectory that confines the actual (uncertain) system trajectory.
Robust constraint satisfaction is ensured by tightening the nominal constraints based on this tube.
The result is a certainty equivalent MPC, which uses a nominal prediction model in combination with constraint tightening. 
 In this paper we present a tube-based robust MPC scheme that is applicable to a large class of nonlinear systems subject to general state and input dependent uncertainty. 

\IEEEpubidadjcol

\subsection*{Related work}
In~\cite{chisci2001systems,mayne2005robust} tube based robust MPC schemes for linear systems subject to additive disturbances are presented,  based on an auxiliary linear controller in combination with a polytopic tube. 
In order to extend this approach to nonlinear systems, a corresponding auxiliary controller together with a simple method to compute a tube is necessary, typically in form of sublevel sets of some incremental Lyapunov function. 
In~\cite{marruedo2002input,limon2005robust} no stabilizing feedback is considered and a tube is constructed based on a Lipschitz constant of the dynamics or interval arithmetics, respectively.  
Although simple to apply, these approaches become very conservative for larger prediction horizons. 
In~\cite{yu2013tube} a quadratic incremental Lyapunov function with a linear auxiliary controller are computed offline, which are used to design a robust MPC scheme that is applicable to nonlinear systems with small nonlinearities only. 
In~\cite{DynamicTube_Lopez_19} for the special case of feedback linearizable systems, the tube is parametrized as a hyper cube and a boundary layer controller is used as an auxiliary controller.
In~\cite{bayer2013discrete} incrementally stable systems are considered and the sublevel sets of the incremental Lyapunov function are used to characterize the tube and thus tighten the constraints. 
Similarly, in~\cite{singh2017robust} a control contraction metric is determined offline by solving a sum-of-squares problem, which is used to tighten the constraints, compare also \cite{wang2019differential}. 
In~\cite{villanueva2017robust}, a tube is computed online within the MPC optimization problem based on Min-Max differential inequalities. 
In~\cite{mayne2011tube} a reference tracking MPC scheme is used as an ancillary controller to stabilize a pre-computed nominal trajectory with tightened constraints.

Similar problems and methods can also be found in robust trajectory optimization in robotics. 
For example, in~\cite{majumdar2017funnel} complex movements are decomposed into simple maneuvers and for each maneuver a funnel/tube is computed offline.  
In~\cite{manchester2017dirtrel}, the time-varying (iterative) LQR is used to tighten the constraints, which is similar to~\cite{villanueva2017robust}. 

In general, by using more sophisticated concepts, robust MPC methods can handle more general nonlinear systems and reduce the conservatism. 
However, this comes at the cost of an increased offline and/or online computational complexity. 
In this paper, we consider the quite general class of nonlinear systems which are incrementally stabilizable~\cite{koehler2020nonlinear}. Instead of directly using the complex (and possibly unknown) incremental Lyapunov function in the design of the MPC scheme, we only use scalar bounds that characterize the stabilizability property.
This enables us to use a simple and efficient implementation to robustly stabilize complex nonlinear systems.

In addition to the challenges related to nonlinear system dynamics, the consideration of general state and input dependent disturbance descriptions instead of additive constantly bounded disturbances is a crucial problem. 
In the special case of linear systems subject to parametric uncertainty, a polytopic tube can be constructed online~\cite{fleming2015robust,houska2016short},  \cite{feng2019min}, \cite[Chap.~5]{kouvaritakis2016model}. 
For nonlinear systems, only very few results allowing for general state and input dependent uncertainty descriptions are available. In~\cite{pin2009robust} the method in~\cite{marruedo2002input} is extended to state dependent disturbances by constructing a simple tube online. 
Similar to~\cite{marruedo2002input}, the procedure is easy to apply but suffers from a prohibitive conservatism.
As a complementary result, in~\cite{RS_NMPC_18} an offline constraint tightening based on backward reachable sets is proposed.
More advanced methods are based on scenario or stochastic formulations~\cite{calafiore2013robust,schildbach2014scenario,lorenzen2017stochastic,hewing2019cautious}, compare also \cite{lucia2013multi}.  
These approaches typically suffer from a significant increase in the online computational demand and yield probabilistic guarantees, if any.

\subsection*{Contribution}
In this work, we present a nonlinear robust MPC framework for incrementally (exponentially) stabilizable nonlinear systems subject to general nonlinear state and input dependent disturbances/uncertainty. 
The paper contains two main contributions.
First, we present a simple constraint tightening for nonlinear robust MPC based on incremental stabilizability. 
Second, we provide a framework to consider general nonlinear state and input dependent disturbances in robust MPC by including the predicted size of the tube as scalar variables in the online MPC optimization problem. 

Most of the existing design procedures for nonlinear robust MPC~\cite{yu2013tube,bayer2013discrete,singh2017robust} compute a complex nonlinear incremental Lyapunov function and a corresponding incrementally stabilizing feedback offline and then use them to design the online optimization problem. 
In contrast, we only use the fact that the system is incrementally stabilizable, i.e., the existence of a possibly quite complex (or analytically unknown) nonlinear incremental Lyapunov function with a corresponding feedback, and then design the robust MPC based on suitable scalar bounds on these functions, which can be obtained numerically.  
This enables a simple and efficient implementation. 

Furthermore, we design a general nonlinear state and input dependent function that characterizes uncertainty for different operating points. 
Based on this description, we augment the nonlinear robust MPC scheme such that the size of the tube and correspondingly the constraint tightening are computed online (depending on the nominal predicted trajectory). 
The resulting nonlinear robust MPC scheme ensures robust recursive feasibility and robust constraint satisfaction, while avoiding highly uncertain areas, thus acting cautiously.

Within the proposed framework, the offline computation only requires simple scalar functions/operations, and the online computational demand is only moderately increased compared to a nominal MPC scheme, similar to~\cite{marruedo2002input,pin2009robust}. 
We demonstrate  the applicability of the proposed approach and compare it to competing approaches~\cite{marruedo2002input,villanueva2017robust,pin2009robust} in terms of computational complexity and conservatism with a nonlinear example.

The paper is structured as follows: 
Section~\ref{sec:setup_RMPC} discusses the problem setup.  
Section~\ref{sec:RMPC_main} presents the proposed scheme and the theoretical analysis. 
Section~\ref{sec:RMPC_special} discusses the important special cases of constantly bounded additive disturbances and linear parameter varying (LPV) systems.
Section~\ref{sec:num} provides a numerical example to demonstrate the applicability of the proposed approach and compare it to competing approaches~\cite{marruedo2002input,villanueva2017robust,pin2009robust} in terms of computational complexity and conservatism.
Section~\ref{sec:sum} concludes the paper.
In the Appendix, the results are extended to continuous-time dynamics (App.~\ref{app:cont}) and more general nonlinear constraints  (App.~\ref{app:nonlin_constraints}).
A preliminary version for the special case of additive disturbances (Sec.~\ref{sec:additive}) can be found in the conference proceedings~\cite{kohler2018novel}. 

\subsection*{Notation}
The quadratic norm with respect to a positive definite matrix $Q=Q^\top$ is denoted by $\|x\|_Q^2=x^\top Q x$ and 
the minimal and maximal eigenvalue of $Q$ are denoted by $\lambda_{\min}(Q)$ and $\lambda_{\max}(Q)$, respectively. 
The positive real numbers are $\mathbb{R}_{\geq 0} = \{ r\in\mathbb{R}|r\geq  0\}$.
The vertices of a polytopic set $\Theta$ are denoted by $\theta^i\in\text{Vert}(\Theta)$. 
By $\mathcal{K}$ we denote the class of functions $\alpha:\mathbb{R}_{\geq 0}\rightarrow\mathbb{R}_{\geq 0}$, which are continuous, strictly increasing and satisfy $\alpha(0)=0$. 
By $\mathcal{K}_{\infty}$ we denote the class of functions $\alpha\in\mathcal{K}$, which are also unbounded. 

\section{Problem setup: Incremental stabilizability and uncertainty description}
\label{sec:setup_RMPC}
This section introduces the assumptions on the nonlinear system dynamics and the uncertainty. 
The problem setup is introduced in Section~\ref{sec:setup}. 
 Section~\ref{sec:increm} discusses the system property \textit{incremental stabilizability}, which is key for the proposed simple implementation. 
An efficient description of the disturbance bound is introduced in Section~\ref{sec:dist_description}. 
%
%
\subsection{Setup}
\label{sec:setup}
We consider a nonlinear perturbed discrete-time system 
\begin{align}
\label{eq:sys_w}
x_{t+1}&=f_w(x_t,u_t,d_t)=f(x_t,u_t)+d_w(x_t,u_t,d_t),
\end{align}
 with state $x\in\mathbb{R}^n$, control input $u\in\mathbb{R}^m$, disturbance $d\in\mathbb{D}\subset\mathbb{R}^q$, time $t\in\mathbb{N}$, perturbed system $f_w$, nominal model $f$ and model mismatch $d_w$.
We impose point-wise in time state and input constraints  
\begin{align}
\label{eq:constraint}
(x_t,u_t)\in \mathcal{Z},\quad t\geq 0,
\end{align}
with some compact nonlinear constraint set 
\begin{align*}
\mathcal{Z}=\{(x,u)\in\mathbb{R}^{n+m}|~g_j(x,u)\leq 0,~j=1,\dots,p\}\subset\mathbb{R}^{n+m}.
\end{align*}
\begin{assumption}
\label{ass:hat_w}
For each $(x,u)\in\mathcal{Z}$, there exists a compact set $\mathcal{W}(x,u)\subset\mathbb{R}^n$, such that the model mismatch $d_w$ satisfies $d_w(x,u,d)\in\mathcal{W}(x,u)$ for all $d\in\mathbb{D}$. 
\end{assumption}
Such a description includes additive disturbances, multiplicative disturbances, more general nonlinear disturbances, and/or unmodeled nonlinearities. 
We consider the problem of stabilizing the origin and assume that $0\in\text{int}(\mathcal{Z})$, $f(0,0)=0$. 
The open-loop cost of a predicted state and input sequence $x_{\cdot|t}\in \mathbb{R}^{n\cdot(N+1)}$, $u_{\cdot|t}\in\mathbb{R}^{m\cdot N}$ is defined as
\begin{align*}
&J_N(x_{\cdot | t},u_{\cdot|t})=\sum_{k=0}^{N-1}\ell(x_{k|t},u_{k|t})+V_f(x_{N|t}),
\end{align*}
with some positive definite stage cost $\ell$ and terminal cost $V_f$. 
One conceptual framework to address this robust constrained stabilization problem is $\min-\max$ MPC~\cite{raimondo2009min}:    
\begin{subequations}
\label{eq:MPC_concept}
\begin{align}
&\min_{u_{\cdot|t}}\max_ {w_{\cdot|t}} J_N(x_{\cdot |t},u_{\cdot|t})\\
\text{s.t. }
&x_{0|t} = x_t,\quad x_{k+1|t} = f(x_{k|t}, u_{k|t})+w_{k|t},\\
&(x_{k|t},u_{k|t})\in\mathcal{Z},~
 x_{N|t} \in \mathcal{X}_f,\\
& \forall w_{k|t}\in\mathcal{W}(x_{k|t},u_{k|t}),\quad \nonumber
 k=0,\dots, N-1.\nonumber
\end{align}
\end{subequations}
The result of this optimization problem is a control input\footnote{
Typically, $\min-\max$ MPC schemes~\cite{raimondo2009min} optimize over feedback policies $u_{k|t}=\Pi_k(x_{k|t})$ instead of open-loop control inputs $u_{k|t}$.
} $u^*_{\cdot|t}$, such that  the constraints are satisfied for the worst-case disturbance realization $w^*_{\cdot|t}$.  
This approach is conceptually very appealing but suffers from a prohibitive computational demand. 
In this paper, we present a framework for tube-based nonlinear robust MPC that shares many of the theoretical properties with~\cite{raimondo2009min}, with an online computational demand comparable to a nominal MPC scheme. 
 
%
\subsection{Local Incremental stabilizability}
\label{sec:increm}
In order to provide theoretical guarantees for robust stabilization, we assume that the nominal system is locally incrementally stabilizable. 
\begin{assumption}
\label{ass:contract}  \cite{koehler2020nonlinear}, \cite[Ass.~1]{kohlernonlinear19}
There exist a control law $\kappa:\mathbb{R}^n\times\mathcal{Z}\rightarrow\mathbb{R}^m$, an incremental Lyapunov function $V_{\delta}:\mathbb{R}^n\times \mathcal{Z}\rightarrow\mathbb{R}_{\geq 0}$, which is continuous in the first argument and satisfies $V_{\delta}(z,z,v)=0$ for all $(z,v)\in\mathcal{Z}$, and positive constants $c_{\delta,l},~c_{\delta,u},~\delta_{\text{loc}},~\kappa_{\max}{>0}$, $\rho\in(0,1)$, such that the following properties hold for all $(x,z,v)\in\mathbb{R}^n\times\mathcal{Z}$ with $V_{\delta}(x,z,v)\leq \delta_{\text{loc}}$, and all $(x^+,z^+,v^+)\in\mathbb{R}^n\times\mathcal{Z}$: 
\begin{subequations}
\label{eq:contract_ass}
\begin{align}
\label{eq:bound}
c_{\delta,l}\|x-z\|^2\leq V_{\delta}(x,z,v)\leq& c_{\delta,u}\|x-z\|^2,\\
\label{eq:k_max}
\|\kappa(x,z,v)-v\|^2\leq& \kappa_{\max}{{V_{\delta}(x,z,v)}},\\
\label{eq:contract}
V_{\delta}(x^+,z^+,v^+)\leq& \rho^2 V_{\delta}(x,z,v),
\end{align}
\end{subequations}
 with $x^+=f(x,\kappa(x,z,v))$, $z^+=f(z,v)$.  
\end{assumption}
\begin{remark}
\label{rk:stab}
It is possible to relax Assumption~\ref{ass:contract}, such that for any feasible sequence $z_{t+1}=f(z_t,v_t)$, $(z_t,v_t)\in\mathcal{Z}$, $t\geq 0$, there exist a time-varying incremental Lyapunov function $V_{\delta,t}(x,z)$ and a time-varying feedback $\kappa_t(x,z,v)$, that (uniformly) satisfy the conditions~\eqref{eq:contract_ass}. 
In this paper, we consider the time-invariant description in Assumption~\ref{ass:contract} for notational simplicity.  
This system property is a natural extension of previous works on incremental stability and corresponding incremental Lyapunov
functions~\cite{koehler2020nonlinear}, \cite[Ass.~1]{kohlernonlinear19}, \cite{angeli2002lyapunov} and contains various stabilizability properties considered in existing robust MPC schemes as special cases.  
In~\cite{villanueva2017robust}  $V_{\delta,t}(x,z)=\|x-z\|_{P_t}^2$, $\kappa_t(x,z,v)=v+K_t(x-z)$ is considered, with $K_t,~P_t$ computed online, similar to the time-varying (iterative) LQR in~\cite{manchester2017dirtrel}.
In~\cite{singh2017robust} $V_{\delta}(x,z)$ and $\kappa(x,z,v)=v+K_z(x-z)$ are determined offline using control contraction metrics~\cite{manchester2017control}. 
Similarly, in~\cite{koehler2020nonlinear} $V_{\delta}(x,z)=\|x-z\|_{P_z}^2$ and $\kappa(x,z,v)=v+K_z(x-z)$ are computed offline using quasi-LPV methods. 
The special case of constant matrices $P,K$ is considered in~\cite{yu2013tube,yu2010robust}. 
In~\cite{bayer2013discrete} the system is assumed to be incrementally stable with $\kappa(x,z,v)=v$.
\end{remark}    
The following assumptions capture the considered conditions on the stage cost $\ell$ and the constraint set $\mathcal{Z}$.
\begin{assumption}
\label{ass:stage_cost}
The stage cost $\ell:\mathcal{Z}\rightarrow\mathbb{R}_{\geq0}$ satisfies
\begin{subequations}
\begin{align}
\label{eq:stage_cost_l}
\ell(r) &\geq \alpha_{\ell}(\|r\|),\\
\label{eq:stage_cost_c}
\ell(\tilde{r})-\ell(r) &\leq \alpha_c(\|\tilde{r}-r\|), \quad \forall~r \in \mathcal{Z},  \tilde{r}\in\mathbb{R}^{n+m},
\end{align}
\end{subequations}
with $\alpha_{\ell},~\alpha_c\in\mathcal{K}_{\infty}$. 
Furthermore, for any $\rho\in(0,1)$, we have\footnote{
For polynomials $\alpha_c(c)=\sum_{j=1}^\infty a_j c^j$, $a_j\geq  0$, with $a_k>0$ for some $k\in\mathbb{N}$, this condition is satisfied with 
$\alpha_{c,\rho}(c):=\sum_{k=0}^{\infty}\sum_{j=1}^\infty a_j (c\rho^k)^j=\sum_{j=1}^\infty {a_j c^j}/({1-\rho^j})$.
} $\alpha_{c,\rho}(c):=\sum_{k=0}^{\infty} \alpha_c(\rho^k c)\in\mathcal{K}_{\infty}$. 
\end{assumption}
\begin{assumption}
\label{ass:gen_nonlin_con_Lipschitz} 
There exist local Lipschitz constants $L_j$, such that 
\begin{align}
\label{eq:nonlin_con_Lipschitz}
g_j(\tilde{r})-g_j(r)\leq L_j\|r-\tilde{r}\|, \quad j=1,\dots,p, 
\end{align}
holds for all $r\in\mathcal{Z}$ and all $\tilde{r}\in\mathbb{R}^{n+m}$ with $\|r-\tilde{r}\|^2\leq \frac{\delta_{loc}}{c_{\delta,l}}$.
\end{assumption}
Assumptions~\ref{ass:stage_cost}--\ref{ass:gen_nonlin_con_Lipschitz} are, for example, satisfied with a quadratic positive definite stage cost $\ell$ and a convex polytopic constraint set $\mathcal{Z}$. 
More general stage costs are discussed in Remark~\ref{remark:general_cost} (Sec.~\ref{sec:discuss}). 
In Appendix~\ref{app:nonlin_constraints}, we discuss the extension to more general constraints, which do not satisfy Assumption~\ref{ass:gen_nonlin_con_Lipschitz}.

The following proposition allows us to compute scalar bounds that relate the level set of the incremental Lyapunov function $V_{\delta}$ to the nonlinear constraint set $\mathcal{Z}$. 
\begin{proposition}
\label{prop:contract}
Suppose that Assumptions \ref{ass:contract}, \ref{ass:stage_cost} and \ref{ass:gen_nonlin_con_Lipschitz} hold, then there exist constants $c_j\geq 0$, $j=1,\dots, p$, and a function $\alpha_u \in \mathcal{K}_{\infty}$ such that the following inequalities hold for all $(x,z,v)\in\mathbb{R}^n\times\mathcal{Z}$ with $V_{\delta}(x,z,v)\leq c^2$ and any $c\in[0,\sqrt{\delta_{\text{loc}}}]$:
\begin{align}
\label{eq:stage_cost_bound}
\ell(x, \kappa(x,z,v))-\ell(z,v) \leq&  \alpha_u(c),\\ 
\label{eq:lipschitz}
g_j(x,\kappa(x,z,v))-g_j(z,v)\leq& c_j\cdot c. 
\end{align}
\end{proposition}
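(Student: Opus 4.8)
The plan is to translate the level-set condition $V_{\delta}(x,z,v)\le c^2$ into explicit Euclidean bounds on the state deviation $\|x-z\|$ and the input deviation $\|\kappa(x,z,v)-v\|$, and then feed the resulting bound on the combined perturbation into the continuity estimate \eqref{eq:stage_cost_c} for the stage cost and the Lipschitz estimate \eqref{eq:nonlin_con_Lipschitz} for the constraints. Since $c\le\sqrt{\delta_{\text{loc}}}$ gives $V_{\delta}(x,z,v)\le c^2\le\delta_{\text{loc}}$, the full set of local properties \eqref{eq:contract_ass} of Assumption~\ref{ass:contract} is available at $(x,z,v)$, which is what makes the whole argument go through.

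First I would use the lower bound in \eqref{eq:bound} to obtain $\|x-z\|^2\le V_{\delta}(x,z,v)/c_{\delta,l}\le c^2/c_{\delta,l}$ and the gain bound \eqref{eq:k_max} to obtain $\|\kappa(x,z,v)-v\|^2\le\kappa_{\max}V_{\delta}(x,z,v)\le\kappa_{\max}c^2$. Writing $r=(z,v)$ and $\tilde r=(x,\kappa(x,z,v))$, these two estimates combine to $\|\tilde r-r\|^2=\|x-z\|^2+\|\kappa(x,z,v)-v\|^2\le(1/c_{\delta,l}+\kappa_{\max})c^2$, i.e. $\|\tilde r-r\|\le L_{\delta}c$ with $L_{\delta}:=\sqrt{1/c_{\delta,l}+\kappa_{\max}}$. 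This single scalar bound, linear in $c$, is the quantity driving both claimed inequalities.

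For the stage cost, applying \eqref{eq:stage_cost_c} with this pair $r,\tilde r$ and using that $\alpha_c$ is increasing yields $\ell(\tilde r)-\ell(r)\le\alpha_c(\|\tilde r-r\|)\le\alpha_c(L_{\delta}c)$, so I would set $\alpha_u(c):=\alpha_c(L_{\delta}c)$; this lies in $\mathcal{K}_{\infty}$ since $\alpha_c\in\mathcal{K}_{\infty}$ and $L_{\delta}>0$, as composition with a positive linear scaling preserves the class, establishing \eqref{eq:stage_cost_bound}. For the constraints, applying \eqref{eq:nonlin_con_Lipschitz} gives $g_j(\tilde r)-g_j(r)\le L_j\|\tilde r-r\|\le L_jL_{\delta}c$, so $c_j:=L_jL_{\delta}$ establishes \eqref{eq:lipschitz}.

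The step requiring the most care is invoking \eqref{eq:nonlin_con_Lipschitz}, because Assumption~\ref{ass:gen_nonlin_con_Lipschitz} only guarantees this Lipschitz bound on the domain $\|\tilde r-r\|^2\le\delta_{\text{loc}}/c_{\delta,l}$. The state component alone is safely inside, since $\|x-z\|^2\le c^2/c_{\delta,l}\le\delta_{\text{loc}}/c_{\delta,l}$, so the delicate point is the additional input deviation contributed by the stabilizing feedback: one must confirm that the combined perturbation of size $L_{\delta}c$ still lies in the neighbourhood on which the local constants $L_j$ are valid (equivalently, that the validity radius in Assumption~\ref{ass:gen_nonlin_con_Lipschitz} is read as covering the full state-and-input deviation, or that $\delta_{\text{loc}}$ is chosen accordingly). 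I expect this domain-of-validity check — the interaction between the level-set bound, the feedback gain $\kappa_{\max}$, and the Lipschitz neighbourhood — to be the only nontrivial obstacle; everything else is a direct substitution into \eqref{eq:contract_ass}, \eqref{eq:stage_cost_c}, and \eqref{eq:nonlin_con_Lipschitz}.
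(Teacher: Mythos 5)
Your proof is correct and matches the paper's own argument essentially line for line: the same combined bound $\|x-z\|^2+\|\kappa(x,z,v)-v\|^2\le\left(1/c_{\delta,l}+\kappa_{\max}\right)c^2$ obtained from \eqref{eq:bound} and \eqref{eq:k_max}, the same choice $\alpha_u(c)=\alpha_c\left(c\sqrt{1/c_{\delta,l}+\kappa_{\max}}\right)$, and the same constants $c_j=L_j\sqrt{1/c_{\delta,l}+\kappa_{\max}}$. The domain-of-validity caveat you flag when invoking \eqref{eq:nonlin_con_Lipschitz} is a fair observation, but it applies equally to the paper's own proof, which likewise feeds the full state-and-input deviation into the Lipschitz bound; it reflects how the neighbourhood in Assumption~\ref{ass:gen_nonlin_con_Lipschitz} is meant to be read rather than a gap in your argument.
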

\begin{proof}
Based on \eqref{eq:bound} and \eqref{eq:k_max}, we have
\begin{align}
\label{eq:triangle_in}
\|x-z\|^2+\|\kappa(x,z,v)-v\|^2 \leq \left(\frac{1}{{c_{\delta, l}}}+ \kappa_{\max}\right) c^2. 
\end{align}
This implies \eqref{eq:stage_cost_bound} with
\begin{align*}
&\ell(x, \kappa(x,z,v))-\ell(z,v) \\
\stackrel{\eqref{eq:stage_cost_c}} \leq& \alpha_c\left(c\sqrt{ \left({1}/{{c_{\delta, l}}} + \kappa_{\max}\right) }\right) 
 =:  \alpha_u(c).
\end{align*}
Similarly, \eqref{eq:lipschitz} is satisfied with
\begin{align*}
&g_j(x,\kappa(x,z,v))-g_j(z,v) \\
\stackrel{\eqref{eq:nonlin_con_Lipschitz}}{\leq}& L_j\sqrt{\|x-z\|^2+\|\kappa(x,z,v)-v\|^2} \\
\stackrel{\eqref{eq:triangle_in}}{\leq} &L_j c\cdot \sqrt{ 1/c_{\delta, l} + \kappa_{\max}}=:c_j \cdot c.  
\end{align*}
\end{proof}
One of the most important features of the proposed robust MPC scheme is that the explicit description and/or computation of the general nonlinear functions $V_{\delta},~\kappa$ is not required for the implementation. 
Thus, this scheme can also be applied if an analytic description and/or verification of this system property are too complex. 
Indeed, only the scalar variables $c_j,~c_{\delta,l},~c_{\delta,u},~\delta_{\text{loc}}$, and $\rho$ need to be computed. 
For example, control contraction metrics (CCM)~\cite{manchester2017control} provide simple (analytical) differential Lyapunov functions and differential feedbacks, which allow for a simple computation of suitable constants (c.f.~\cite[Thm.~III.2]{singh2017robust}). However, evaluating the corresponding control law $\kappa$ and incremental Lyapunov function $V_{\delta}$ is, in general, complex and requires the solution of a nonlinear optimization problem to determine the minimizing geodesic. 
The simple description requiring only knowledge of the above specified constants ensures that the online computational demand does not increase significantly, which is one of the major advantages of the proposed approach. 
We are currently investigating numerical approaches to directly compute these scalar values without using an explicit analytical description of $V_{\delta}$. 
In addition, these scalar variables can be easily tuned to provide safety, compare the numerical example in~\cite{wabersich2018safe}.

\subsection{Efficient disturbance description}
\label{sec:dist_description}
One of the difficulties of considering general state and input dependent disturbances is that the future size of the disturbance depends on the predicted state and input sequence $(x_{\cdot|t},u_{\cdot|t})$. 

A safe and reliable (robust) MPC implementation requires that a valid upper bound on the magnitude of the disturbance at the predicted (uncertain) state $x_{k|t}$ 
can be efficiently evaluated.
To facilitate such an efficient evaluation we consider the following assumption.
\begin{assumption}
\label{ass:approx}
Consider the uncertainty set $\mathcal{W}(x,u)$, the incrementally stabilizing feedback $\kappa$ and incremental Lyapunov function $V_{\delta}$ from Assumptions~\ref{ass:hat_w} and~\ref{ass:contract}. 
There exists a function $\tilde{w}_{\delta}:\mathcal{Z}\times\mathbb{R}_{\geq 0}\rightarrow\mathbb{R}_{\geq0}$, such that for any point $(x,z,v)\in\mathbb{R}^n\times\mathcal{Z}$ with $V_{\delta}(x,z,v)\leq c^2$, any $c\in[0,\sqrt{\delta_{\text{loc}}}]$, any $(z^+,v^+)\in\mathcal{Z}$ with $z^+=f(z,v)$, and any disturbance $d_w\in\mathcal{W}(x,\kappa(x,z,v))$, we have
\begin{subequations}
\begin{align}
\label{eq:w_delta_tilde_1}
{V_{\delta}(z^++d_w,z^+,v^+)}\leq \tilde{w}^2_{\delta}(z,v,c).
\end{align}
Furthermore, $\tilde{w}_{\delta}$ satisfies the following monotonicity property: For any point $(x,z,v)\in\mathbb{R}^n\times \mathcal{Z}$ such that $V_{\delta}(x,z,v)\leq (c_1-c_2)^2$ with constants $0\leq c_2\leq c_1\leq \sqrt{\delta_{loc}}$, we have 
\begin{align}
\label{eq:w_delta_tilde}
\tilde{w}_{\delta}(x,\kappa(x,z,v),c_2)\leq \tilde{w}_{\delta}(z,v,c_1).
\end{align}
\end{subequations}
\end{assumption}
Note that by using the set $\mathcal{W}(x,u)$ in Assumption~\ref{ass:hat_w}, we can construct a function that upper bounds the uncertainty at any point $(x,u)\in\mathcal{Z}$. 
In a similar fashion,~\eqref{eq:w_delta_tilde_1} ensures that $\tilde{w}_{\delta}(z,v,c)$ is an upper bound to the uncertainty that can occur in the neighbourhood of a point $(z,v)\in\mathcal{Z}$, where the neighbourhood is given by $V_{\delta}(x,z,v)\leq c^2$.  
The monotonicity property~\eqref{eq:w_delta_tilde} is illustrated in Fig.~\ref{fig:monotonicity}, where we have that the uncertainty bound based on the set $\mathcal{S}_1$ needs to be larger than the uncertainty contained in the set $\mathcal{S}_2$. 
This property is crucial to evaluate bounds on the uncertainty for the online optimization, which remain valid in closed-loop operation. 
We refer to this as a \textit{monotonicity} property since in the quadratic case $\mathcal{S}_2 \subseteq \mathcal{S}_1$, and thus the measure-like function $\tilde{w}_\delta$ should satisfy \eqref{eq:w_delta_tilde}. 
Details on the offline computation/construction of the function $\tilde{w}_{\delta}$ can be found in Section~\ref{sec:description_details} (Prop.~\ref{prop:w_tilde}--\ref{prop:w_tilde_2}, Corollary~\ref{corol:w_tilde}),  the numerical example (Sec.~\ref{sec:num}) and in Section~\ref{sec:param} (Prop.~\ref{prop:LPV_simple}) for the special case of LPV systems. 
\begin{figure}[hbtp]
\begin{center}
\includegraphics[scale=0.48]{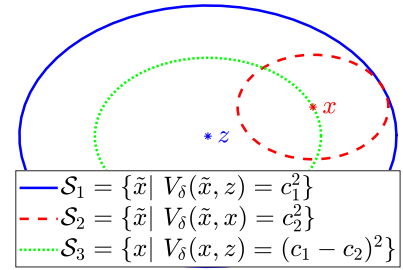}
\end{center}
\caption{Illustration: Monotonicity property of $\tilde{w}_{\delta}$ for quadratic functions $V_{\delta}$.}
\label{fig:monotonicity}
\end{figure}

\section{Robust MPC framework}  
\label{sec:RMPC_main}
This section contains the proposed robust MPC framework for nonlinear uncertain systems. 
The proposed framework is presented in Section~\ref{sec:RMPC_scheme}. 
The theoretical analysis, including assumptions on the terminal ingredients, is detailed in Section~\ref{sec:theory}. 
Section~\ref{sec:discuss} discusses the complexity and conservatism compared to existing robust MPC methods. 
Additional details on the offline computation are given in Section~\ref{sec:description_details} and the overall offline and online procedure is summarized. 
%
\subsection{Proposed nonlinear robust MPC scheme}
\label{sec:RMPC_scheme}
In the following, we present the proposed robust MPC scheme.  
The basic idea of the proposed approach is to online predict a tube size $s\in\mathbb{R}_{\geq 0}$, which (indirectly) characterizes sublevel sets of the incremental Lyapunov function $V_{\delta}$ (Ass.~\ref{ass:contract}). 
The tube size $s$ is then used to tighten the state and input constraints in order to ensure robust constraint satisfaction. 
The proposed nonlinear robust MPC scheme is based on the following optimization problem: 
\begin{subequations}
\label{eq:MPC_real}
\begin{align}
V_N(x_t)=&\min_{u_{\cdot|t},w_{\cdot|t},x_{\cdot|t},s_{\cdot|t}}J_N(x_{\cdot |t},u_{\cdot|t})\\
\text{s.t. }
&x_{0|t} = x_t,\quad 
\label{eq:tube_definition_init}
s_{0|t} = 0,\\
&x_{k+1|t} = f(x_{k|t}, u_{k|t}),\\
\label{eq:tube_definition}
&s_{k+1|t}=\rho s_{k|t}+{w}_{k|t},\\
\label{eq:w_definition}
&{w}_{k|t}\geq \tilde{w}_{\delta}(x_{k|t},u_{k|t},s_{k|t}),\\
\label{eq:real_constraints}
&g_j(x_{k|t},u_{k|t})+c_js_{k|t}\leq 0,\\
\label{eq:tube_bound_MPC}
&s_{k|t}\leq \overline{s},\quad 
 {w}_{k|t}\leq \overline{w},\\
\label{eq:terminal_region}
& (x_{N|t},s_{N|t}) \in \mathcal{X}_f,\\
& k=0,\dots, N-1,\quad j=1,\dots, p.\nonumber
\end{align}
\end{subequations}
The solution of \eqref{eq:MPC_real} are the optimal trajectories for the state $x^*_{\cdot|t}$, the input $u^*_{\cdot|t}$, the disturbance bound ${w}^*_{\cdot|
t}$, the tube size ${s}^*_{\cdot|t}$ and the value function $V_N$.
The resulting closed-loop system is given by
\begin{align}
\label{eq:close}
x_{t+1}=f_w(x_t,u_t,d_t),\quad u_t=u^*_{0|t}.
\end{align} 
The terminal cost $V_f$, the terminal set $\mathcal{X}_f$ and the scalar bounds $\overline{s},~\overline{w}$ will be introduced in Section~\ref{sec:theory}.  
Compared to a nominal MPC scheme, the state $x$ is augmented with the scalar variable $s$, which captures the size of the tube, while the input $u$ is augmented with the scalar variable $w$, which captures the magnitude of the uncertainty. 
Similarly to the state $x$ and input $u$, the additional state $s$ and input $w$ are subject to nonlinear dynamic equations~\eqref{eq:tube_definition}, \eqref{eq:w_definition}, constraints~\eqref{eq:real_constraints}, \eqref{eq:tube_bound_MPC} and terminal condition~\eqref{eq:terminal_region}. 
Correspondingly, the online computational demand of solving~\eqref{eq:MPC_real} is equivalent to a nominal MPC scheme with an augmented state $(x,s)\in\mathbb{R}^{n+1}$, augmented input vector $(u,w)\in\mathbb{R}^{m+1}$ and additional nonlinear inequality constraints~\eqref{eq:w_definition} on the decision variable $w$.

The numerical example (Sec.~\ref{sec:num}) also shows that the proposed robustification (based on the online computed tube size $s$ and the constraint tightening) only moderately increases the online computational demand.  
This is in contrast to much of the existing approaches for the considered general setup, such as~\cite{calafiore2013robust}, \cite{schildbach2014scenario,hewing2019cautious}, \cite{villanueva2017robust}, \cite{lucia2013multi} which often increase the online computational demand by orders of magnitude.

The scalar $\overline{w}$ represents a sufficient bound on the uncertainty that can occur in closed-loop operation, such that we can guarantee robust recursive feasibility of the terminal set constraint.  
Note that we do not assume any upper bound on the model mismatch $d_w$ in $\mathcal{Z}$, instead, the closed-loop system will automatically avoid regions with uncertainty exceeding $\overline{w}$ and, thus, act \textit{cautiously}.
Thus, even in the absence of state/input constraints, the robust MPC scheme avoids regions with large uncertainty to ensure robust closed-loop properties, which can be very advantageous from a practical point of view. 
Typically, such a behaviour is only present in scenario/stochastic MPC~\cite{calafiore2013robust,schildbach2014scenario,lorenzen2017stochastic,hewing2019cautious} by minimizing the expected stage cost $\ell$, compare Remark~\ref{remark:general_cost}. 
Note that due to the general nonlinear dependence of the uncertainty $w$ on the state and input, the tube size $s_{k|t}$ is not necessarily monotonically increasing over $k$, meaning that we could potentially have $s_{k+1|t}<s_{k|t}$.

\subsection{Theoretical analysis} 
\label{sec:theory}
\begin{figure}[tbp]
\begin{center}
\includegraphics[width=0.48\textwidth]{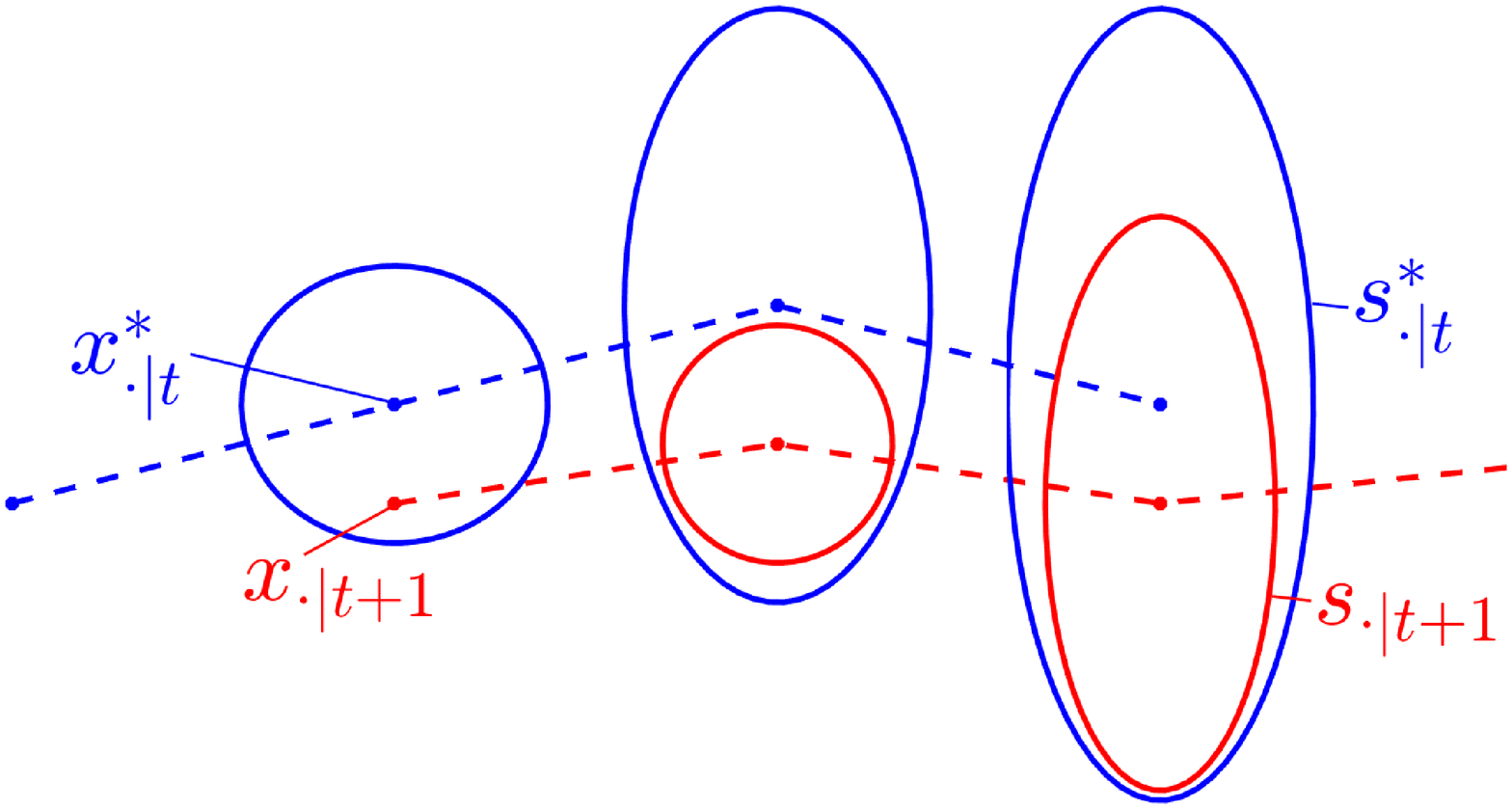}
\end{center}
\caption{Illustration: Optimal trajectory $x^*_{\cdot|t}$, candidate trajectory $x_{\cdot|t+1}$, and corresponding tube size $s^*_{\cdot|t}$ and $s_{\cdot|t+1}$, respectively.}
\label{fig:candidate}
\end{figure}
In the following, we detail the theoretical analysis and provide the technical conditions on the terminal ingredients (terminal cost $V_f$ and terminal set $\mathcal{X}_f$).  
The minimal bound on the uncertainty $\overline{w}_{\min}$ and the maximal\footnote{%
The bound $\overline{s}\leq \sqrt{\delta_{loc}}$ can be relaxed by introducing two constants $\delta_1,\delta_2>0$, such that the contraction~\eqref{eq:contract} holds for all $V_{\delta}(x,z,v)\leq \delta_1^2$, while Assumption~\ref{ass:approx} holds for all $c_2\leq c_1\leq \delta_2$, $c_1-c_2\leq \delta_1$. In this case it suffices that $\overline{s}\leq \delta_2$ and $\overline{w}\leq \delta_1$. 
} tube size $\overline{s}$ are 
\begin{align}
\label{eq:w_min_bound_bar_s}
\overline{w}_{\min}&: = \inf_{(x, u) \in \mathcal{Z}} \tilde{w}_{\delta}(x,u,0),\quad 
\overline{s} :=\sqrt{\delta_{loc}}.  
\end{align}
\subsubsection*{Terminal ingredients}
\label{sec:term}
The following assumption captures the desired properties of the terminal ingredients. 
\begin{assumption}
\label{ass:term_2} 
There exist a terminal controller $k_f:\mathbb{R}^n\rightarrow\mathbb{R}^m$, a terminal cost function $V_f:\mathbb{R}^n\rightarrow \mathbb{R}_{\geq 0}$, a terminal set $\mathcal{X}_f\subset\mathbb{R}^{n+1}$, and a constant $\overline{w} \in \mathbb{R}_{\geq 0}$ such that the following properties hold for all $(x,s)\in\mathcal{X}_f$, all $w\in[\overline{w}_{\min},\overline{w}]$, all $s^+\in[0, \rho s-\rho^N w+\tilde{w}_{\delta}(x,k_f(x), s)]$, and all $d_w\in\mathbb{R}^n$, such that $V_{\delta}(x^+ + d_w,x^+,k_f(x^+))\leq \rho^{2N}w^2$ with $x^+=f(x,k_f(x))$:
\begin{subequations}
\label{eq:term_dist2}
\begin{align}
\label{eq:term_dist_dec2}
V_f(x^+)\leq& V_f(x)-\ell(x,k_f(x)),\\
\label{eq:term_dist_RPI2}
(x^+ + d_w,s^+)\in&\mathcal{X}_f, \\
\label{eq:w_f}
\tilde{w}_{\delta}(x,k_f(x), s)\leq & \overline{w},\\ 
\label{eq:term_dist_con2}
g_j(x,k_f(x))+c_j {s}\leq & 0,\quad j=1,\dots p,\\
\label{eq:term_s}
s\leq& \overline{s}.
\end{align}
\end{subequations}
Furthermore, the terminal cost $V_f$ is continuous on the compact set $\mathcal{X}_{f,x}:=\{x|~\exists s\in[0,\overline{s}],~(x,s)\in\mathcal{X}_f\}$, i.e., there exists a function $\alpha_f\in\mathcal{K}_{\infty}$ such that  
\begin{align}
\label{eq:term_contin2}
V_f(z)\leq V_f(x)+\alpha_f(\|x-z\|),~\forall x,z\in\mathcal{X}_{f,x}.
\end{align}
\end{assumption}
This assumption requires a suitable terminal cost $V_f$, a controller $k_f$, a set $\mathcal{X}_f$ and a scalar $\overline{w}$. 
The conditions on $V_f,~k_f,~\mathcal{X}_f$ are similar to standard conditions in (robust) MPC with a terminal set for the augmented state $(x,s)\in\mathbb{R}^{n+1}$, compare~\cite{chen1998quasi}.  
The terminal set needs to satisfy the tightened state and input constraints~\eqref{eq:term_dist_con2}. 
In addition, a robust positive invariance condition of the terminal set needs to be verified~\eqref{eq:term_dist_RPI2}. 
Details on the constructive satisfaction of Assumption~\ref{ass:term_2} based on standard terminal ingredients can be found in Section~\ref{sec:description_details},  Prop.~\ref{prop:term}.

The following theorem establishes the closed-loop properties of the proposed nonlinear robust MPC scheme. 
\begin{theorem}
\label{thm:main}
Let Assumptions~\ref{ass:hat_w}--\ref{ass:term_2} hold, and suppose that~\eqref{eq:MPC_real} is feasible at $t=0$. 
Then~\eqref{eq:MPC_real} is recursively feasible, the constraints~\eqref{eq:constraint} are satisfied and the origin is practically asymptotically stable for the resulting closed-loop system~\eqref{eq:close}.
\end{theorem}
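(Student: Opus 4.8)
The plan is to prove the three claims in the standard order for tube-based robust MPC: first recursive feasibility via an explicitly constructed candidate solution, then constraint satisfaction as a direct consequence, and finally practical asymptotic stability using the optimal value function $V_N$ as a Lyapunov function. The conceptual cornerstone, which I would establish first, is a \emph{tube-inclusion property}: since $s_{0|t}=0$ and $x_{0|t}=x_t$, the triple $(x_t,x_t,u_t)$ has $V_\delta=0$, so $\kappa(x_t,x_t,u_t)=u_t$ by~\eqref{eq:k_max}, and Assumption~\ref{ass:approx}, eq.~\eqref{eq:w_delta_tilde_1}, applied with $c=0$ yields $V_\delta(x_{t+1},x^*_{1|t},u^*_{1|t})\le \tilde w_\delta^2(x_t,u_t,0)\le (w^*_{0|t})^2=(s^*_{1|t})^2$. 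Thus the true successor state lies in the sublevel set of $V_\delta$ indexed by the predicted tube size $s^*_{1|t}$.

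For recursive feasibility I would build the candidate at $t+1$ by tracking the shifted optimal trajectory with the incremental feedback: $x_{0|t+1}=x_{t+1}$, $u_{k|t+1}=\kappa(x_{k|t+1},x^*_{k+1|t},u^*_{k+1|t})$ for $k=0,\dots,N-2$, and a terminal step using $k_f$ (formally tracking $f(x^*_{N|t},k_f(x^*_{N|t}))$). Iterating the contraction~\eqref{eq:contract} from the tube-inclusion base gives $e_k:=\sqrt{V_\delta(x_{k|t+1},x^*_{k+1|t},u^*_{k+1|t})}\le \rho^k w^*_{0|t}$, with all arguments staying below $\delta_{\mathrm{loc}}$ since $w^*_{0|t}=s^*_{1|t}\le\overline s=\sqrt{\delta_{\mathrm{loc}}}$. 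The monotonicity property~\eqref{eq:w_delta_tilde}, whose hypothesis $e_k\le s^*_{k+1|t}-s_{k|t+1}$ I verify by a joint induction, then forces $w_{k|t+1}\le \tilde w_\delta(x^*_{k+1|t},u^*_{k+1|t},s^*_{k+1|t})\le w^*_{k+1|t}$; unrolling the tube recursion~\eqref{eq:tube_definition} yields the central margin estimate $s_{k|t+1}\le s^*_{k+1|t}-\rho^k w^*_{0|t}$. Combined with $e_k\le\rho^k w^*_{0|t}$ and the bound~\eqref{eq:lipschitz} of Proposition~\ref{prop:contract}, this gives $g_j(x_{k|t+1},u_{k|t+1})+c_j s_{k|t+1}\le g_j(x^*_{k+1|t},u^*_{k+1|t})+c_j s^*_{k+1|t}\le 0$, so the tightened constraints~\eqref{eq:real_constraints} hold for the candidate. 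At $k=N$ the margin estimate matches exactly the admissible interval $s^+\le \rho s-\rho^N w+\tilde w_\delta(x,k_f(x),s)$ with $s=s^*_{N|t}$, $w=w^*_{0|t}$, and together with $V_\delta(x_{N|t+1},f(x^*_{N|t},k_f(x^*_{N|t})),\cdot)\le \rho^{2N}(w^*_{0|t})^2$, Assumption~\ref{ass:term_2} via~\eqref{eq:term_dist_RPI2} places the candidate terminal pair in $\mathcal X_f$.

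Constraint satisfaction is then immediate: evaluating~\eqref{eq:real_constraints} of the (feasible) problem at $k=0$ with $s_{0|t}=0$ gives $g_j(x_t,u_t)\le 0$, i.e.\ $(x_t,u_t)\in\mathcal Z$ for all $t$. For stability I would use $V_N$, bounding it below by $\ell(x_t,u_t)\ge\alpha_\ell(\|x_t\|)$ and locally above via continuity of $V_f$. Evaluating the candidate cost and telescoping with the stage-cost bound~\eqref{eq:stage_cost_bound}, the terminal decrease~\eqref{eq:term_dist_dec2}, and the $V_f$-continuity~\eqref{eq:term_contin2} produces $V_N(x_{t+1})\le V_N(x_t)-\ell(x_t,u_t)+\sum_{k=0}^{N-1}\alpha_u(\rho^k w^*_{0|t})+\alpha_f(\rho^N w^*_{0|t}/\sqrt{c_{\delta,l}})$, where the summable-tail condition on $\alpha_{c,\rho}$ in Assumption~\ref{ass:stage_cost} makes the perturbation term a $\mathcal K_\infty$ function of the bounded quantity $w^*_{0|t}\in[\overline w_{\min},\overline w]$. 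This is an ISS-type Lyapunov decrease with a nonvanishing residual (reflecting the irreducible uncertainty $\overline w_{\min}>0$), from which practical asymptotic stability of the origin follows by standard arguments.

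The main obstacle I anticipate is the bookkeeping that ties the three ingredients together at the terminal step: the contraction rate $\rho$, the tube recursion, and the monotonicity of $\tilde w_\delta$ must be reconciled so that the candidate tube size obeys $s_{k|t+1}\le s^*_{k+1|t}-\rho^k w^*_{0|t}$ exactly, since this single margin simultaneously absorbs the tracking error $e_k$ in the tightened constraints and collapses onto the terminal interval's $-\rho^N w$ term. Verifying the monotonicity hypothesis $V_\delta(x_{k|t+1},x^*_{k+1|t},u^*_{k+1|t})\le (s^*_{k+1|t}-s_{k|t+1})^2$ along the joint induction, while ensuring every sublevel argument remains within $\delta_{\mathrm{loc}}$ so that~\eqref{eq:contract} and Assumption~\ref{ass:approx} stay applicable, is the delicate step; the remaining telescoping is routine.
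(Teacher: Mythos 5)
Your proposal is correct and follows essentially the same route as the paper's proof: the same $\kappa$-tracking candidate solution, the same base case via \eqref{eq:w_delta_tilde_1} with $c=0$, the same joint induction yielding $w_{k|t+1}\le w^*_{k+1|t}$ and the margin $s_{k|t+1}\le s^*_{k+1|t}-\rho^k w^*_{0|t}$, the same terminal argument via \eqref{eq:term_dist_RPI2}, and the same value-function decrease for practical stability. The only bookkeeping you leave implicit is verifying \eqref{eq:tube_bound_MPC} for the candidate at $k=N-1$, which follows immediately from your derived inequalities $s_{N-1|t+1}\le s^*_{N|t}$, $w_{N-1|t+1}\le w^*_{N|t}$ together with \eqref{eq:term_s} and \eqref{eq:w_f}.
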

\begin{proof}
The basic idea is to use the control law $\kappa$ from Assumption~\ref{ass:contract} to stabilize the previous optimal solution of~\eqref{eq:MPC_real} and thus bound the cost increase and ensure robust recursive feasibility. This is illustrated in Figure~\ref{fig:candidate}, where we see that the tube around the candidate solution $x_{\cdot|t+1}$ is contained inside the tube around the previous optimal solution $x^*_{\cdot | t}$.
We first construct the candidate solution and establish bounds on the size of the tube $s$ and the disturbance $w$.
Then we show that the candidate solution satisfies the tightened state and input constraints~\eqref{eq:real_constraints} and the posed constraints on the disturbance bound $w$ and the tube size $s$~\eqref{eq:tube_bound_MPC}.  
Constraint satisfaction, i.e. $(x_t,u_t)\in\mathcal{Z}$  $\forall t\geq 0$, follows directly from~\eqref{eq:real_constraints} with $k=0$.  
Then, we establish recursive feasibility of the posed terminal set constraint $\mathcal{X}_f$~\eqref{eq:terminal_region}.  
Finally, we establish practical asymptotic stability.\\
\textbf{Part I.} Candidate solution: 
For convenience, define
\begin{align*}
u^*_{N|t}=&k_f(x^*_{N|t}),~u^*_{N+1|t}=k_f(x^*_{N+1|t}),\\
x^*_{N+1|t}=&f(x^*_{N|t},u^*_{N|t}),~
w^*_{N|t} = \tilde{w}_\delta(x^*_{N|t}, u^*_{N|t}, s^*_{N|t}).  
\end{align*}
Consider the candidate solution
\begin{align}
\label{eq:robust_candidate}
 x_{0|t+1}=&x_{t+1}=f_w(x^*_{0|t}, u^*_{0|t}, d_t),\\
u_{k|t+1}=&\kappa(x_{k|t+1},x^*_{k+1|t},u^*_{k+1|t}),\nonumber\\
x_{k+1|t+1}=&f(x_{k|t+1},u_{k|t+1}),\nonumber\\
s_{0|t+1}=&0,\nonumber\\
s_{k+1|t+1}=&\rho s_{k|t+1}+{w}_{k|t+1},\nonumber\\
{w}_{k|t+1}=&\tilde{w}_{\delta}(x_{k|t+1},u_{k|t+1},s_{k|t+1}),\nonumber
\end{align}
with $k=0,\dots,N-1$. 
Assumption~\ref{ass:approx},  \eqref{eq:w_delta_tilde_1} with $c=0$ yields 
\begin{align*}
&\sqrt{V_{\delta}(x_{t+1},x^*_{1|t},u^*_{1|t})}\\
\stackrel{\eqref{eq:w_delta_tilde_1}}{\leq}& \tilde{w}_{\delta}(x^*_{0|t},u^*_{0|t},0)\leq w^*_{0|t}=s^*_{1|t}\stackrel{\eqref{eq:tube_bound_MPC}}{\leq} \overline{s}\stackrel{\eqref{eq:w_min_bound_bar_s}}{\leq}\sqrt{\delta}_{loc}. 
\end{align*}
Using the contractivity~\eqref{eq:contract} (Ass.~\ref{ass:contract}) recursively, we get  
\begin{align}
\label{eq:prop_robust_1_2}
&V_{\delta}(x_{k|t+1},x^*_{k+1|t},u^*_{k+1|t})\\
&\leq \rho^{2k} [{w}^*_{0|t}]^2\leq \delta_{loc},\quad k=0,\dots,N.\nonumber 
\end{align}
\textbf{Part II.} Tube dynamics: In the following we show that the following inequalities hold for $k=0,\dots,N-1$ by induction:
\begin{align}
\label{eq:tube_shrinking}
s_{k|t+1}\leq & {s}^*_{k+1|t}-\rho^{k}{w}^*_{0|t},\\
\label{eq:disturbance_shrinking}
{w}_{k|t+1}\leq & {w}^*_{k+1|t}.
\end{align}
We first show that \eqref{eq:tube_shrinking} and \eqref{eq:disturbance_shrinking} hold for $k=0$, and then we proceed by showing that they also hold for $k+1$. 
\begin{enumerate}
\item Induction start: $k=0$: 
Inequality~\eqref{eq:tube_shrinking} is satisfied with 
\begin{align*}
s_{0|t+1}\stackrel{\eqref{eq:tube_definition_init}}{=}0
\stackrel{\eqref{eq:tube_definition}}{=}&s^*_{1|t}- w^*_{0|t}.
\end{align*}
Consider $c_2 = s_{0|t+1} = 0\leq c_1 = {s}^*_{1|t} = {w}^*_{0|t}\leq \overline{s}\leq \sqrt{\delta_{loc}}$,  and
\begin{align*}
&\sqrt{V_{\delta}(x_{0|t+1},x^*_{1|t},u^*_{1|t})} \leq   {w}^*_{0|t}  = c_1-c_2.
\end{align*}
Thus, using Assumption~\ref{ass:approx}, \eqref{eq:w_delta_tilde} implies that the disturbance ${w}$ satisfies  
\begin{align*}
{w}_{0|t+1}= & \tilde{w}_{\delta}(x_{0|t+1}, u_{0|t+1}, \underbrace{s_{0|t+1}}_{=0})\\
\stackrel{\eqref{eq:w_delta_tilde}}{\leq}  & \tilde{w}_{\delta}(x^*_{1|t}, u^*_{1|t}, {s^*_{1|t}})\stackrel{\eqref{eq:w_definition}}{\leq} {w}^*_{1|t}.
\end{align*}
\item Induction step $k+1$: 
Suppose that \eqref{eq:tube_shrinking} and \eqref{eq:disturbance_shrinking} hold for some $k\geq 0$. 
Based on~\eqref{eq:tube_definition}, the tube size $s$ satisfies
\begin{align*}
&s_{k+1|t+1}\stackrel{\eqref{eq:tube_definition}}{=}\rho s_{k|t+1}+{w}_{k|t+1}
\stackrel{\eqref{eq:disturbance_shrinking}}{\leq}\rho s_{k|t+1}+{w}^*_{k+1|t}\\
\stackrel{\eqref{eq:tube_shrinking}}{\leq}& \rho s^*_{k+1|t}+w^*_{k+1|t}-\rho^{k+1}w^*_{0|t}
\stackrel{\eqref{eq:tube_definition}}{=}s^*_{k+2|t}-\rho^{k+1}w^*_{0|t}. 
\end{align*}
Consider $c_2 = s_{k+1|t+1} \stackrel{\eqref{eq:tube_shrinking} }{\leq} c_1 = {s}^*_{k+2|t} \stackrel{\eqref{eq:tube_bound_MPC}}{\leq}\overline{s}\stackrel{\eqref{eq:w_min_bound_bar_s}}{\leq}  \sqrt{\delta_{loc}}$. 
Thus, by using \eqref{eq:w_delta_tilde}, the disturbance $w$ satisfies ${w}_{k+1|t+1}\leq {w}^*_{k+2|t}$.
\end{enumerate}
\textbf{Part III.} State and input constraint satisfaction: \\
For $k=0,\dots,N-2$, we have  
\begin{align*}
&g_j(x_{k|t+1},u_{k|t+1})+c_js_{k|t+1}\\
\stackrel{\eqref{eq:lipschitz}\eqref{eq:prop_robust_1_2}}\leq& g_j(x^*_{k+1|t},u^*_{k+1|t})+\rho^k c_j{w}^*_{0|t}+c_js_{k|t+1}\\
\stackrel{\eqref{eq:tube_shrinking}}{\leq}&g_j(x^*_{k+1|t},u^*_{k+1|t})+c_js^*_{k+1|t}
\stackrel{\eqref{eq:real_constraints}}\leq 0.
\end{align*}
The terminal condition~\eqref{eq:terminal_region} ensures constraint satisfaction for $k=N-1$ with  
\begin{align*}
&g_j(x_{N-1|t+1},u_{N-1|t+1})+c_js_{N-1|t+1}\\
&\stackrel{\eqref{eq:lipschitz}\eqref{eq:prop_robust_1_2}\eqref{eq:tube_shrinking}}\leq g_j(x^*_{N|t},u^*_{N|t})+c_js^*_{N|t}
\stackrel{\eqref{eq:term_dist_con2}}{\leq}  0.  
\end{align*}
\textbf{Part IV.} Tube bounds \eqref{eq:tube_bound_MPC}: 
Inequalities~\eqref{eq:tube_shrinking} and \eqref{eq:disturbance_shrinking} ensure that \eqref{eq:tube_bound_MPC} hold for $k=0,\dots ,N-2$.
For $k=N-1$,  $(x^*_{N|t},s^*_{N|t})\in\mathcal{X}_f$ implies
\begin{align}
&s_{N-1|t+1}\stackrel{\eqref{eq:tube_shrinking}}{\leq}  s^*_{N|t}\stackrel{\eqref{eq:term_s}}{\leq} \overline{s},\nonumber\\
&{w}_{N-1|t+1} \stackrel{\eqref{eq:disturbance_shrinking}} \leq {w}^*_{N|t} 
= \tilde{w}_\delta(x^*_{N|t}, u^*_{N|t}, s^*_{N|t})  
\label{eq:bound_w_N}
\stackrel{\eqref{eq:w_f}}\leq     \overline{w}. 
\end{align}
\textbf{Part V.} Terminal constraint satisfaction \eqref{eq:terminal_region}: 
The terminal state and terminal tube size satisfy
\begin{align*}
&\sqrt{V_{\delta}(x_{N|t+1},x^*_{N+1|t},u^*_{N+1|t})}
 \stackrel{\eqref{eq:prop_robust_1_2}}{\leq} \rho^{N} {w}^*_{0|t},\\
&s_{N|t+1} \stackrel{\eqref{eq:tube_definition}}= \rho s_{N-1|t+1} + {w}_{N-1|t+1} 
\stackrel{\eqref{eq:tube_shrinking}, \eqref{eq:bound_w_N}}\leq \rho s^*_{N|t} - \rho^N w^*_{0|t} + w^*_{N|t}. 
\end{align*}
Thus Ass. \ref{ass:term_2},~\eqref{eq:term_dist_RPI2} ensures $(x_{N|t+1},s_{N|t+1})\in\mathcal{X}_f$ with $d_w=x^*_{N+1|t}-x_{N|t+1}$ and $w^*_{0|t}\in[\overline{w}_{\min},\overline{w}]$.\\ 
\textbf{Part VI.} 
Practical stability: For $k=0,\dots, N-1$ we have:
\begin{align*}
&\ell(x_{k|t+1},u_{k|t+1})-\ell(x^*_{k+1|t},u^*_{k+1|t})\\
\stackrel{\eqref{eq:stage_cost_bound}\eqref{eq:prop_robust_1_2}}{\leq}& \alpha_u(\rho^{k} w^*_{0|t}) 
\leq \alpha_u(\rho^{k}\overline{w}).
\end{align*}
Similar to~\cite[Prop.~5]{kohler2018novel}, continuity and exponential summability of the stage cost $\ell$ (Ass.~\ref{ass:stage_cost}) in combination with the terminal cost (Ass.~\ref{ass:term_2}) implies 
\begin{align*}
&J_N(x_{t+1},u_{\cdot|t+1})-V_N(x_t)+\ell(x_t,u_t)\\
\leq& \alpha_f(\rho^N\overline{w}/\sqrt{c_{\delta,l}})+\sum_{k=0}^{N-1} \alpha_u(\rho^{k}\overline{w})\\
\leq&\alpha_f(\overline{w}/\sqrt{c_{\delta,l}})+\alpha_{c,\rho}\left(\overline{w}\sqrt{{1}/{c_{\delta,l}}+\kappa_{\max}}\right)=:\alpha_{w}(\overline{w}),
\end{align*}
with $\alpha_w\in\mathcal{K}_{\infty}$.
Feasibility and standard arguments (c.f.~\cite[Prop.~2.16]{rawlings2017model}) imply
\begin{align*}
\alpha_{\ell}(\|x_t\|)\leq V_N(x_t)\leq& \alpha_v(\|x_t\|),\quad \alpha_v\in\mathcal{K}_{\infty},\\
V_N(x_{t+1})-V_N(x_t)\leq& -\alpha_{\ell}(\|x_t\|)+\alpha_w(\overline{w}).
\end{align*}
 Thus, the closed-loop system is practically asymptotically stable, compare~\cite[Prop.~4.3]{faulwasser2018economic}.
\end{proof}
The online constructed tube size $s$~\eqref{eq:tube_definition} in combination with the constraint tightening~\eqref{eq:real_constraints} ensures that the optimization problem~\eqref{eq:MPC_real} is recursively feasible and the constraints~\eqref{eq:constraint} are satisfied. 
The region of attraction and the bound on the maximal uncertainty $\overline{w}$ (compare Proposition~\ref{prop:term} below)  improve if a larger prediction horizon $N$ is used. 
We expect that similar theoretical properties can also be guaranteed without terminal ingredients by using a sufficiently large prediction horizon $N$ and arguments from~\cite{grune2017nonlinear,kohler2018novel}.   

\subsection{Discussion}
\label{sec:discuss}
In the following, we discuss the proposed robust MPC framework in comparison to existing robust MPC approaches (Remark~\ref{remark:dynamic}--\ref{remark:lipschitz}) and elaborate on possible extensions and applications of the proposed robust MPC framework (Remark~\ref{remark:general_cost}--\ref{remark:output_hierarchic_distributed}).
\begin{remark}
\label{remark:dynamic}
Most of the existing (nonlinear) robust MPC schemes consider  constantly bounded additive disturbances~\cite{yu2013tube,bayer2013discrete,singh2017robust}, which are a special case of the considered framework with $\tilde{w}_\delta$ constant, compare Sec.~\ref{sec:additive} and \cite{kohler2018novel}. 
While the proposed framework can utilize state and input dependent bounds on the model mismatch, we cannot directly utilize dynamic bounds on the model mismatch. 
The case of dynamic uncertainty can, e.g., be handled by imposing additional compact constraints on the input of the uncertain system in order to obtain a constant (but possibly conservative) bound on the model mismatch and thus allow for the usage of standard robust MPC methods, compare~\cite{falugi2014getting}. 
One existing approach to consider non-constant bounds on the model mismatch are so called error bounding systems as, e.g., used for MPC with reduced order models~\cite{loehning2014model}. 
A more general nonlinear robust MPC framework that can adequately handle dynamic uncertainty is still an open problem. 
\end{remark}
\begin{remark}
\label{remark:comput}
A competing (possibly less conservative) method to deal with general nonlinear robust MPC is given in~\cite{villanueva2017robust}. 
This approach includes matrices $Q_x\in\mathbb{R}^{n\times n}$, $K\in\mathbb{R}^{m\times n}$ in the online optimization that characterize a time-varying ellipsoidal tube $V_{\delta,t}$  and linear time-varying feedback $\kappa_t$ (c.f. Remark~\ref{rk:stab}), compare also~\cite{hu2018real}. 
The resulting online computational demand is comparable to a nominal MPC scheme with $(n+n^2)/2$ additional state variables $Q_x$ and $m\cdot n$ additional input/decision variables $K$, which significantly increases the computational demand for high dimensional systems, compare~\cite{hu2018real}. 
In contrast, the computational demand of the proposed scheme is comparable to a nominal MPC scheme with $n+1$ states $(x,s)$ and $m+1$ input/decision variables $(u,w)$ and thus allows for an efficient implementation to higher dimensional systems, compare the numerical example in Section~\ref{sec:num}. 
Thus, the proposed method can be interpreted as a simpler (but also more conservative) alternative to~\cite{villanueva2017robust}. 

Furthermore, there exists a strong parallel between the tube construction in~\cite{villanueva2017robust} with the predicted matrix $Q_x$,  
and uncertainty propagation in stochastic MPC approaches~\cite{hewing2019cautious} with the covariance matrix~$\Sigma_x\in\mathbb{R}^{n\times n}$. 
Based on these similarities, we expect that it is possible to devise a stochastic version of the proposed method, as a computationally efficient counterpart to~\cite{hewing2019cautious} with probabilistic guarantees, compare also the recent extension of the proposed framework to chance constraints in~\cite{schluter2020constraint}. 
\end{remark} 
\begin{remark}
\label{remark:lipschitz}
The robust MPC method in~\cite{marruedo2002input} based on a Lipschitz bound $L$ has previously been extended in~\cite{pin2009robust} to state dependent uncertainties of the form $\hat{w}=a+b\|x\|$.  
In particular, the method and results in~\cite{pin2009robust} are contained as a special case in our framework with $V_{\delta}(x,z,v)=\|x-z\|^2$, $\kappa(x,z,v)=v$ and $\rho=L$.  
Correspondingly, the design procedure and online complexity are equivalent. 
The main difference is that we use the incremental stabilizability bound $\rho$ instead of the Lipschitz bound $L$ to construct the tube. 
Thus, the resulting framework is typically significantly less conservative, compare the discussion in~\cite{kohler2018novel}, the numerical comparison in~\cite{bonzanini2019tube} and the numerical example in Section~\ref{sec:num}.
\end{remark}
\begin{remark}
\label{remark:general_cost}
The presented approach only minimizes the nominal predicted cost $\ell(x,u)$ in~\eqref{eq:MPC_real}, instead of considering the uncertainty of the prediction in the minimized cost. 
A simple and intuitive way to account for the uncertainty is to consider the following worst case ($\min-\max$) stage cost
\begin{align*}
\ell_w(z,v,s):=\ell(z,v)+\alpha_u(s)&\stackrel{\eqref{eq:stage_cost_bound}}{\geq} \max_{x: V_{\delta}(x,z,v)\leq s^2} \ell(x,\kappa(x,z,v)).
\end{align*}
This modified stage cost incentivizes cautious operation. 
In case of additive disturbances (compare Section~\ref{sec:additive}), this modification does not change the optimal open-loop trajectory. 
In~\cite{villanueva2017robust,houska2016short} an alternative cost function based on the generalized inertia (which uses matrix variables) is considered. 

Such explicit considerations of the predicted uncertainty in the cost function become even more relevant in the context of economic MPC~\cite{faulwasser2018economic}. 
A more detailed discussion on such modified stage costs based on min-max, stochastic and average cost can be found in~\cite{bayer2014tube,bayer2016robust,bayer2018optimal}. 
The stability analysis with the modified cost $\ell_w$ and the extension to such economic consideration are, however, beyond the scope of this paper. 
\end{remark}
\begin{remark}
\label{remark:output_hierarchic_distributed}
The main focus of the presented nonlinear robust MPC framework is to ensure constraint satisfaction and recursive feasibility despite disturbances/uncertainty. 
We expect that the presented ideas for robust MPC can also be used to extend existing output feedback MPC schemes~\cite{brunner2018enhancing,Kohler2019Output} and distributed/hierarchical MPC schemes~\cite{farina2012distributed} to nonlinear systems subject to state and input dependent uncertainty, which is part of future work.   
\end{remark}
%

\subsection{Offline and Online Implementation}
\label{sec:description_details}
In this section, we discuss in detail how to construct the function $\tilde{w}_\delta$ (Ass.~\ref{ass:approx}), the terminal ingredients (Ass. \ref{ass:term_2}), and summarize the overall algorithm.
\subsubsection{Nonlinear uncertainty bound}
\label{sec:non_dist_bound}
In the following we discuss some approaches to define the function $\tilde{w}_\delta$ (Ass. \ref{ass:approx}) for practical applications.
An analytical description of $\mathcal{W}$ can often be obtained based on prior knowledge of parametric uncertainty and bounds on the additive disturbance, compare the numerical examples in Sec. \ref{sec:num}. 
Alternatively, $\mathcal{W}$ can be (directly) estimated using experiments.
A general function $\tilde{w}_{\delta}$ that satisfies the posed conditions (Ass.~\ref{ass:approx}) can be computed using an appropriate parametrization and the following conceptual optimization problem
\begin{align*}
\min_{\tilde{w}_\delta}&  \int_{c=0}^{\overline{s}} \int_{r=(x,u) \in \mathcal{Z}} \tilde{w}_\delta(x,u,c)~ dr~ dc\\
\text{s.t. }&  \eqref{eq:w_delta_tilde_1},~ \eqref{eq:w_delta_tilde} \text{ hold }\forall ~(x, z,v) \in \mathbb{R}^n\times \mathcal{Z}, ~ V_\delta(x,z,v) \leq \overline{s}^2.
\end{align*}
The following proposition provides a simple way to design a function~$\tilde{w}_{\delta}$, based on continuity conditions. 
\begin{proposition}
\label{prop:w_tilde}
Let Assumptions~\ref{ass:hat_w} and~\ref{ass:contract} hold. 
There exists a function $\tilde{w}:\mathcal{Z}\rightarrow\mathbb{R}_{\geq 0}$, such that for all $(z,v)\in\mathcal{Z}$, $(z^+,v^+)\in\mathcal{Z}$, $(x,u)\in\mathbb{R}^{n+m}$ with $z^+=f(z,v)$, $d_w\in\mathcal{W}(x,u)$, the following bound is satisfied
\begin{subequations}
\begin{align}
\label{eq:w_tilde_ass}
V_{\delta}(z^+ + d_w,z^+,v^+)\leq \tilde{w}^2(x,u).
\end{align}
Assume that the following continuity condition holds for all $(x,z,v)\in\mathbb{R}^n\times\mathcal{Z}$ with $V_{\delta}(x,z,v)\leq c^2$ and any $c\in[0,\sqrt{\delta}_{loc}]$ 
\begin{align}
\label{eq:w_cont_set}
\tilde{w}(x,\kappa(x,z,v))-\tilde{w}(z,v)\leq & \tilde{\alpha}_w(c),
\end{align}
\end{subequations}
with a superadditive\footnote{%
Any function $\tilde{\alpha}_w(c)=\sum_{j=1}^{\infty} a_j c^j$ with $a_j\geq 0$ is superadditive, e.g. $\tilde{\alpha}_w$ linear or quadratic. 
} 
function $\tilde{\alpha}_w\in\mathcal{K}_{\infty}$.
Then the function
\begin{align*}
\tilde{w}_{\delta}(z,v,c):=\tilde{w}(z,v)+\tilde{\alpha}_w(c),
\end{align*}
satisfies the properties in Assumption~\ref{ass:approx}. 
\end{proposition}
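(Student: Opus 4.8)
The plan is to treat the two halves of the statement separately: first establish the existence of a pointwise uncertainty bound $\tilde{w}$ satisfying~\eqref{eq:w_tilde_ass}, and then verify — under the additional continuity hypothesis~\eqref{eq:w_cont_set} — that the proposed candidate $\tilde{w}_{\delta}(z,v,c) = \tilde{w}(z,v) + \tilde{\alpha}_w(c)$ meets the two requirements~\eqref{eq:w_delta_tilde_1} and~\eqref{eq:w_delta_tilde} of Assumption~\ref{ass:approx}.

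For existence, I would define $\tilde{w}$ directly as the worst case,
\[
\tilde{w}(x,u) := \sup_{\substack{d_w \in \mathcal{W}(x,u),\\ (z^+,v^+)\in\mathcal{Z}}} \sqrt{V_{\delta}(z^+ + d_w,\, z^+,\, v^+)},
\]
which by construction dominates the left-hand side of~\eqref{eq:w_tilde_ass} for every admissible triple, since any $z^+ = f(z,v)$ with $(z^+,v^+)\in\mathcal{Z}$ lies among the points over which the supremum is taken. The key point is that this supremum is finite: $\mathcal{W}(x,u)$ is compact by Assumption~\ref{ass:hat_w}, $\mathcal{Z}$ is compact, and $V_{\delta}$ enjoys enough regularity (continuity in its first argument, cf. Assumption~\ref{ass:contract}), so the objective is bounded on the relevant compact set. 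I expect this finiteness/regularity bookkeeping — in particular obtaining a bound uniform over $(z^+,v^+)\in\mathcal{Z}$ rather than merely pointwise in $d_w$ — to be the only genuinely delicate part of the proof; everything that follows is a short chain of inequalities.

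The verification of~\eqref{eq:w_delta_tilde_1} is then immediate. For any $(x,z,v)$ with $V_{\delta}(x,z,v)\le c^2$ and any $d_w\in\mathcal{W}(x,\kappa(x,z,v))$, applying~\eqref{eq:w_tilde_ass} at the point $(x,\kappa(x,z,v))$ gives $V_{\delta}(z^+ + d_w,z^+,v^+)\le \tilde{w}^2(x,\kappa(x,z,v))$, while the continuity bound~\eqref{eq:w_cont_set} upgrades this to $\tilde{w}(x,\kappa(x,z,v))\le \tilde{w}(z,v)+\tilde{\alpha}_w(c)$. Since both sides are nonnegative, squaring preserves the inequality and yields $V_{\delta}(z^+ + d_w,z^+,v^+)\le (\tilde{w}(z,v)+\tilde{\alpha}_w(c))^2 = \tilde{w}_{\delta}^2(z,v,c)$, which is exactly~\eqref{eq:w_delta_tilde_1}.

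Finally, for the monotonicity property~\eqref{eq:w_delta_tilde}, I would take $(x,z,v)$ with $V_{\delta}(x,z,v)\le (c_1-c_2)^2$ and apply~\eqref{eq:w_cont_set} with argument $c = c_1-c_2\in[0,\sqrt{\delta_{loc}}]$ to obtain $\tilde{w}(x,\kappa(x,z,v))\le \tilde{w}(z,v)+\tilde{\alpha}_w(c_1-c_2)$. Adding $\tilde{\alpha}_w(c_2)$ to both sides and invoking superadditivity of $\tilde{\alpha}_w$, namely $\tilde{\alpha}_w(c_1-c_2)+\tilde{\alpha}_w(c_2)\le \tilde{\alpha}_w(c_1)$, collapses the right-hand side to $\tilde{w}(z,v)+\tilde{\alpha}_w(c_1) = \tilde{w}_{\delta}(z,v,c_1)$, while the left-hand side equals $\tilde{w}_{\delta}(x,\kappa(x,z,v),c_2)$. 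This establishes~\eqref{eq:w_delta_tilde}, and together with the previous step shows that $\tilde{w}_{\delta}$ satisfies Assumption~\ref{ass:approx}.
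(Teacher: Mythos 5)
Your verification half coincides with the paper's proof: both properties \eqref{eq:w_delta_tilde_1} and \eqref{eq:w_delta_tilde} of Assumption~\ref{ass:approx} are obtained by chaining \eqref{eq:w_tilde_ass} with \eqref{eq:w_cont_set} and then invoking superadditivity of $\tilde{\alpha}_w$, essentially line for line as in the paper. The divergence is in the existence step, and there your argument has a soft spot. The paper does not define $\tilde{w}$ as a worst-case supremum of $\sqrt{V_\delta}$; it takes the explicit choice $\tilde{w}(z,v)=\sqrt{c_{\delta,u}}\,\hat{w}(z,v)$ with $\hat{w}(z,v):=\max_{d_w\in\mathcal{W}(z,v)}\|d_w\|$, and obtains \eqref{eq:w_tilde_ass} in one line from the upper bound in \eqref{eq:bound}, since $V_{\delta}(z^++d_w,z^+,v^+)\leq c_{\delta,u}\|d_w\|^2$. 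This settles finiteness immediately from compactness of $\mathcal{W}(x,u)$. Your justification for finiteness of the supremum --- continuity of $V_\delta$ in its first argument plus compactness of $\mathcal{W}(x,u)$ and $\mathcal{Z}$ --- is not sufficient as stated: Assumption~\ref{ass:contract} asserts no continuity or boundedness of $V_\delta$ with respect to $(z,v)$, and the quadratic bounds \eqref{eq:bound} are only stipulated on the sublevel set $V_{\delta}\leq\delta_{loc}$, so a function continuous in its first argument alone can still be unbounded as $(z^+,v^+)$ sweeps over $\mathcal{Z}$ with the first argument confined to a compact set. The repair is precisely to invoke the upper bound of \eqref{eq:bound} at the perturbed points $z^++d_w$ (which is how the paper implicitly reads that assumption); but once you do so, your supremum is dominated by $\sqrt{c_{\delta,u}}\max_{d_w\in\mathcal{W}(x,u)}\|d_w\|$ and you may as well adopt that explicit function. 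A secondary difference, of purpose rather than correctness: the paper's $\tilde{w}$ is computable from the scalar constant $c_{\delta,u}$ and the disturbance description alone, consistent with the framework's goal of never evaluating $V_\delta$, whereas your sup-based $\tilde{w}$ is the least conservative admissible choice but requires explicit knowledge of $V_\delta$ to evaluate. In short: the structure and both verification steps are right, but the existence argument should be closed via \eqref{eq:bound} rather than via a compactness/continuity appeal.
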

\begin{proof}
Inequality~\eqref{eq:bound} ensures that condition~\eqref{eq:w_tilde_ass} is satisfied by $\tilde{w}(z,v)=\sqrt{c_{\delta,u}}\hat{w}(z,v)$, with $\hat{w}(z,v):=\max_{d_w\in\mathcal{W}(z,v)}\|d_w\|$.
Given $V_{\delta}(x,z,v)\leq c^2$ and $d_w\in\mathcal{W}(x,\kappa(x,z,v))$, we have 
\begin{align*}
\sqrt{V_{\delta}(z^+ + d_w,z^+,v^+)}\stackrel{\eqref{eq:w_tilde_ass}}{\leq}& \tilde{w}(x,\kappa(x,z,v))\\
\stackrel{\eqref{eq:w_cont_set}}{\leq}& \tilde{w}(z,v)+\tilde{\alpha}_w(c)
=\tilde{w}_{\delta}(z,v,c),
\end{align*}
which implies~\eqref{eq:w_delta_tilde_1}.
Similarly, for any $0\leq c_2\leq c_1\leq \sqrt{\delta}_{loc}$, $x\in\mathbb{R}^n$, with $V_{\delta}(x,z,v)\leq (c_1-c_2)^2$ we have
\begin{align*}
&\tilde{w}_{\delta}(x,\kappa(x,z,v),c_2)=\tilde{w}(x,\kappa(x,z,v))+\tilde{\alpha}_w(c_2)\\
\stackrel{\eqref{eq:w_cont_set}}{\leq}& \tilde{w}(z,v)+\tilde{\alpha}_w(c_1-c_2)+\tilde{\alpha}_w(c_2)
{\leq} \tilde{w}(z,v)+\tilde{\alpha}_w(c_1),  
\end{align*}
which implies~\eqref{eq:w_delta_tilde}.  
\end{proof}
\begin{corollary}
\label{corol:w_tilde}
Let Assumptions~\ref{ass:hat_w} and~\ref{ass:contract} hold. Suppose there exists a Lipschitz continuous function $\hat{w}(x,u)$ satisfying
$\hat{w}(x,u)\geq \|d_w(x,u,d)\|$ for all $(x,u,d)\in\mathcal{Z}\times\mathbb{D}$. 
Then 
\begin{align*}
\tilde{w}_{\delta}(z,v,c):=\sqrt{c_{\delta,u}}\hat{w}(z,v)+ c  {L}\sqrt{c_{\delta,u}}\sqrt{1/c_{\delta,l}+\kappa_{\max}} ,
\end{align*}
satisfies Assumption~\ref{ass:approx}, where $L$ is the Lipschitz constant of $\hat{w}$.  
\end{corollary}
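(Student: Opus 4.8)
The plan is to recognize that this corollary is simply the specialization of Proposition~\ref{prop:w_tilde} to the case where the continuity term $\tilde{\alpha}_w$ is linear, so I would reduce everything to verifying the two hypotheses of that proposition for an explicit choice of $\tilde{w}$ and $\tilde{\alpha}_w$.

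First I would take $\tilde{w}(z,v):=\sqrt{c_{\delta,u}}\,\hat{w}(z,v)$, exactly as in the proof of Proposition~\ref{prop:w_tilde}. Condition~\eqref{eq:w_tilde_ass} then follows from the upper bound in~\eqref{eq:bound}: for any $d_w\in\mathcal{W}(x,u)$ we have $V_{\delta}(z^++d_w,z^+,v^+)\leq c_{\delta,u}\|d_w\|^2\leq c_{\delta,u}\hat{w}(x,u)^2$, where the last step uses the standing hypothesis $\hat{w}(x,u)\geq\|d_w(x,u,d)\|$. Taking square roots gives~\eqref{eq:w_tilde_ass}.

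Second --- and this is the only genuine computation --- I would verify the continuity condition~\eqref{eq:w_cont_set}. Using Lipschitz continuity of $\hat{w}$ with constant $L$,
\begin{align*}
\tilde{w}(x,\kappa(x,z,v))-\tilde{w}(z,v)&=\sqrt{c_{\delta,u}}\bigl(\hat{w}(x,\kappa(x,z,v))-\hat{w}(z,v)\bigr)\\
&\leq \sqrt{c_{\delta,u}}\,L\sqrt{\|x-z\|^2+\|\kappa(x,z,v)-v\|^2}.
\end{align*}
The term under the root is bounded by $(1/c_{\delta,l}+\kappa_{\max})c^2$ via inequality~\eqref{eq:triangle_in} (which holds precisely when $V_{\delta}(x,z,v)\leq c^2$, $c\in[0,\sqrt{\delta_{loc}}]$, i.e. exactly the hypothesis of~\eqref{eq:w_cont_set}). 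Hence the right-hand side is bounded by $\sqrt{c_{\delta,u}}\,L c\sqrt{1/c_{\delta,l}+\kappa_{\max}}$.

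Third, I would set $\tilde{\alpha}_w(c):=\sqrt{c_{\delta,u}}\,L\sqrt{1/c_{\delta,l}+\kappa_{\max}}\,c$, which is linear and therefore superadditive and in $\mathcal{K}_{\infty}$, so the hypotheses of Proposition~\ref{prop:w_tilde} are met. Applying that proposition yields that $\tilde{w}_{\delta}(z,v,c)=\tilde{w}(z,v)+\tilde{\alpha}_w(c)$ satisfies Assumption~\ref{ass:approx}, and substituting the two definitions recovers exactly the stated formula. The only thing to check with any care is that the level-set condition under which~\eqref{eq:triangle_in} was derived coincides with the domain on which~\eqref{eq:w_cont_set} must hold; this is pure bookkeeping and poses no real obstacle, so the result is essentially immediate from Proposition~\ref{prop:w_tilde}.
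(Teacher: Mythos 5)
Your proposal is correct and is exactly the paper's argument: the paper's proof of Corollary~\ref{corol:w_tilde} is the one-line remark that the statement follows from Proposition~\ref{prop:w_tilde} using~\eqref{eq:triangle_in}, and your three steps (taking $\tilde{w}=\sqrt{c_{\delta,u}}\hat{w}$, bounding the increment via Lipschitz continuity and~\eqref{eq:triangle_in}, and choosing the linear, hence superadditive, $\tilde{\alpha}_w$) are precisely the details that remark compresses.
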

\begin{proof}
The statement follows from Prop.~\ref{prop:w_tilde} using \eqref{eq:triangle_in}.  
\end{proof}
For the special case of quadratic incremental Lyapunov functions, the following proposition provides a direct method to compute $\tilde{w}_{\delta}$.  
\begin{proposition}
\label{prop:w_tilde_2}
Let Assumption~\ref{ass:hat_w} hold. 
Suppose Assumption~\ref{ass:contract} holds with\footnote{%
The proof equally applies to polytopic functions of the form $\sqrt{V_{\delta}(x,z,v)}=\max_i P_i(x-z)$, as it mainly hinges on using the triangular inequality, compare norm-like inequality in~\cite[Ass.~1]{nubert2020safe}. 
} $V_{\delta}(x,z,v)=\|x-z\|_P^2$ and $\kappa(x,z,v)=\kappa_x(x)-\kappa_x(z)+v$.  
Then Assumption~\ref{ass:approx} holds with
\begin{align}
\label{eq:w_delta_sup}
\tilde{w}_{\delta}(z,v,c):=\sup_{\{x|~\|x-z\|_P\leq c\}}~\sup_{d_w\in\mathcal{W}(x,\kappa(x,z,v))} \|d_w\|_P.
\end{align}
\end{proposition}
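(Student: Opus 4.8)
The plan is to verify the two defining properties of Assumption~\ref{ass:approx}, namely~\eqref{eq:w_delta_tilde_1} and the monotonicity~\eqref{eq:w_delta_tilde}, directly from the candidate definition~\eqref{eq:w_delta_sup}. The key structural fact I would exploit is that under $V_{\delta}(x,z,v)=\|x-z\|_P^2$ and $\kappa(x,z,v)=\kappa_x(x)-\kappa_x(z)+v$, the contraction~\eqref{eq:contract} together with the explicit quadratic form means the incremental distance is a genuine $P$-weighted norm, so the triangle inequality applies cleanly. First I would establish~\eqref{eq:w_delta_tilde_1}: fix $(x,z,v)$ with $\|x-z\|_P\le c$ and a disturbance $d_w\in\mathcal{W}(x,\kappa(x,z,v))$. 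I want to bound $\sqrt{V_{\delta}(z^++d_w,z^+,v^+)}=\|z^++d_w-z^+\|_P=\|d_w\|_P$. Since $x$ lies in the set $\{x:\|x-z\|_P\le c\}$ and $d_w$ ranges over $\mathcal{W}(x,\kappa(x,z,v))$, the double supremum in~\eqref{eq:w_delta_sup} immediately dominates $\|d_w\|_P$, giving $\sqrt{V_{\delta}(z^++d_w,z^+,v^+)}\le\tilde{w}_{\delta}(z,v,c)$, which is exactly~\eqref{eq:w_delta_tilde_1}.

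Next I would address the monotonicity~\eqref{eq:w_delta_tilde}, which is where the real work lies. Take $(x,z,v)$ with $V_{\delta}(x,z,v)=\|x-z\|_P^2\le(c_1-c_2)^2$, i.e.\ $\|x-z\|_P\le c_1-c_2$, and constants $0\le c_2\le c_1\le\sqrt{\delta_{loc}}$. I must show
\begin{align*}
\sup_{\{y|~\|y-x\|_P\le c_2\}}\sup_{d_w\in\mathcal{W}(y,\kappa(y,x,\kappa(x,z,v)))}\|d_w\|_P
\le \sup_{\{y|~\|y-z\|_P\le c_1\}}\sup_{d_w\in\mathcal{W}(y,\kappa(y,z,v))}\|d_w\|_P.
\end{align*}
The strategy is to argue that every $(y,\text{input})$ pair feasible for the left supremum is also feasible for the right one, so the left feasible region is contained in the right and the supremum can only grow. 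For the state argument, the triangle inequality gives $\|y-z\|_P\le\|y-x\|_P+\|x-z\|_P\le c_2+(c_1-c_2)=c_1$, so any $y$ in the inner ball around $x$ lies in the ball around $z$. For the input argument, the special feedback structure $\kappa(y,x,\kappa(x,z,v))=\kappa_x(y)-\kappa_x(x)+\kappa(x,z,v)=\kappa_x(y)-\kappa_x(x)+\kappa_x(x)-\kappa_x(z)+v=\kappa_x(y)-\kappa_x(z)+v=\kappa(y,z,v)$ collapses exactly, so the uncertainty sets $\mathcal{W}(y,\kappa(y,x,\kappa(x,z,v)))$ and $\mathcal{W}(y,\kappa(y,z,v))$ coincide. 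Hence the inner supremand is identical and the outer feasible set shrinks, yielding the claim.

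The main obstacle I anticipate is the bookkeeping in this set-inclusion argument, specifically making the telescoping of the feedback law watertight. The definition~\eqref{eq:w_delta_sup} evaluates $\tilde{w}_{\delta}(x,\kappa(x,z,v),c_2)$ with center $x$ and reference input $\kappa(x,z,v)$, so the inner feedback becomes a composed expression $\kappa(\cdot,x,\kappa(x,z,v))$ whose simplification relies entirely on the additive-in-$\kappa_x$ structure assumed in the proposition. I would therefore spell out this cancellation carefully, as it is precisely the property that fails for a general $\kappa$ and explains why the proposition restricts to the quadratic/structured case. The footnote's remark about polytopic $\sqrt{V_{\delta}}=\max_i P_i(x-z)$ suggests the same proof carries over verbatim, the only change being that the $P$-norm is replaced by the polytopic gauge, which still obeys the triangle inequality; I would note this but not re-derive it. No delicate analysis (compactness, measurability) is needed beyond observing that $\mathcal{W}$ is compact so the suprema are attained, which I would invoke only to ensure the quantities are well defined.
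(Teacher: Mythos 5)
Your proposal is correct and takes essentially the same route as the paper's proof: both verify \eqref{eq:w_delta_tilde_1} directly from the definition \eqref{eq:w_delta_sup}, and both obtain the monotonicity \eqref{eq:w_delta_tilde} by combining the triangle inequality for the $P$-norm with exactly the telescoping identity $\kappa(\tilde{x},x,\kappa(x,z,v))=\kappa(\tilde{x},z,v)$, which is the paper's \eqref{eq:kappa_property}. The only difference is cosmetic: the paper writes the nested suprema as a chain of equalities (the $P$-norm ball of radius $c_2$ around points of the ball of radius $c_1-c_2$ fills out exactly the ball of radius $c_1$, showing the construction is tight), whereas you prove only the one-sided set inclusion, which is all that \eqref{eq:w_delta_tilde} requires.
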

\begin{proof}
Condition~\eqref{eq:w_delta_tilde_1} is trivially satisfied. 
Note that the assumed structure of $\kappa$ implies 
\begin{align}
\label{eq:kappa_property}
\kappa(\tilde{x},x,\kappa(x,z,v))=&\kappa_x(\tilde{x})-\kappa_x(x)+\kappa(x,z,v)\nonumber\\
=&\kappa_x(\tilde{x})-\kappa_x(x)+\kappa_x(x)-\kappa_x(z)+v\nonumber\\
=&\kappa_x(\tilde{x})-\kappa_x(z)+v=\kappa(\tilde{x},z,v).
\end{align}
Using the quadratic nature of $V_{\delta}$ (triangular inequality) and equation~\eqref{eq:kappa_property}, the following equivalence holds
\begin{align*}
&\sup_{\{x|\|x-z\|_P\leq c_1-c_2\}}\tilde{w}_{\delta}(x,\kappa(x,z,v),c_2)\\
\stackrel{\eqref{eq:w_delta_sup}}=&\sup_{\{x|\|x-z\|_P\leq c_1-c_2\}}\sup_{\{\tilde{x}|\|\tilde{x}-x\|_P\leq c_2\}}\sup_{d_w\in\mathcal{W}(\tilde{x},\kappa(\tilde{x},x,\kappa(x,z,v)))}\|d_w\|_P\\
\stackrel{\eqref{eq:kappa_property}}{=}&  \sup_{\{x|\|x-z\|_P\leq c_1-c_2\}}\sup_{\{\tilde{x}|\|\tilde{x}-x\|_P\leq c_2\}}\sup_{d_w\in\mathcal{W}(\tilde{x},\kappa(\tilde{x},z,v))}\|d_w\|_P\\
\stackrel{\eqref{eq:w_delta_sup}}{=}&\sup_{\{\tilde{x}|\|\tilde{x}-z\|_P\leq c_1\}}\sup_{d_w\in\mathcal{W}(\tilde{x},\kappa(\tilde{x},z,v))}\|d_w\|_P=\tilde{w}_{\delta}(z,v,c_1),
\end{align*}
compare Fig.~\ref{fig:monotonicity} for an illustration. 
This condition implies~\eqref{eq:w_delta_tilde} and thus satisfaction of Assumption~\ref{ass:approx}.  
\end{proof}
The method in Proposition~\ref{prop:w_tilde_2} is not applicable to arbitrary nonlinear incremental Lyapunov functions $V_{\delta}$ and feedbacks $\kappa$, since the monotonicity property~\eqref{eq:w_delta_tilde} does not necessarily hold.

Propositions~\ref{prop:contract}--\ref{prop:w_tilde} (and Corollary~\ref{corol:w_tilde}) provide simple procedures to compute $\tilde{w}_{\delta}$ and $c_j$, which only use scalar bounds describing the incremental stabilizability property (Ass.~\ref{ass:contract}) and do not explicitly use the incremental Lyapunov function $V_{\delta}$. 
In case of quadratic incremental Lyapunov functions $V_{\delta}$, Proposition~\ref{prop:w_tilde_2} provides a procedure that uses the shape of $V_{\delta}$ and $\mathcal{W}$ to compute the least conservative function $\tilde{w}_{\delta}$ satisfying Assumption~\ref{ass:approx}. 
Furthermore, in Section~\ref{sec:param} we consider polytopic incremental Lyapunov functions $V_{\delta}$ for LPV systems
and provide simple (linear) formulas for $\tilde{w}_{\delta},~c_j$ that exploit the shape of $V_{\delta},~\mathcal{W}$.
The direct construction of $\tilde{w}_{\delta}$ (using Prop.~\ref{prop:w_tilde}) is illustrated in Section~\ref{sec:num} with a numerical example. 
In general there exists a degree of freedom in the design of $\tilde{w}_{\delta}$ that allows for a trade-off between conservatism and computational complexity, compare also Remark~\ref{rk:lpv_QP} and the numerical example (Sec.~\ref{sec:num}).

%
\subsubsection{Terminal ingredients}
\label{sec:terminal_ing}
In the following, we illustrate a simple procedure to compute terminal ingredients $(V_f,~\mathcal{X}_f,~k_f,~\overline{w})$ that satisfy Assumption~\ref{ass:term_2}.
\begin{proposition}
\label{prop:term}
Let Assumptions~\ref{ass:hat_w}--\ref{ass:approx} hold. 
Assume that there exist a feedback $k_f$ and a terminal cost $V_f$, such that   the conditions~\eqref{eq:term_dist_dec2},~\eqref{eq:term_contin2} in Assumption~\ref{ass:term_2} hold for all $x\in\mathbb{R}^n$ with $V_f(x)\leq \overline{\gamma}$ and some constant $\overline{\gamma}>0$. 
Without loss of generality\footnote{%
\cite[Lemma B.1]{jiang2001input} There exists a function $\hat{\alpha}\in\mathcal{K}_{\infty}$ such that $\hat{\alpha}\leq\alpha_l\circ \alpha_f^{-1}$, $id-\hat{\alpha}\in\mathcal{K}$. 
Thus, we can replace $\alpha_l\circ\alpha_f^{-1}$ by $\hat{\alpha}$ in the proof and~\eqref{eq:termProp_5}--\eqref{eq:N_0}. 
}, suppose that $id-\alpha_l\circ \alpha_f^{-1}\in\mathcal{K}$, where $id$ denotes the identity. 
Consider the terminal set
\begin{align*}
\mathcal{X}_f=\{(x,s)\in\mathbb{R}^{n+1}|~s\in[0,\overline{s}_f],~V_f(x)\leq \gamma\},
\end{align*}
  with some positive constants $\overline{s}_f,~\gamma$.
There exist functions $f_j\in\mathcal{K}_{\infty}$, $j=1,\dots,p$, such that the following condition holds for all $\gamma\in[0,\overline{\gamma}]$
\begin{align}
\label{eq:f_j}
\sup_{(x,s)\in\mathcal{X}_f}g_j(x,k_f(x))+c_js\leq g_j(0,0)+ f_j({\gamma})+c_j\overline{s}_f.
\end{align}
Suppose that the function $\tilde{w}_{\delta}$ and the terminal set satisfy
\begin{align}
\label{eq:termProp_2}
\sup_{(x,s)\in\mathcal{X}_f}\tilde{w}_{\delta}(x,k_f(x),s)\leq a_0+ a_1\sqrt{\gamma}+a_2\overline{s}_f,
\end{align}
with some positive constants $a_0,~a_1,~a_2$ and $a_2<1-\rho$. 
If there exists a constant $\gamma\in(0,\overline{\gamma}]$, that satisfies 
\begin{align}
\label{eq:termProp_3}
\dfrac{a_0+a_1\sqrt{\gamma}}{1-\rho-a_2}\leq \min_{j=1,\dots,p}\dfrac{-g_j(0,0)-f_j(\gamma)}{c_j}\leq \overline{s},
\end{align}
then Assumption~\ref{ass:term_2} is satisfied with $k_f,~V_f,~\mathcal{X}_f,~\gamma$, $N\geq N_0$,
\begin{subequations}
\label{eq:termProp_4_5_N_0}
\begin{align}
\label{eq:termProp_4}
\overline{s}_f:=&\min_{j=1,\dots,p}\dfrac{-g_j(0,0)-f_j(\gamma)}{c_j},\\
\label{eq:termProp_5}
\overline{w}:=&\rho^{-N}\cdot\sqrt{c_{\delta,l}}{\alpha_f^{-1}(\alpha_l(\alpha_f^{-1}(\gamma)))},\\
\label{eq:N_0}
N_0:=&\log_{\rho}\left(\dfrac{\sqrt{c_{\delta,l}}\alpha_f^{-1}(\alpha_l(\alpha_f^{-1}(\gamma)))}{(1-\rho)\overline{s}_f}\right).
\end{align}
\end{subequations}
\end{proposition}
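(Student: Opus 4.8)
The plan is to verify each of the conditions \eqref{eq:term_dist_dec2}--\eqref{eq:term_s} together with \eqref{eq:term_contin2} of Assumption~\ref{ass:term_2} for the proposed ingredients, with the robust invariance \eqref{eq:term_dist_RPI2} being the crux. The nominal decrease \eqref{eq:term_dist_dec2} and the continuity \eqref{eq:term_contin2} hold directly by hypothesis, since every $x$ with $(x,s)\in\mathcal{X}_f$ satisfies $V_f(x)\leq\gamma\leq\overline{\gamma}$. For the auxiliary bound \eqref{eq:f_j}, I would note that $g_j(\cdot,k_f(\cdot))$ is continuous and $\{x\mid V_f(x)\leq\gamma\}$ shrinks to $\{0\}$ as $\gamma\to 0$, so $\sup_{V_f(x)\leq\gamma}[g_j(x,k_f(x))-g_j(0,0)]$ is a continuous nondecreasing function of $\gamma$ vanishing at $\gamma=0$ and can be dominated by some $f_j\in\mathcal{K}_\infty$; combined with $s\leq\overline{s}_f$ on $\mathcal{X}_f$ this gives \eqref{eq:f_j}. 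The tightened constraint \eqref{eq:term_dist_con2} then follows from the choice \eqref{eq:termProp_4}, since $g_j(0,0)+f_j(\gamma)+c_j\overline{s}_f\leq 0$ is exactly the defining inequality of $\overline{s}_f$ for each $j$, while \eqref{eq:term_s} holds because $s\leq\overline{s}_f\leq\overline{s}$ by the second inequality in \eqref{eq:termProp_3}.

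Next I would bound the uncertainty on the terminal set. Using \eqref{eq:termProp_2} and the first inequality in \eqref{eq:termProp_3}, namely $a_0+a_1\sqrt{\gamma}\leq(1-\rho-a_2)\overline{s}_f$, I obtain $\tilde{w}_\delta(x,k_f(x),s)\leq a_0+a_1\sqrt{\gamma}+a_2\overline{s}_f\leq(1-\rho)\overline{s}_f$. The definition of $N_0$ in \eqref{eq:N_0} is calibrated so that $\overline{w}=(1-\rho)\overline{s}_f$ precisely at $N=N_0$, and since $\rho^{-N}$ is increasing in $N$, one has $\overline{w}\geq(1-\rho)\overline{s}_f$ for all $N\geq N_0$; this establishes \eqref{eq:w_f}. (The interval $[\overline{w}_{\min},\overline{w}]$ is nonempty for the same reason, as $\overline{w}_{\min}\leq\tilde{w}_\delta(x,k_f(x),0)\leq\overline{w}$.)

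The main work is \eqref{eq:term_dist_RPI2}. For the tube component, substituting $\tilde{w}_\delta(x,k_f(x),s)\leq(1-\rho)\overline{s}_f$ into the admissible range and using $s\leq\overline{s}_f$, $w\geq 0$ gives $s^+\leq\rho\overline{s}_f+(1-\rho)\overline{s}_f-\rho^N w=\overline{s}_f-\rho^N w\leq\overline{s}_f$, so the tube part of $\mathcal{X}_f$ is respected. For the state component I must show $V_f(x^++d_w)\leq\gamma$. From \eqref{eq:bound}, the bound $V_\delta(x^++d_w,x^+,k_f(x^+))\leq\rho^{2N}w^2$ yields $\|d_w\|\leq\rho^N w/\sqrt{c_{\delta,l}}\leq\rho^N\overline{w}/\sqrt{c_{\delta,l}}$. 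Then the continuity \eqref{eq:term_contin2}, the decrease \eqref{eq:term_dist_dec2}, the lower bound \eqref{eq:stage_cost_l}, and the estimate $V_f(x)\leq\alpha_f(\|x\|)$ (which follows from \eqref{eq:term_contin2} at $z=0$ with $V_f(0)=0$) combine to
\begin{align*}
V_f(x^++d_w)&\leq V_f(x)-\alpha_\ell\!\left(\alpha_f^{-1}(V_f(x))\right)+\alpha_f\!\left(\rho^N\overline{w}/\sqrt{c_{\delta,l}}\right),
\end{align*}
and the definition \eqref{eq:termProp_5} is chosen so that $\alpha_f(\rho^N\overline{w}/\sqrt{c_{\delta,l}})=\alpha_\ell(\alpha_f^{-1}(\gamma))$ exactly.

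I expect this last monotonicity argument to be the main obstacle: the naive pointwise estimate $V_f(x^+)\leq V_f(x)-\ell(x,k_f(x))$ is useless near the origin, where the decrease $\ell(x,k_f(x))$ is too small to absorb the disturbance term, so the compensation must be expressed entirely through $V_f(x)$ and the monotone map $v\mapsto v-\alpha_\ell(\alpha_f^{-1}(v))$. Invoking the standing reduction $\mathrm{id}-\alpha_\ell\circ\alpha_f^{-1}\in\mathcal{K}$ (justified without loss of generality by the footnoted Jiang--Wang lemma), this map is nondecreasing, so with $V_f(x)\leq\gamma$ the right-hand side above is at most $\gamma-\alpha_\ell(\alpha_f^{-1}(\gamma))+\alpha_\ell(\alpha_f^{-1}(\gamma))=\gamma$, which gives $V_f(x^++d_w)\leq\gamma$ and hence $(x^++d_w,s^+)\in\mathcal{X}_f$. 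The remaining steps are routine substitutions of the definitions \eqref{eq:termProp_4}--\eqref{eq:N_0}.
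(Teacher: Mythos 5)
Your proof is correct and follows essentially the same route as the paper's: the same condition-by-condition verification of Assumption~\ref{ass:term_2}, the same chain for the robust invariance \eqref{eq:term_dist_RPI2} (continuity \eqref{eq:term_contin2}, decrease \eqref{eq:term_dist_dec2}, the exact calibration $\alpha_f(\rho^N\overline{w}/\sqrt{c_{\delta,l}})=\alpha_l(\alpha_f^{-1}(\gamma))$ built into \eqref{eq:termProp_5}, and monotonicity of $id-\alpha_l\circ\alpha_f^{-1}$ to pass from $V_f(x)$ to $\gamma$), the same tube bound $s^+\leq\rho\overline{s}_f+(1-\rho)\overline{s}_f-\rho^N w\leq\overline{s}_f$, and the same use of $N\geq N_0$ to obtain $(1-\rho)\overline{s}_f\leq\overline{w}$ for \eqref{eq:w_f}. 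You correctly identify the monotonicity step as the crux, which is precisely why the statement includes the (WLOG) hypothesis $id-\alpha_l\circ\alpha_f^{-1}\in\mathcal{K}$.

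The one place where you diverge --- and where your argument as written has a gap --- is the existence of the functions $f_j$ in \eqref{eq:f_j}. You assert that $x\mapsto g_j(x,k_f(x))$ is continuous and invoke a soft domination argument, but $k_f$ is nowhere assumed continuous (Assumption~\ref{ass:term_2} only requires a function $k_f:\mathbb{R}^n\rightarrow\mathbb{R}^m$), so neither the continuity of that map nor the continuity/vanishing in $\gamma$ of the supremum over the sublevel set is justified; with a discontinuous $k_f$ your reasoning, as stated, fails. The paper avoids this entirely by an explicit construction that only uses ingredients already in force: by the Lipschitz bound \eqref{eq:nonlin_con_Lipschitz}, $g_j(x,k_f(x))-g_j(0,0)\leq L_j\|(x,k_f(x))\|$; by the stage-cost lower bound \eqref{eq:stage_cost_l} and the decrease condition \eqref{eq:term_dist_dec2}, $\|(x,k_f(x))\|\leq \alpha_l^{-1}(\ell(x,k_f(x)))\leq \alpha_l^{-1}(V_f(x))\leq\alpha_l^{-1}(\gamma)$; hence $f_j(\gamma):=L_j\alpha_l^{-1}(\gamma)\in\mathcal{K}_{\infty}$ works. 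Note this chain is also what actually makes your claim true that the terminal region (states \emph{and} the inputs $k_f(x)$) shrinks to the origin as $\gamma\to 0$ --- it comes from the decrease condition, not from topology. With this substitution, your proof coincides with the paper's.
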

\begin{proof}
\textbf{Part I. }
The robust positive invariance condition~\eqref{eq:term_dist_RPI2} follows from
\begin{align*}
&V_f(x^+ + d_w)\stackrel{\eqref{eq:term_contin2}}{\leq}  V_f(x^+ )+\alpha_f(\|d_w\|)\\
\stackrel{\eqref{eq:term_dist_dec2},\eqref{eq:bound}}{\leq}& V_f(x)-\ell(x,k_f(x))+\alpha_f(\rho^N\overline{w}/\sqrt{c_{\delta,l}})\\
\stackrel{\eqref{eq:stage_cost_l}}{\leq}& V_f(x) -\alpha_l(\|x\|)+\alpha_f(\rho^N\overline{w}/\sqrt{c_{\delta,l}})\\
\stackrel{\eqref{eq:term_contin2}}{\leq} &V_f(x)-\alpha_l(\alpha_f^{-1}(V_f(x)))+\alpha_f(\rho^N\overline{w}/\sqrt{c_{\delta,l}})\\
\leq& \gamma-\alpha_l(\alpha_f^{-1}(\gamma))+\alpha_f(\rho^N\overline{w}/\sqrt{c_{\delta,l}}) \stackrel{\eqref{eq:termProp_5}}{=}  \gamma
\end{align*}
and
\begin{align*}
&s^+\leq \rho s-\rho^N w+\tilde{w}_{\delta}(x,k_f(x),s)\\
\stackrel{\eqref{eq:termProp_2}}{\leq}& \rho \overline{s}_f+a_0+a_1\sqrt{\gamma}+a_2\overline{s}_f
\stackrel{\eqref{eq:termProp_3}\eqref{eq:termProp_4}}{\leq} \overline{s}_f.
\end{align*}
\textbf{Part II. } Assumptions~\ref{ass:stage_cost}, \ref{ass:gen_nonlin_con_Lipschitz} and \eqref{eq:term_dist_dec2} ensure
\begin{align*}
&g_j(x,k_f(x))-g_j(0,0)\stackrel{\eqref{eq:nonlin_con_Lipschitz}}{\leq}L_j\|(x,k_f(x))\|\\
\stackrel{\eqref{eq:stage_cost_l}}{\leq}& L_j\alpha_l^{-1}(\ell(x,k_f(x)) \stackrel{\eqref{eq:term_dist_dec2}}{\leq}L_j \alpha_l^{-1}(V_f(x))\leq L_j\alpha_l^{-1}(\gamma),
\end{align*}
for all $(x,s)\in\mathcal{X}_f$.
Thus, \eqref{eq:f_j} is satisfied with $f_j(\gamma)=L_j\alpha_l^{-1}(\gamma)$. 
Satisfaction of~\eqref{eq:term_dist_con2} follows from~\eqref{eq:f_j} and~\eqref{eq:termProp_4}. \\
\textbf{Part III. } 
Satisfaction of condition~\eqref{eq:w_f} follows from
\begin{align*}
&\tilde{w}_{\delta}(x,k_f(x),s)\stackrel{\eqref{eq:termProp_2}}{\leq} a_0+a_1\sqrt{\gamma}+a_2\overline{s}_f\\
\stackrel{\eqref{eq:termProp_3}\eqref{eq:termProp_4}}{\leq }& (1-\rho)\overline{s}_f
\stackrel{\eqref{eq:termProp_5}\eqref{eq:N_0}}{=}\rho^{N-N_0}\overline{w}\leq \overline{w}.
\end{align*}
Similarly, \eqref{eq:termProp_4} in combination with \eqref{eq:termProp_3} implies \eqref{eq:term_s}. 
\end{proof}
This proposition provides a simple procedure to compute a suitable terminal set under few conditions for relatively general nonlinear systems. 
For a quadratic stage cost $\ell(x,u) = \|x\|_Q^2 + \|u\|_R^2$, a quadratic terminal cost $V_f(x) = \|x\|_P^2$ and a linear feedback $k_f(x)=K_fx$ that locally satisfy \eqref{eq:term_dist_dec2}, \eqref{eq:term_contin2}, can be computed using standard methods \cite{rawlings2017model,chen1998quasi}. 
Condition~\eqref{eq:termProp_2} requires a bound on the uncertainty close to the origin. 
The maximal tube size $\overline{s}_f$ is defined as large as possible, such that the tightened constraints~\eqref{eq:term_dist_con2} are satisfied in the terminal set. 
In Proposition~\ref{prop:LPV_simple} below, we show how these conditions simplify for LPV systems. 
Furthermore, given an additional norm-like inequality on $V_\delta$, simple design for the terminal region directly using $V_\delta$ can be found in the recent papers \cite[Prop.~1]{nubert2020safe} and \cite[Prop.~6]{kohler2019robust}.

Interestingly, equation~\eqref{eq:termProp_5} shows that we can consider arbitrary large $\overline{w}$, if $N$ is sufficiently large. 
The practical implication is that we can even operate the system in regions with large uncertainty, if we can be certain that this effects only an initial part of a sufficiently long predicted trajectory ($N\texttt{>>}1$). In particular, the constraints \eqref{eq:tube_definition}, \eqref{eq:tube_bound_MPC} and \eqref{eq:terminal_region} implicitly define the following bound on the uncertainty $w_{k|t}$
\begin{align*}
s_{N|t} = \sum_{k=0}^{N-1}\rho^{N-1-k}w_{k|t}\leq \overline{s}_f \leq \overline{s}. 
\end{align*}


\subsubsection{Algorithm}
\label{sec:algorithm}
The offline and online computation are summarized in Algorithm \ref{alg:offline} and \ref{alg:online}, respectively.
\begin{algorithm}[H]
\caption{Offline Computation}
\label{alg:offline}
\begin{algorithmic}[1]
\State Choose stage cost $\ell$ (Assumption~\ref{ass:stage_cost}) and constraint set $\mathcal{Z}$.
\State Verify incremental stabilizability (Ass.~\ref{ass:contract},~\cite[Alg.~2]{koehler2020nonlinear}):
\Statex - Obtain scalars $\rho\in [0,1)$, $\delta_{loc}$, compute $c_j$ (Prop.~\ref{prop:contract}).
\State Obtain uncertainty bound $\tilde{w}_{\delta}$ (Ass.~\ref{ass:approx}):
\Statex - Describe uncertainty $\mathcal{W}$ (Sec. \ref{sec:non_dist_bound}, Ass.~\ref{ass:hat_w}).  
\Statex - Compute $\tilde{w}_{\delta}$ using Prop.~\ref{prop:w_tilde} or \ref{prop:w_tilde_2}.
\State Compute terminal ingredients (Ass. \ref{ass:term_2}, Prop.~\ref{prop:term}): 
\Statex - Compute $V_f,~k_f$ \cite{chen1998quasi}.  
\Statex - Choose $\gamma\in(0,\overline{\gamma}]$, such that condition~\eqref{eq:termProp_3} holds. 
\Statex - Compute $\overline{s}_f,~\overline{w},~N_0$~\eqref{eq:termProp_4_5_N_0}. 
\Statex - Choose $N\geq N_0$.
\end{algorithmic}
\end{algorithm}
\begin{algorithm}[H]
\caption{Online Computation}
\label{alg:online}
\begin{algorithmic}[1]
\State Measure the state $x_t$.
\State Solve the MPC optimization problem \eqref{eq:MPC_real}.
\State Apply the control input: $u_t = u^*_{0|t}$.
\State Set $t=t+1$ and go back to 1.
\end{algorithmic}
\end{algorithm}
%
%

\section{Special cases} 
\label{sec:RMPC_special}
This section considers important special cases of the proposed framework. 
The special case of additive disturbances is presented in Section~\ref{sec:additive}.  
Section~\ref{sec:param} considers LPV systems and compares the resulting scheme with some of the existing methods. 
\subsection{Additive disturbances} 
\label{sec:additive}
The simplest and thus most common way to treat uncertainty in MPC is to consider additive disturbances with a constant bound $\mathcal{W}$, as for example done in~\cite{chisci2001systems,mayne2005robust,marruedo2002input,yu2013tube,bayer2013discrete,singh2017robust,yu2010robust}. 
This is a special case of the previous derivation, by considering the constant 
\begin{align}
\label{eq:w_max_add}
\overline{w}_{\min}=&\overline{w}:=\sup_{z,v,v^+,d_w}\sqrt{V_{\delta}(f(z,v)+d_w,f(z,v),v^+)}\\
\text{s.t. }& (z,v)\in\mathcal{Z},~(f(z,v),v^+)\in\mathcal{Z},~d_w\in\mathcal{W}(z,v). \nonumber
\end{align}
In the following, we show how the problem simplifies in this case. 
A preliminary\footnote{%
Compared to~\cite{kohler2018novel}, a more general stage cost and nonlinear constraints are considered. 
Furthermore, \cite{kohler2018novel} considers no terminal ingredients. 
} version of the following robust MPC scheme has been presented in~\cite{kohler2018novel}. 
\begin{assumption}
\label{ass:term} 
There exist a terminal controller $k_f:\mathbb{R}^n\rightarrow\mathbb{R}^m$, a terminal cost function $V_f:\mathbb{R}^n\rightarrow \mathbb{R}_{\geq 0}$, and a terminal set $\mathcal{X}_f\subset\mathbb{R}^n$, such that the following properties hold for any $x\in\mathcal{X}_f$ and all $d_w\in\mathbb{R}^n$, such that $V_{\delta}(x^+ + d_w,x^+,k_f(x^+))\leq \rho^{2N}\overline{w}^2$ with $x^+=f(x,k_f(x))$:
\begin{subequations}
\label{eq:term_dist}
\begin{align}
\label{eq:term_dist_dec}
V_f(x^+)\leq& V_f(x)-\ell(x,k_f(x)),\\
\label{eq:term_dist_RPI}
x^+ + d_w\in & \mathcal{X}_f,\\
\label{eq:w_bar_additive}
\dfrac{1-\rho^N}{1-\rho}\bar{w}\leq& \overline{s}= \sqrt{\delta_{\text{loc}}},\\
\label{eq:term_dist_con}
g_j(x,k_f(x))+{c}_{j}\dfrac{1-\rho^N}{1-\rho}\overline{w}\leq& 0, \quad  j = 1, \dots, p,
\end{align}
\end{subequations}
Furthermore, there exists a function $\alpha_f\in\mathcal{K}_{\infty}$ such that
\begin{align*}
V_f(z)\leq V_f(x)+\alpha_f(\|x-z\|),\quad \forall x,z\in\mathcal{X}_f.
\end{align*}
\end{assumption}
Compared to  Assumption~\ref{ass:term_2}, the conditions~\eqref{eq:w_f}-\eqref{eq:term_dist_con2} simplify to~\eqref{eq:w_bar_additive}--\eqref{eq:term_dist_con}, with ${s}=\frac{1-\rho^N}{1-\rho}\overline{w}$,~$\overline{w}_{\min}=\overline{w}$.   
Compared to the nominal conditions for the terminal ingredients~\cite{chen1998quasi,rawlings2017model}, the tightened  state and input constraints~\eqref{eq:term_dist_con} need to be satisfied and the terminal set should be robust positively invariant (RPI)~\eqref{eq:term_dist_RPI}. 
The corresponding optimization problem is given by
\begin{subequations}
\label{eq:MPC}
\begin{align}
V_N(x_t)&=\min_{u_{\cdot|t},x_{\cdot|t}} J_N(x_{\cdot | t},u_{\cdot|t})\\
\text{s.t. }
&x_{0|t}=x_t, ~ x_{k+1|t}=f(x_{k|t},u_{k|t}),\\
\label{eq:MPC_con}
&g_j(x_{k|t},u_{k|t})+\dfrac{1-\rho^k}{1-\rho}{c}_{j}\overline{w}\leq 0,\\
\label{eq:MPC_term}
&x_{N|t}\in\mathcal{X}_f,~ k=0,\dots, N-1,\quad j=1,\dots p.
\end{align}
\end{subequations}
The only difference compared to a nominal MPC scheme is the fact that the tightened constraints~\eqref{eq:MPC_con} are considered along the prediction horizon. 
Compared to the formulation in~\eqref{eq:MPC_real}, the constraint tightening is computed offline, which leads to a computational complexity equivalent to the corresponding nominal MPC scheme. 
However, describing the uncertainty as additive disturbances instead of state and input dependent uncertainty can introduce a lot of conservatism. 
The following theorem establishes the corresponding closed-loop properties. 
\begin{theorem}
\label{thm:robust_simple}
Let Assumptions~\ref{ass:hat_w}--\ref{ass:approx} and \ref{ass:term} hold, and suppose that problem~\eqref{eq:MPC} is feasible at time $t=0$. 
Then~\eqref{eq:MPC} is recursively feasible, the constraints~\eqref{eq:constraint} are satisfied and the origin is practically asymptotically stable for the resulting closed-loop system~\eqref{eq:close}.
\end{theorem}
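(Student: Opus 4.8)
The plan is to reuse the candidate-solution argument from the proof of Theorem~\ref{thm:main}, specialized to the constant bound $\overline{w}$. In this additive case the online tube variable collapses to the closed-form sequence $s_{k}=\tfrac{1-\rho^{k}}{1-\rho}\overline{w}$, which is exactly the fixed point of the recursion \eqref{eq:tube_definition} driven by the constant $w\equiv\overline{w}$. Since this sequence is precomputed offline and already embedded in the tightened constraints \eqref{eq:MPC_con}, the monotonicity property of Assumption~\ref{ass:approx} is no longer required, so the inductive bookkeeping of Part~II in the proof of Theorem~\ref{thm:main} disappears and the argument simplifies considerably.

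First I would build the candidate at time $t+1$ from the shifted optimal solution: set $x_{0|t+1}=x_{t+1}$, track the previous nominal trajectory with the incremental feedback $u_{k|t+1}=\kappa(x_{k|t+1},x^*_{k+1|t},u^*_{k+1|t})$, and append the terminal control $k_f$ at the end (defining $u^*_{N|t}=k_f(x^*_{N|t})$ and $x^*_{N+1|t}=f(x^*_{N|t},u^*_{N|t})$ as in Part~I of Theorem~\ref{thm:main}). The definition \eqref{eq:w_max_add} of $\overline{w}$ gives the base bound $V_{\delta}(x_{t+1},x^*_{1|t},u^*_{1|t})\leq\overline{w}^2$, and applying the contraction \eqref{eq:contract} recursively yields
\begin{align*}
V_{\delta}(x_{k|t+1},x^*_{k+1|t},u^*_{k+1|t})\leq \rho^{2k}\overline{w}^2\leq\delta_{\text{loc}},\quad k=0,\dots,N,
\end{align*}
where the last inequality uses \eqref{eq:w_bar_additive} to keep every iterate inside the region where Assumption~\ref{ass:contract} is valid.

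Next I would verify the tightened constraints \eqref{eq:MPC_con} for the candidate. The Lipschitz-type bound \eqref{eq:lipschitz} of Proposition~\ref{prop:contract} together with the $V_{\delta}$-estimate above gives $g_j(x_{k|t+1},u_{k|t+1})\leq g_j(x^*_{k+1|t},u^*_{k+1|t})+c_j\rho^{k}\overline{w}$; adding $\tfrac{1-\rho^{k}}{1-\rho}c_j\overline{w}$ and invoking the telescoping identity $\rho^{k}+\tfrac{1-\rho^{k}}{1-\rho}=\tfrac{1-\rho^{k+1}}{1-\rho}$ reduces the candidate's constraint at step $k$ to the optimal solution's constraint at step $k+1$, which holds by feasibility at time $t$ for $k\leq N-2$ and by the terminal tightening \eqref{eq:term_dist_con} for $k=N-1$. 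For recursive feasibility of the terminal set I would use $\sqrt{V_{\delta}(x_{N|t+1},x^*_{N+1|t},u^*_{N+1|t})}\leq\rho^{N}\overline{w}$ to apply the robust invariance \eqref{eq:term_dist_RPI} with $d_w=x_{N|t+1}-x^*_{N+1|t}$, obtaining $x_{N|t+1}\in\mathcal{X}_f$; constraint satisfaction \eqref{eq:constraint} then follows from \eqref{eq:MPC_con} evaluated at $k=0$.

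Finally, for practical stability I would bound the stage-cost mismatch via \eqref{eq:stage_cost_bound}, namely $\ell(x_{k|t+1},u_{k|t+1})-\ell(x^*_{k+1|t},u^*_{k+1|t})\leq\alpha_u(\rho^{k}\overline{w})$, combine it with the terminal decrease \eqref{eq:term_dist_dec} and the continuity condition of Assumption~\ref{ass:term}, and sum over $k$ using the exponential summability in Assumption~\ref{ass:stage_cost} to arrive at $V_N(x_{t+1})-V_N(x_t)\leq-\alpha_{\ell}(\|x_t\|)+\alpha_w(\overline{w})$, exactly as in Part~VI of the proof of Theorem~\ref{thm:main}. The only genuinely delicate point is the constraint-tightening bookkeeping, i.e. confirming through the telescoping identity that the fixed offline sequence $\tfrac{1-\rho^{k}}{1-\rho}\overline{w}$ is the correct propagated bound and that \eqref{eq:w_bar_additive} keeps every iterate within $V_{\delta}\leq\delta_{\text{loc}}$; the remainder is a direct specialization of Theorem~\ref{thm:main}.
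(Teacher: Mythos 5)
Your proposal is correct and takes essentially the same approach as the paper: the paper's own proof simply notes that with $w_{k|t}\equiv\overline{w}$ the tube recursion~\eqref{eq:tube_definition} collapses to the closed form $s_{k|t}=\tfrac{1-\rho^k}{1-\rho}\overline{w}$ and that the remainder is identical to the proof of Theorem~\ref{thm:main}. Your write-up is a faithful, more detailed spelling-out of exactly that specialization (base bound from~\eqref{eq:w_max_add}, telescoping identity for the tightening~\eqref{eq:MPC_con}, terminal invariance via~\eqref{eq:term_dist_RPI}, and the standard practical-stability argument).
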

\begin{proof}
The proof is similar to~\cite{kohler2018novel} and is a special case of Theorem~\ref{thm:main}. 
In particular, $\overline{w}_{k|t}=\overline{w}$ implies 
\begin{align*}
s_{k|t}=\sum_{i=0}^{k-1}\rho^i\overline{w} =\dfrac{1-\rho^k}{1-\rho}\overline{w}.
\end{align*}
The remainder of the proof is the same as in Theorem~\ref{thm:main}. 
\end{proof}
\begin{remark}
\label{remark:linear}
In~\cite{chisci2001systems}, a constraint tightening for linear systems subject to additive disturbances is considered. 
The considered constraint tightening~\eqref{eq:MPC_con} can be interpreted as an overapproximation of the constraint tightening in~\cite{chisci2001systems}, compare~\cite[Remark~14]{kohler2018novel} for details. 
In particular, the overapproximation is a result of the simple description of the stabilizability property using inequality~\eqref{eq:contract}, instead of explicitly using the general nonlinear uncertain dynamics~\eqref{eq:sys_w}. 
Thus, the presented framework can be viewed as a simple to implement (cf.~\cite{wabersich2018safe}) extension of~\cite{chisci2001systems} to nonlinear uncertain systems, based on the incremental stabilizability property (Ass.~\ref{ass:contract}). 
In ~\cite{mesbah2018backoff} a more elaborate method to compute an explicit offline constraint tightening for nonlinear uncertain systems is proposed. 
The resulting constraint tightening often has a similar shape as~\eqref{eq:MPC_con}, compare~\cite[Fig.~2]{mesbah2018backoff}, however, the method suffers from a more complex offline computation. 
\end{remark}

 %
\subsection{Linear parameter varying systems}
\label{sec:param}
We consider the special case of LPV systems
\begin{align}
\label{eq:LPV}
x_{t+1}=&(A_0+A_{\theta_t}) x_t+(B_0+B_{\theta_t}) u_t+Ed_t,
\end{align}
where $d_t$ are additive disturbances and $A_\theta \in \mathbb{R}^{n\times n}$, $B_\theta \in \mathbb{R}^{n\times m}$ are time-varying uncertain matrices
\begin{align}
\label{eq:A_theta}
 A_{\theta_t}=&\sum_{i=1}^q \theta_{i,t} A_i,\quad B_{\theta_t}=\sum_{i=1}^q \theta_{i,t} B_i,
\end{align}
with time-varying unknown\footnote{%
In case $\theta_t$ has a bounded rate of change and $\theta_t$ can be measured/estimated online, updating the nominal system $f$ can reduce the conservatism, compare e.g. MPC for LPV systems~\cite{abbas2019tube} and robust adaptive MPC~\cite{kohler2019robust}. 
} parameters $\theta_t\in \mathbb{R}^q$. 
We assume that there exist polytopes $\mathbb{D},~\Theta$, such that $\theta_t\in{\Theta}$,~$d_t\in\mathbb{D}$ for all $t\geq 0$. 
The nominal prediction model $f$ is simply a linear time-invariant (LTI) system
\begin{align}
\label{eq:sys_linear}
f(x_t,u_t)=A_0x_t+B_0u_t.
\end{align}
Furthermore, we consider a polytopic constraint set $\mathcal{Z}$~\eqref{eq:constraint} with
\begin{align}
\label{eq:con_polytope}
g_j(x,u)=L_{j,x}x+L_{j,u}u-1\leq 0, \quad j=1,\dots,p,
\end{align}
and a quadratic stage cost $\ell(x,u)=\|x\|_Q^2+\|u\|_R^2$, with $Q,~R$ positive definite. 
The following proposition shows how the design simplifies in this setup.  
\begin{proposition}
\label{prop:LPV_simple}
Suppose $(A_0, B_0)$ is stabilizable, i.e. there exists a feedback $K$ such that $A_K:=A_0+B_0K$ is Schur.
\begin{enumerate}
\item 
There exists a compact polytope\footnote{%
In case of symmetric polytopes $\mathcal{P}$, the formulas simplify with $V_{\delta}(x,z,v)=\|P(x-z)\|_{\infty}$ and only half the indices $i$ need to be enumerated in $\tilde{w}(z,v)$. } $\mathcal{P} = \{x\in\mathbb{R}^n|~P_ix\leq 1,~i=1,\dots r\}$ with $A_K \mathcal{P} \subseteq \rho \mathcal{P},~\rho \in [0,1)$. 
Assumption~\ref{ass:contract} is satisfied with $\sqrt{V_\delta(x,z,v)} = \max_i P_i(x-z)$, $\kappa(x,z,v)=v+K(x-z)$. 
Furthermore, constants $c_j$ satisfying \eqref{eq:lipschitz} in Prop.~\ref{prop:contract} can be computed using~\eqref{eq:c_j_polytope}.   
\item Denote the vertices $\theta^{j}\in\text{vert}(\Theta)$.  
The following function satisfies Assumption~\ref{ass:approx}
\begin{align}
\label{eq:w_LPV_poly}
\tilde{w}_{\delta}(z,v,c)&:=\tilde{w}(z,v)+L_{w}c,\\
\tilde{w}(z,v)&:=\max_{i,j} P_i[A_{\theta^{j}}z+B_{\theta^{j}}v]+\overline{d}_i,\nonumber\\
L_w &:=  \max_{i,j}\max_{\Delta x\in \mathcal{P}}  P_i(A_{\theta^j}+B_{\theta^j}K)\Delta x,\nonumber\\
\overline{d}_i&:=\max_{d\in\mathbb{D}} P_iEd,~i=1,\dots,r.
\end{align}
\item  Denote $c_{max}:=\max_{j=1,\dots,p}c_{j}$, $\overline{d}=\max_{i=1,\dots,r}\overline{d}_i$ and assume further that
\begin{align}
\label{eq:poly_alpha_suff_2}
\rho+L_{w}+c_{\max}\overline{d}\leq 1. 
\end{align}
Then Assumption~\ref{ass:term_2} is satisfied with $\overline{w}=\overline{s}=\infty$, 
\begin{subequations}
\begin{align}
&k_f(x)=Kx,~V_f(x)=\|x\|_{P_f}^2,\\
\label{eq:Lyap_disc}
&P_f=A_K^\top P_f A_K+Q+K^\top RK,\\
\label{eq:poly_term_set_new}
&\mathcal{X}_f:=\{(x,s)|P_ix+s\leq 1/c_{\max},~i=1,\dots,r\}.
\end{align}
\end{subequations}
\end{enumerate}
\end{proposition}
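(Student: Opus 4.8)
The plan is to verify the three claims in order, in each case reducing the required inequalities to the polytopic gauge $\max_i P_i(\cdot)$ and exploiting its positive homogeneity, so that everything becomes a linear support-function computation.

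For the first claim I would invoke the standard result that a Schur matrix admits a polytopic contractive set: since $A_K$ is Schur, for any $\rho$ strictly between the spectral radius of $A_K$ and $1$ there exists a compact polytope $\mathcal{P}=\{x\,|\,P_ix\leq 1\}$ with $0\in\mathrm{int}(\mathcal{P})$ and $A_K\mathcal{P}\subseteq\rho\mathcal{P}$. Taking $\sqrt{V_\delta(x,z,v)}=\max_iP_i(x-z)$ and $\kappa(x,z,v)=v+K(x-z)$, condition~\eqref{eq:bound} follows from equivalence of the gauge and the Euclidean norm on a compact polytope with the origin in its interior; \eqref{eq:k_max} follows from $\kappa-v=K(x-z)$ together with~\eqref{eq:bound}; and \eqref{eq:contract} follows directly from $A_K\mathcal{P}\subseteq\rho\mathcal{P}$, because $x^+-z^+=A_K(x-z)$ and homogeneity give $\max_iP_iA_K(x-z)\leq\rho\max_iP_i(x-z)$. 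The contraction holds globally, so $\delta_{loc}$ may be taken arbitrarily large. The constants $c_j$ of Proposition~\ref{prop:contract} arise from the same homogeneity argument applied to $g_j(x,\kappa(x,z,v))-g_j(z,v)=(L_{j,x}+L_{j,u}K)(x-z)$, yielding $c_j=\max_{\Delta x\in\mathcal{P}}(L_{j,x}+L_{j,u}K)\Delta x$, i.e.~\eqref{eq:c_j_polytope}.

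For the second claim I would apply Proposition~\ref{prop:w_tilde} with $\tilde{\alpha}_w(c)=L_wc$, which is linear and hence superadditive, and verify its two hypotheses for the proposed $\tilde{w}$. Writing the model mismatch as $d_w=A_\theta x+B_\theta u+Ed$ with $\theta\in\Theta$, $d\in\mathbb{D}$, I would observe that $\sqrt{V_\delta(z^++d_w,z^+,v^+)}=\max_iP_id_w$; since $A_\theta,B_\theta$ are affine in $\theta$ the supremum over $\theta$ is attained at a vertex $\theta^j$, and the additive part separates as $\sup_dP_iEd=\overline{d}_i$, which establishes~\eqref{eq:w_tilde_ass} with the stated $\tilde{w}(z,v)$. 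The continuity condition~\eqref{eq:w_cont_set} I would obtain by substituting $\kappa(x,z,v)=v+K(x-z)$ into $\tilde{w}(x,\kappa(x,z,v))$, writing $x-z=c\,\Delta x$ with $\Delta x\in\mathcal{P}$ whenever $\max_kP_k(x-z)\leq c$, and bounding the resulting cross term by $L_wc$ from the definition of $L_w$ and homogeneity; Proposition~\ref{prop:w_tilde} then delivers Assumption~\ref{ass:approx}.

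For the third claim, the decrease~\eqref{eq:term_dist_dec2} holds with equality because~\eqref{eq:Lyap_disc} gives $A_K^\top P_fA_K=P_f-Q-K^\top RK$, so $V_f(A_Kx)=V_f(x)-\ell(x,Kx)$, and continuity~\eqref{eq:term_contin2} is immediate since $V_f$ is quadratic and $\mathcal{X}_{f,x}=\frac{1}{c_{\max}}\mathcal{P}$ is compact. The tightened constraint~\eqref{eq:term_dist_con2} follows from $(L_{j,x}+L_{j,u}K)x\leq c_j\max_iP_ix$, $c_j\leq c_{\max}$ and $\max_iP_ix+s\leq 1/c_{\max}$ on $\mathcal{X}_f$, giving $g_j(x,Kx)+c_js\leq c_{\max}(\max_iP_ix+s)-1\leq 0$; conditions \eqref{eq:w_f} and \eqref{eq:term_s} are vacuous since $\overline{w}=\overline{s}=\infty$. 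I expect the robust invariance~\eqref{eq:term_dist_RPI2} to be the main obstacle: for each $i$ I would bound $P_i(x^++d_w)+s^+$ with $x^+=A_Kx$, using $P_iA_Kx\leq\rho\max_kP_kx$, the disturbance bound $\max_iP_id_w\leq\rho^Nw$ equivalent to $V_\delta(x^++d_w,x^+,\cdot)\leq\rho^{2N}w^2$, and $s^+\leq\rho s-\rho^Nw+\tilde{w}_\delta(x,Kx,s)$. The $\rho^Nw$ terms cancel exactly, leaving $\rho(\max_kP_kx+s)+\tilde{w}_\delta(x,Kx,s)$; bounding $\tilde{w}_\delta(x,Kx,s)\leq L_w(\max_kP_kx+s)+\overline{d}$ and using $\max_kP_kx+s\leq 1/c_{\max}$ collapses this to $(\rho+L_w)/c_{\max}+\overline{d}$, which is $\leq 1/c_{\max}$ precisely under~\eqref{eq:poly_alpha_suff_2}, so $(x^++d_w,s^+)\in\mathcal{X}_f$. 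The delicate part is the bookkeeping: every quantity must be kept expressed through the single gauge $\max_kP_kx$ so that the $\rho^Nw$ cancellation and the final invocation of~\eqref{eq:poly_alpha_suff_2} go through cleanly.
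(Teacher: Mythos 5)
Your proposal is correct and follows essentially the same route as the paper's own proof: existence of a $\rho$-contractive polytope for the Schur matrix $A_K$ with the gauge $\max_i P_i(\cdot)$ as $V_\delta$ in part one, verification of the hypotheses of Proposition~\ref{prop:w_tilde} with the linear (hence superadditive) $\tilde{\alpha}_w(c)=L_w c$ in part two, and the gauge-based terminal-set argument in part three, including the exact cancellation of the $\rho^N w$ terms and the reduction of robust invariance \eqref{eq:term_dist_RPI2} to condition \eqref{eq:poly_alpha_suff_2} via $\tilde w_\delta(x,Kx,s)\leq L_w(\max_k P_k x+s)+\overline{d}$, which is the same computation as \eqref{eq:poly_termProp_2}--\eqref{eq:term_linear3}. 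No gaps.
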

\begin{proof}
\textbf{Part I. }
The existence of the contractive polytopic set $\mathcal{P}$ is ensured by \cite[Lemma~5]{rakovic2007minkowski}. Furthermore, $\sqrt{V_\delta}$ is the Minkowski functional of $\mathcal{P}$ \cite[Definition~3.2]{blanchini1999set}, and thus satisfies \eqref{eq:contract} with $\delta_{loc}$ (and thus $\overline{s})$ arbitrarily large.
Compactness of $\mathcal{P}$ implies that there exists a constant $c_{\delta,l}$, such that~\eqref{eq:bound} holds. 
The conditions~\eqref{eq:bound}, \eqref{eq:k_max} are satisfied with $c_{\delta,u}:=\max_i\|P_i\|^2$ and $\kappa_{\max} := \|K\|^2/c_{\delta,l}$.
The constants $c_j$ can be determined by the  following linear program (LP)
\begin{align}
\label{eq:c_j_polytope}
c_j=\max_{x\in\mathcal{P}} [L_{j,x}+L_{j,u}K]x.
\end{align}
The contraction $\rho$ is computed using $\min \rho~\text{s.t. } A_K\mathcal{P}\subseteq\rho\mathcal{P}$, which can be cast as an LP, compare \cite[Thm.~4.1]{blanchini1999set}.\\ 
\textbf{Part II. }
We show the second claim by using Proposition \ref{prop:w_tilde}.
Condition \eqref{eq:w_tilde_ass} is satisfied with
\begin{align*}
&\sqrt{V_\delta(z^++d_w, z^+,v^+)} = \max_i P_i \left[ A_{\theta} z + B_{\theta} v + Ed\right] \\
&\leq \max_{i,j} P_i[A_{\theta^{j}}z+B_{\theta^{j}}v]+\overline{d}_i=\tilde{w}(z,v).
\end{align*}
The inequality follows from the fact that $\sqrt{V_{\delta}}$ is linear in $\theta,$ and $d$, and thus attains its extreme value on a vertex. 
Similarly, the continuity condition in \eqref{eq:w_cont_set} is satisfied with $\sqrt{V_{\delta}(x,z,v)}=\max_i P_i\Delta x\leq c$, $\Delta x=x-z$ and
\begin{align}
\label{eq:w_L_theta_LPV}
\nonumber &\tilde{w}(z+\Delta x, v+ K \Delta x) \\
\nonumber=&\max_{i,j} P_i[A_{\theta^{j}}(z+\Delta x)+B_{\theta^{j}}(v+K\Delta x)]+\overline{d}_i\\
\nonumber= &\max_{i,j}P_i[A_{\theta^{j}}z+B_{\theta^{j}}v]+\overline{d}_i
+P_i(A_{\theta^j}+B_{\theta^j}K)\Delta x\\
\stackrel{\eqref{eq:w_LPV_poly}}{\leq} &\tilde{w}(z,v)+L_{w}c.
\end{align}
\textbf{Part III. }
The proof is similar to Prop.~\ref{prop:term}, but utilizes the the incremental Lyapunov function $V_{\delta}$ to construct the terminal region, which results in simpler conditions. 
A similar design has been proposed in~\cite{Koehler2019Adaptive}.
Conditions~\eqref{eq:term_dist_dec2},~\eqref{eq:term_contin2} are satisfied with $P_f$ based on the discrete-time Lyapunov equation~\eqref{eq:Lyap_disc}. 
Conditions~\eqref{eq:w_f}, \eqref{eq:term_s} are satisfied by definition, with $\overline{s}=\overline{w}=\infty$. 
The uncertainty satisfies
\begin{align}
\label{eq:poly_termProp_2}
&\max_{(x,s)\in\mathcal{X}_f}\tilde{w}_{\delta}(x,k_f(x),s)
=\max_{P_ix+s\leq 1/c_{\max}}\tilde{w}(x,k_f(x))+L_{w}s\nonumber\\
\stackrel{\eqref{eq:w_L_theta_LPV}}{\leq}&\max_{P_ix+s\leq 1/c_{\max}}\tilde{w}(0,0)+L_{w}({s}+\max_i P_ix)=\overline{d}+L_w/c_{\max}. 
\end{align}
For any $(x,s)\in\mathcal{X}_f$, the robust positive invariance condition~\eqref{eq:term_dist_RPI2} follows from
\begin{align}
\label{eq:term_linear3}
&s^+ + \max_i P_i(x^+ + d_w)\\
\leq& s^++\max_i P_i x^++\max_i P_i d_w\nonumber\\
\leq&\rho s-\rho^N w+\tilde{w}_{\delta}(x,k_f(x),s)+ \rho\max_i P_i x+\rho^N w\nonumber\\
\stackrel{\eqref{eq:poly_termProp_2}}{\leq} &\rho(s+\max_i P_i x)+\overline{d}+L_{w}/c_{\max}
\stackrel{\eqref{eq:poly_alpha_suff_2}}{\leq} 1/c_{\max}.\nonumber
\nonumber
\end{align}
Satisfaction of~\eqref{eq:term_dist_con2} follows from 
\begin{align*}
&g_j(x,k_f(x))+c_js=-1+(L_{j,x}+L_{j,u}K)x+c_js\\
\stackrel{\eqref{eq:lipschitz}}{\leq}& -1+c_{\max}({s}+\max_i P_ix)\stackrel{\eqref{eq:poly_term_set_new}}{\leq }0,
\end{align*}
with $c_j$ from~\eqref{eq:c_j_polytope}. 
\end{proof}
\begin{remark}
\label{rk:lpv_QP}
The function $\tilde{w}_{\delta}$ is such that the corresponding robust MPC optimization problem~\eqref{eq:MPC_real} is a quadratic program (QP), where the constraint~\eqref{eq:w_definition} can be formulated as $r\cdot p$ linear inequality constraints. 
The computational demand can be decreased by considering a simpler (but more conservative) function $\tilde{w}_{\delta}$. For example, with $\Theta=[-1,1]^q$ we can use
\begin{align}
\label{eq:poly_w_delta_simple}
\tilde{w}_{\delta}(z,v,c)&= \max_iP_i \sum_{j=1}^q (A_j z + B_j v)+ \overline{d}_i + L_w c ,\\
L_w &= \max_i\max_{\Delta x\in \mathcal{P}} P_i \sum_{j=1}^q(A_j+B_jK)\Delta x, \nonumber
\end{align}
which can be implemented with $r$ linear inequality constraints. 
Thus, in general the design of $\tilde{w}_{\delta}$ includes a degree of freedom that can be utilized to trade off computational complexity against conservatism, compare also~\cite{Koehler2019Adaptive}. 
The results in Proposition~\ref{prop:LPV_simple} can also be formulated with a quadratic incremental Lyapunov function $V_{\delta}(x,z,v)=\|x-z\|_P^2$, however, the construction of $\tilde{w}_{\delta}$ then typically requires either conservative overapproximations or results in a quadratically constrained QP (QCQP). 
\end{remark}
\begin{remark}
\label{remark:L_theta}
The construction of the terminal ingredients uses condition~\eqref{eq:poly_alpha_suff_2}, which requires $\rho+L_{w}<1$ in addition to a bound on the additive disturbances $\overline{d}$ w.r.t. the size of the constraint set (as characterized by $c_{\max}$). 
The condition $\rho+L_{w}<1$ is a natural condition for LPV systems in the considered setup, as
it ensures that $A_\theta+B_\theta K$ is stable with the common Lyapunov function $V=\max_i P_ix$. 
The definition of $\tilde{w}_\delta$ in Proposition \ref{prop:LPV_simple} implies
\begin{align*}
s_{N|t} \geq  \sum_{j=0}^{N-1} (L_w+\rho)^{N-j-1} \tilde{w}(x_{j|t}, u_{j|t}). 
\end{align*}
Thus, $\rho+L_{w}<1$ also ensures that the tube size $s$ does not increase arbitrarily. 
This is in contrast to the competing approaches~\cite{marruedo2002input,limon2005robust,pin2009robust}, where the tube size $s$ can grow exponentially with the prediction horizon $N$, compare the numerical example in Section~\ref{sec:num}. 
\end{remark}
\begin{remark}
\label{rk:lpv_nonlin_adaptive}
The presented design can be also applied to nonlinear systems affine in parameters $\theta$ with quadratic or polytopic incremental Lyapunov functions $V_{\delta}$ (Prop.~\ref{prop:w_tilde_2}), compare also the numerical example (Sec.~\ref{sec:num}). 
The simple condition~\eqref{eq:poly_alpha_suff_2} is, however, not necessarily suitable for more general incremental Lyapunov functions $V_{\delta}$, since the proof exploits the triangular inequality in~\eqref{eq:term_linear3}.  
Recently, the presented method has been extended to adaptive MPC with online estimated parameters $\theta$ in~\cite{kohler2019robust}.
\end{remark}

\begin{remark}
\label{rk:lpv_comparison}
For the special case of LPV systems, the proposed approach can be viewed as a simplified (and hence computationally more efficient) version of existing methods~\cite{fleming2015robust,houska2016short}, \cite{feng2019min}, \cite[Chap.~5]{kouvaritakis2016model}.   
In particular, with a polytopic tube $\sqrt{V_{\delta}(x,z,v)}=\max_i P_i(x-z)$ the approach is similar to~\cite{fleming2015robust}.
The main simplification is that we characterize the tube and contractivity of this polytope with scalars $w,\rho$, instead of treating each vertex/halfspace of the polytope separately. 

Similarly, in case of ellipsoidal tubes $V_{\delta}(x,z,v)=\|x-z\|_P^2$, we can formulate a simplified version of~\cite{houska2016short}. 
Here, the simplification (and thus reduction in the computational demand) is more pronounced, as we use a scalar $s$ to parameterize the tube instead of optimizing matrices $P\in\mathbb{R}^{n\times n}$ online, compare also Remark~\ref{remark:comput}. 

While these approximations can introduce some conservatism, the resulting simplicity is also the main benefit of the proposed approach. 
In particular, the considered formulation is equally applicable to linear and nonlinear systems, additive disturbance, parametric uncertainty and general nonlinear mixed uncertainty. 
The description with $\tilde{w}_{\delta}$ makes it also easy to use further approximations, such as~\eqref{eq:poly_w_delta_simple}, to reduce the complexity and thus allow an application to higher dimensional systems.
The conservatism and computational complexity will be further explored in the numerical example in Section~\ref{sec:num}. 
A quantitative comparison for linear systems with a polytopic tube can be found in~\cite{Koehler2019Adaptive}. 
\end{remark}

\section{Case study: nonlinear quadrotor}
\label{sec:num}
The following example details how the proposed approach can also be applied to uncertain nonlinear \textit{continuous time} system, compare Appendix~\ref{app:cont}). 
Furthermore, we demonstrate the computational efficiency and performance relative to existing approaches using min-max differential inequalities~\cite{villanueva2017robust,hu2018real} and Lipschitz bounds~\cite{marruedo2002input,pin2009robust}.
In addition, we showcase that considering state and input dependent uncertainty reduces the conservatism compared to simple constant bounds on the uncertainty.
The offline and online computation is done using SeDuMi-1.3~\cite{sturm1999using} and CasADi~\cite{andersson2019casadi}, respectively. 
An additional example with robust collision avoidance for autonomous vehicles using a non-quadratic incremental Lyapunov function (c.f.~\cite{koehler2020nonlinear}) can be found in~\cite{Soloperto2019Collision}. 
A recent experimental implementation for robust collision avoidance in robotics can be found in~\cite{nubert2020safe}. 
\subsubsection*{System model}
We consider the following continuous-time 10-state quadrotor model \cite{hu2018real}
\begin{align*}
\dot{x}_1=&v_1+0.1{w}_1,\quad \dot{v}_1=g\tan(\phi_1),\\
\dot{x}_2=&v_2+0.3{w}_2,\quad \dot{v}_2=g\tan(\phi_2),\\
\dot{x}_3=&v_3+0.5w_3,\quad \dot{v}_3=-g+k_Tu_3,\\
\dot{\phi}_1=&-d_1\phi_1+\omega_1,\quad \dot{\omega}_1=-d_0\phi_1+n_0u_1,\\
\dot{\phi}_2=&-d_1\phi_2+\omega_2,\quad \dot{\omega}_2=-d_0\phi_2+n_0u_2,\\
x=&\begin{pmatrix}
x_1,x_2,x_3,v_1,v_2,v_3,\phi_1,\omega_1,\phi_2,\omega_2
\end{pmatrix}^\top\in\mathbb{R}^{10},\\
u=&\begin{pmatrix}
u_1,u_2,u_3
\end{pmatrix}^\top\in\mathbb{R}^3,
\end{align*}
 where $(x_1,x_2,x_3)$ are the position, $(v_1,v_2,v_3)$ are the velocities, $(\phi_1,\phi_2)$ denote the pitch and roll, $(\omega_1,\omega_2)$ the pitch and roll rates, and $(u_1,u_2,u_3)$ are the adjustable pitch angle, roll angle and the vertical thrust. 
The parameters are 
$d_0=10,~d_1=8,~n_0=10,~k_T=0.91,~g=9.8,$
and the constraint set is
\begin{align*}
\mathcal{Z}=\{x_1\leq 4,~|\phi_{i}|\leq \pi/4,~|u_{1,2}|\leq \pi/9,~u_3\in[0,2g]\},
\end{align*}
which can be written using $L_{j,x},~L_{j,u}$, compare~\eqref{eq:con_polytope}.
As in~\cite{hu2018real}, we  consider  the problem of stabilizing the steady-state $x_{r}=[3,3,10,0_7]^\top$ with the quadratic stage cost $\ell(x,u)=\|x-x_{r}\|_Q^2$, $Q=\text{diag}[1,1,1,0_7]$ and the initial state $x_0=[0_2,2,0_7]^\top$.

\subsubsection*{Additive disturbances}
We first consider quadratically constrained additive disturbances $w=[w_1,w_2,w_3]^\top$, with $\|w\|^2\leq 1$, as in~\cite{hu2018real}. 
For simplicity, we compute a quadratic incremental Lyapunov function $V_{\delta}(x,z)=\|x-z\|_P^2$, and linear feedback $\kappa(x)=Kx$ offline, such that  $f_{\kappa}(x,v)=f(x,v+\kappa(x))$ is incrementally exponentially stable (Ass.~\ref{ass:contract}, Ass.~\ref{ass:contract_cont}) and the sublevel set $V_{\delta}(x,x_r)\leq 1$ ensures $(x,\kappa(x)+u_r)\in\mathcal{Z}$ and is robust positively invariant for the perturbed dynamics $f_{\kappa}(x,v)+Ew$. 
To this end, we parameterize the Jacobian 
\begin{align*}
\left[\dfrac{\partial f}{\partial x}\right]_{(x,u)}=&A_0+\sum_{i=1}^2\theta_{nl,i}(x) A_i=A_{\theta_{nl}},\quad \dfrac{\partial f}{\partial u}=B,\\
\theta_{i,nl}(x)=&\tan(\phi_i)^2\in\Theta_{i,nl}=[0,\tan^2(\pi/4)]. 
\end{align*}
The desired conditions (Ass.~\ref{ass:contract}) can be ensured by solving the following linear matrix inequalities (LMIs) offline
\begin{align}
\label{eq:LMI_RPI}
\min_{X,Y}&-\log\det(X)\\
\text{s.t. }&
A_{\theta_{nl,i}}X+BY+(A_{\theta_{nl,i}}X+BY)^\top +2\rho_c X\leq 0,\nonumber\\
&\begin{pmatrix}
A_{\theta_{nl,i}}X+BY+(A_{\theta_{nl,i}}X+BY)^\top +\lambda X&E\\
* &-\lambda I_3
\end{pmatrix}\leq 0,\nonumber\\
&~ \theta_{nl,i}\in\text{vert}(\Theta_{1,nl}\times\Theta_{2,nl}),~i=1,\dots,2^2,\nonumber\\
&\begin{pmatrix}
1&L_{j,x}X+L_{j,u}Y\\
*&X
\end{pmatrix}\geq 0, ~j=1,\dots,p,\nonumber
\end{align}
with $P=X^{-1}$, $K=YP$, and $\theta_{nl,i}$ the $2^2$ vertices, compare~\cite{boyd1994linear}\cite[Prop.~5]{koehler2020nonlinear}. 
The constant $\lambda\geq 0$ is due to the application of the S-procedure and can be computed using bi-section. 
The resulting constants (Ass.~\ref{ass:contract}, Prop.~\ref{prop:contract}, \eqref{eq:w_delta_tilde_cont}) are $\lambda=0.1084$, $\rho_c=0.192$, $\delta_{loc}=1$, $c_{\delta,l}=0.016$, 
$c_{\delta,u}=23.63$, $\max_{j}c_j=c_{\max}= 1$. 
In the following, we only consider the pre-stabilized dynamics $f_{\kappa}(x,v)$.  
The continuous-time disturbance bound~\eqref{eq:w_delta_tilde_cont} is given by 
$\overline{w}_c=\max_{\|w\|^2\leq 1}\|E w\|_P= 0.1646$. 

The prediction horizon is set to $T=3~s$ and we consider piece-wise constant input signals $v_{\cdot|t}$ with a sampling time of $h=0.3~s$ ($N=T/h=10$).  
In closed-loop operation, we  apply the input  $u_{\tau}=v^*_{0|t}+\kappa(x_{\tau})$,~$\tau\in[t,t+h]$. 
The discrete-time prediction model is defined with the $4th$ order Runge Kutta method and the step size $h$. 
For simplicity, we only consider the constraints at the sampling instances $\tau=hk$, $k\in\mathbb{N}$. 
The corresponding discrete-time contraction rate is $\rho=e^{-\rho_c h}=0.944$ and the discrete-time disturbances bound~\eqref{eq:w_max_add} is $\overline{w}=\overline{w}_c(1-e^{-\rho_c h})/\rho_c=0.48$. 
As in Section~\ref{sec:additive}, the tube size is given by $s_{\tau|t}=(1-e^{-\rho_c \tau})/\rho_c \overline{w}_c$, with $s_{T|t}=\overline{s}_f=(1-e^{-\rho_c T})/\rho_c\overline{w}_c=\dfrac{1-\rho^N}{1-\rho}\overline{w}=0.37$.

The terminal set is chosen as $\mathcal{X}_f=\{x|~V_{\delta}(x,x_r)\leq \gamma^2\}$, with $\gamma=1/c_{\max}-\overline{s}_f=0.63$ (c.f. Prop.~\ref{prop:LPV_simple}), which satisfies the RPI condition~\eqref{eq:term_dist_RPI} and thus Assumption~\ref{ass:term}. 
The terminal cost $V_f(x)=\|x-x_r\|_{P_f}^2$ is computed based on the linearized model and the Lyapunov equation, such that~\eqref{eq:term_dist_dec} holds with $v=k_f(x)=0$.

We would like to point out that the constants $\rho,~\overline{w}$ can, as done here, be obtained based on continuous-time constants $\rho_c,~\overline{w}_c$. Alternatively, given a fixed sampling time $h$ and a corresponding discrete-time model, the constants can be directly computed. One of the advantages of considering the continuous-time formulation is that the Jacobians $A_{\theta_{nl}},~B$ used in the LMI computation are significantly simpler.

\subsubsection*{Closed-loop simulations}
The corresponding open-loop prediction $x^*_{k|t}$ with the implicitly defined tube $\mathbb{X}_{k|t}=\{x|~V_{\delta}(x,x^*_{k|t})\leq s^2_{k|t}\}$ and an exemplary closed-loop trajectory can be seen in Figure~\ref{fig:quad_add}.

\subsubsection*{Discussion}
The same example\footnote{%
Compared to~\cite{hu2018real}, we introduced a constraint on $\phi_i$, since the system is not well defined for $\phi_{i}=\pm\pi/2$. 
In the nominal case ($w=0$), we were unable to find a feasible solution with the posed constraints and a prediction horizon of $T=1.2~s$ (which is considered in~\cite{hu2018real}). 
}   has been considered in~\cite{hu2018real} with a different robust MPC approach~\cite{villanueva2017robust}, compare Remark~\ref{remark:comput}. 
The crucial difference between the two approaches is that we compute the incremental Lyapunov function $V_{\delta}$ and thus the constraint tightening offline, while the method in~\cite{villanueva2017robust} computes the incremental Lyapunov function $V_{\delta}$ online. 
Although in this example we can consider an equal magnitude of disturbances, in general the approach presented in this paper may be more conservative compared to the approach in~\cite{villanueva2017robust}. 
This conservatism is (partially) due to the difference between the offline chosen incremental Lyapunov function $V_{\delta}$ and the online optimized one in~\cite{villanueva2017robust}.
On the other hand, in case  of constantly bounded additive disturbances, the online computational demand of the proposed approach is equivalent to nominal MPC, while the online computation in~\cite{villanueva2017robust,hu2018real} is increased by a factor of 500 compared to the nominal implementation, compare\footnote{%
In~\cite{hu2018real} it is reported that the average computational time per real time iteration (RTI) with the nominal certainty-equivalent MPC is $159~\mu s$, and $81.6~ms$ with the tube MPC, thus increasing the computational demand by $81.6ms/159\mu s\approx 500$.
}~\cite{hu2018real}. 
This is due to the fact that the proposed approach requires \textit{no} additional input or state variable in case of constantly bounded additive disturbances (compare Section~\ref{sec:additive}), while in~\cite{hu2018real}, $(n^2+n)/2=55$ additional state variables and $n\cdot m=30$ additional decision/input variables are needed.

If a Lipschitz-based approach is considered ($P_{\delta}=I_n$, $K=0$, compare~\cite{marruedo2002input}), we have Lipschitz constant $L=1.8$, $\overline{w}=0.15$ and thus for $N\geq 3$ the predicted tube is larger than the constraint set or, alternatively, for $N=10$ the magnitude of the disturbances has to be reduced by $99\%$ to ensure that the tube $s$ is contained in the constraint set $\mathcal{Z}$ (independent of the terminal region). 

Thus, in the considered scenario the proposed approach has a computational complexity which is equivalent to~\cite{marruedo2002input} and significantly reduced compared to~\cite{villanueva2017robust}, while the conservatism is similar to~\cite{villanueva2017robust} and significantly reduced compared to~\cite{marruedo2002input}.

\subsubsection*{Parametric uncertainty}
In the following, the parameters $n_0,~d_0$ are subject to a (possibly time-varying) uncertainty of $\pm 10\%$ and we neglect the  additive disturbances $(w=0)$.  
The uncertain nonlinear system dynamics can be written as 
\begin{align*}
\dot{x}=&f(x,u)+A_{\theta_{par}}x+B_{\theta_{par}}u,\\
A_{\theta_{par}}=&\sum_{i=1}^2\theta_{par,i}A_{par,i},~B_{\theta_{par}}=\sum_{i=1}^2 B_{par,i},~ \theta_{par}\in[-1,1]^2. 
\end{align*}
Again, we compute a simple quadratic incremental Lyapunov function $V_{\delta}(x,z)=\|x-z\|_P^2$ and a linear feedback $\kappa(x)=Kx$ offline. 
Similar to the discrete-time LPV design in Prop.~\ref{prop:LPV_simple}, the function\footnote{%
For the numerical implementation, we use $\|x\|_P\approx\sqrt{x^\top Px+\epsilon}$, with $\epsilon=10^{-4}$ and implement $\max_i$ using $4$ inequality constraints. }    
\begin{align}
\label{eq:w_tilde_quadrotor}
\tilde{w}_{\delta}(x,v,s)=&\max_i \|A_{\theta_{par,i}}x+B_{\theta_{par,i}}(v+Kx)\|_P+L_{w}s,\nonumber\\
L_{w}=&\max_i \|P^{1/2}(A_{\theta_{par,i}}+B_{\theta_{par,i}}K)P^{-1/2}\|,
\end{align}
satisfies the continuous-time condition~\eqref{eq:w_delta_tilde_cont} with the vertices $\theta_{par,i}\in\{-1,1\}^2$, compare Remark~\ref{rk:lpv_QP}. 
Note that due to the symmetry in $\theta_{par}$, the function $\tilde{w}_{\delta}$ in~\eqref{eq:w_tilde_quadrotor} can be implemented using only $2$ inequality constraints with $\theta_{par,i}\in\{(1,1);(1,-1)\}$. 
For the considered approach $\rho_c>L_{w}$ is crucial (Prop.~\ref{prop:LPV_simple}, Remark~\ref{remark:L_theta}). 
Thus, we compute the matrices $P,K$ similar to~\eqref{eq:LMI_RPI} by enforcing $\rho_c\geq 0.4$ and $L_{w}\leq 0.3$, with the following additional constraint
\begin{align*}
\begin{pmatrix}
L_{w}X&A_{\theta_{par}}X+B_{\theta_{par,i}}Y\\
(A_{\theta_{par}}X+B_{\theta_{par,i}}Y)^\top &L_{w}X 
\end{pmatrix}\geq 0. 
\end{align*}
We choose the terminal set
 $\mathcal{X}_f=\{(x,s)\in\mathbb{R}^{n+1}|~s\in[0,\overline{s}_f],~V_{\delta}(x,x_r)\leq \gamma^2\}$, with $\overline{s}_f=\frac{L_{w}}{c_{\max}\rho_c}=0.75$, $\gamma=\frac{\rho_c-L_{w}}{c_{\max}\rho_c}=0.25$ (compare Prop.~\ref{prop:term}--\ref{prop:LPV_simple}).

\begin{remark}
\label{rk:example_cont_time} (Continuous-time formulations)
The simplicitiy of the function~\eqref{eq:w_tilde_quadrotor} shows the benefit of considering the \textit{continuous-time} model.
In particular, as detailed in Prop.~\ref{prop:LPV_simple}, we can consider the same kind of function in a discrete-time formulation, if we have a discrete-time model affine in the parameters. 
However, while the continuous-time system is affine in the uncertain parameters ($d_0,n_0$), the non-trivial discretization (in this case 4th order Runge Kutte) results in a discrete-time model which is nonlinear in the uncertain parameters.
Thus, in case that a physics-based continuous-time model is considered, it may be easier to also formulate the design of $\tilde{w}_{\delta}$ in continuous-time (App.~\ref{app:cont}). 
This is also the reason why research in robust adaptive MPC (dealing with parametric uncertainty) is typically formulated in continuous-time, compare e.g.~\cite{adetola2011robust}.

The only drawback of continuous-time formulations is the needed discretization for the continuous-time tube predictions in~\eqref{eq:MPC_cont_s}, \eqref{eq:MPC_cont_w}, compare Remark~\ref{rk:implement} in Appendix~\ref{app:cont} for details on possible implementations. 
To allow for a simple implementation, we assume that $(x,u)$ are constant in the sampling interval $h$ (which is clearly an approximation). 
Thus, the (piece-wise linear) continuous-time dynamic $\dot{s}=(L_w-\rho_c)s+\tilde{w}_{\delta}(x,u,0)$ in~\eqref{eq:MPC_cont_s} can be exactly discretized resulting in 
\begin{align*}
s^+=e^{-(\rho_c-L_w)h}s+\frac{1-e^{-(\rho_c-L_w)h}}{\rho_c-L_w}\tilde{w}_{\delta}(x,u,0),
\end{align*}
 which has been used in the implementation. 
In the considered example, this formula is almost equivalent to a simple Euler discretization of the tube dynamics with $s^+=(1-h\rho_c )s+h\tilde{w}_{\delta}(x,u,s)$. 
\end{remark}

Due to the significant increased uncertainty in the pitch/roll dynamics, we now require a larger prediction horizon of $T=4.5~s$ ($N=15$) to find a safe trajectory. 
The (continuous-time) disturbance bound is given by $\overline{w}_c=\gamma\rho_ce^{\rho_c T}= 0.6$, which satisfies the conditions on the terminal region.  
The resulting predicted trajectory can be seen in Figure~\ref{fig:quad_add}.  
The constraints $w_{\tau|t}\leq \overline{w}$ and $s_{T|t}\leq \overline{s}_f$ implicitly constrain the optimized trajectory to avoid maneuvers which are too risky (in terms of the uncertainty), while the constraint tightening enforces a safe distance to critical constraints commensurate with the predicted uncertainty $s_{\cdot|t}$. 
The combination of these features ensures that the MPC acts cautiously and ensures constraint satisfaction despite uncertainty. 
In particular, the peak values of $\phi_1,\phi_2,u_1,u_2$ corresponding to the parametric uncertainty are reduced by $30-70\%$ compared to the first scenario without parametric uncertainty. Alternatively, if one would like to use a constant bound (Sec.~\ref{sec:additive}) with the considered terminal region (instead of considering the state and input dependency of the uncertainty), this would require $w_{\tau|t}\leq 0.36$, which is ensured if either the maximal allowed values of $\phi_1,\phi_2,u_1,u_2$ or the parameter uncertainty in $d_0,n_0$ is reduced by $85\%$. Thus, the more detailed state and input dependent characterization of the uncertainty significantly reduces the conservatism of the robust MPC. 

In this example, the proposed nonlinear robust MPC scheme requires roughly $1/m\approx33\%$ more decision variables and $2\cdot N$ additional nonlinear inequality constraints~\eqref{eq:w_tilde_quadrotor}, resulting in roughly $4.5$ times as large computation time compared to a nominal MPC scheme. 
By replacing~\eqref{eq:w_tilde_quadrotor} with the simpler and more conservative formula
\begin{align*}
\tilde{w}_{\delta}(x,v,s):=&\|A_{\theta_{par,1}}x\|_P+\|B_{\theta_{par,1}}(v+Kx)\|_P+L_{w,2}s,\\
L_{w,2}:=& \|P^{1/2}A_{\theta_{par,1}}P^{-1/2}\|+\|P^{1/2}B_{\theta_{par,1}}KP^{-1/2}\|,
\end{align*}
we can reduce the online computational demand by $32\%$, at the cost of increased conservatism, compare Remark~\ref{rk:lpv_QP}. 

Similar to the first scenario, Lipschitz based approaches~\cite{pin2009robust} ($P_{\delta}=I_n,~K_\delta=0$) cannot be applied since the propagated tube size  becomes overly large over the prediction horizon with $L_d^N\geq 6\cdot10^3$. 

\begin{figure}[hbtp]
\begin{center}
\includegraphics[width=0.5\textwidth]{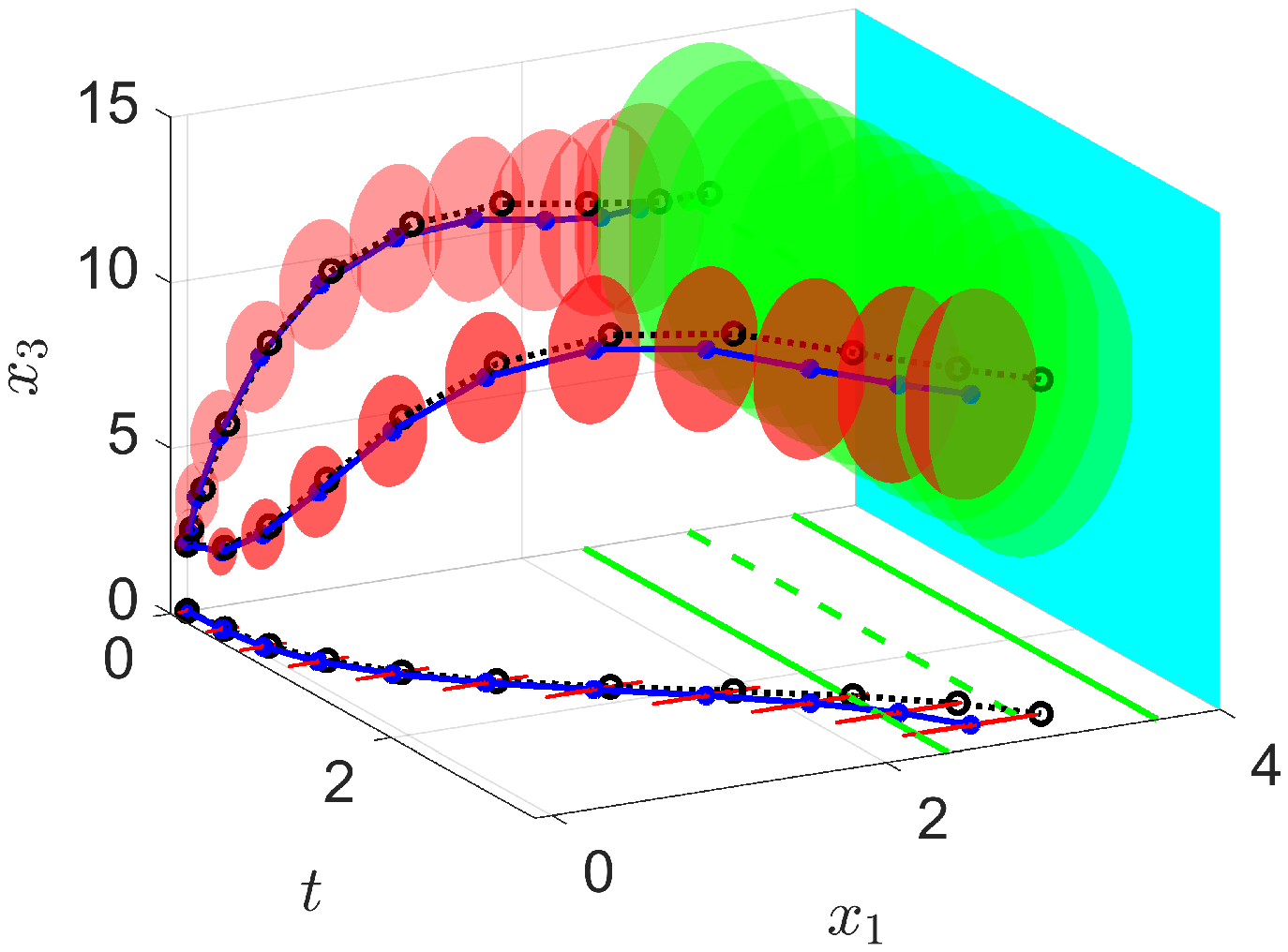}\\
\includegraphics[width=0.5\textwidth]{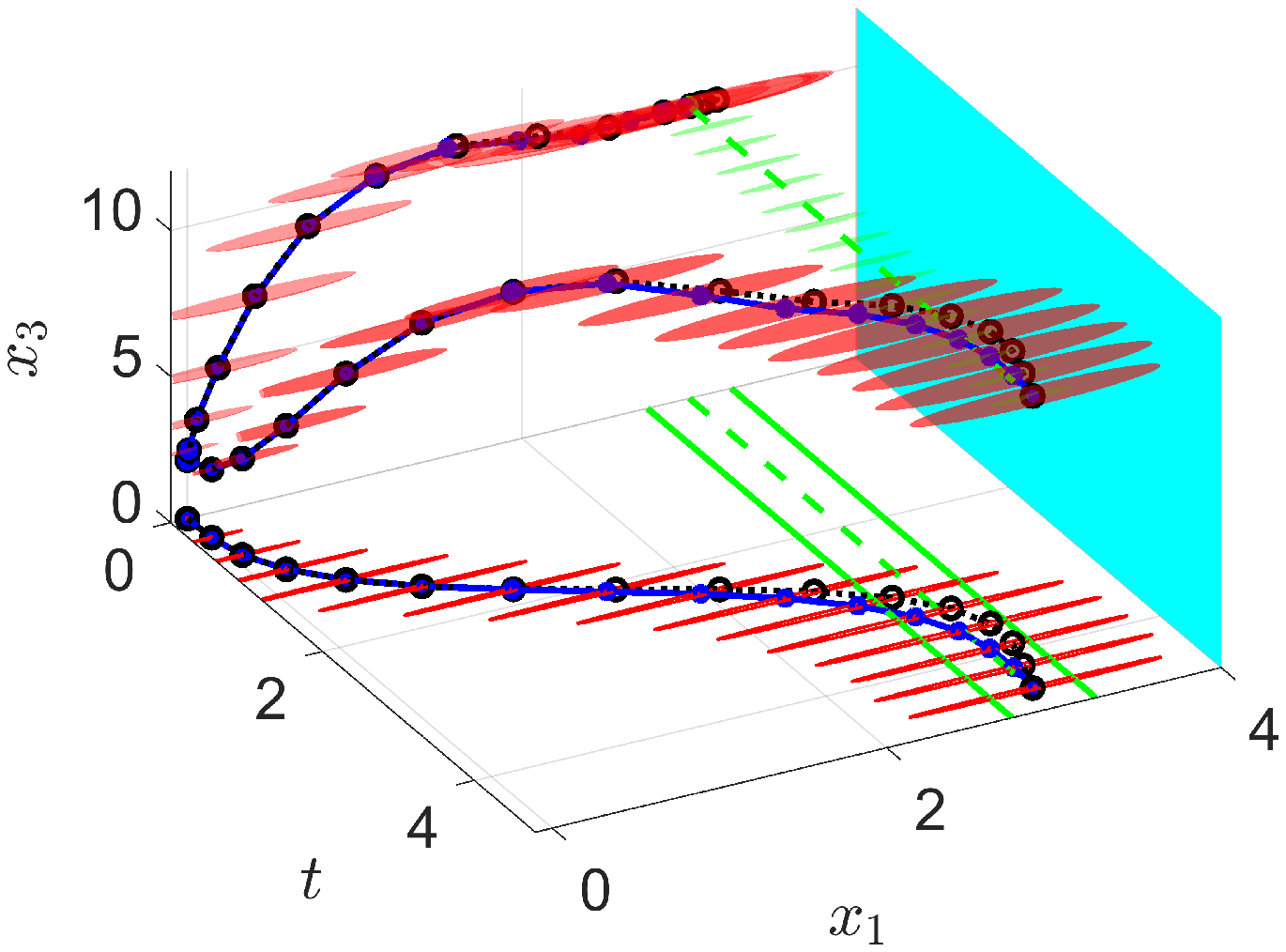}
\end{center}
\caption{Top: Additive disturbances. Bottom: Parametric uncertainty. Initially predicted trajectory $x^*_{\cdot|0}$ (blue, solid), uncertainty tube $\mathbb{X}^*_{\cdot|0}$ (red ellipse), terminal set $\mathcal{X}_f$ (green ellipse), state constraint ($x_1\leq 4$, turquoise plane) and an exemplary closed-loop trajectory $x_t$ (black, dotted). The projections on the $t$--$x_1$ and $x_1$--$x_3$ plane are also shown for illustration. }
\label{fig:quad_add}
\end{figure}
To summarize, in the considered numerical example we have demonstrated: (i) the computational efficiency relative to~\cite{villanueva2017robust,hu2018real}, (ii)  significantly reduced conservatism compared to~\cite{marruedo2002input,pin2009robust} and comparable conservatism to~\cite{villanueva2017robust,hu2018real}, (iii) reduced conservatism using a state and input dependent uncertainty characterization, (iv) the possibility to trade off conservatism and computational complexity in the design of $\tilde{w}_\delta$. 

\section{Conclusion}
\label{sec:sum}
We have presented a nonlinear robust tube MPC framework based on incremental stabilizability that ensures robust constraint satisfaction and recursive feasibility despite disturbances and uncertainty. 
The scheme is applicable to \textit{nonlinear systems}, can incorporate \textit{general state and input dependent uncertainty descriptions} and is \textit{easy} to implement. 
Furthermore, the framework allows for an intuitive trade-off between computational demand and conservatism by using more detailed, and hence more complex, descriptions of the uncertainty $\tilde{w}_{\delta}$.  
We have demonstrated the applicability of the proposed  framework in comparison to state of the art robust MPC approaches with a nonlinear benchmark example. 

Recent extensions of the proposed robust MPC framework to adaptive MPC  and stochastic MPC with chance constraints  can be found in~\cite{kohler2019robust} and \cite{schluter2020constraint}, respectively. 
Current research focuses on extending this framework to output feedback~\cite{Kohler2019Output}.

\bibliographystyle{ieeetran}  
\bibliography{Literature}  

\begin{thebibliography}{10}
\providecommand{\url}[1]{#1}
\csname url@samestyle\endcsname
\providecommand{\newblock}{\relax}
\providecommand{\bibinfo}[2]{#2}
\providecommand{\BIBentrySTDinterwordspacing}{\spaceskip=0pt\relax}
\providecommand{\BIBentryALTinterwordstretchfactor}{4}
\providecommand{\BIBentryALTinterwordspacing}{\spaceskip=\fontdimen2\font plus
\BIBentryALTinterwordstretchfactor\fontdimen3\font minus
  \fontdimen4\font\relax}
\providecommand{\BIBforeignlanguage}[2]{{%
\expandafter\ifx\csname l@#1\endcsname\relax
\typeout{** WARNING: IEEEtran.bst: No hyphenation pattern has been}%
\typeout{** loaded for the language `#1'. Using the pattern for}%
\typeout{** the default language instead.}%
\else
\language=\csname l@#1\endcsname
\fi
#2}}
\providecommand{\BIBdecl}{\relax}
\BIBdecl

\bibitem{koehler2020robust}
J.~K\"{o}hler, R.~Soloperto, M.~A. M{\"u}ller, and F.~Allg{\"o}wer, ``A
  computationally efficient robust model predictive control framework for
  uncertain nonlinear systems,'' \emph{IEEE Transactions on Automatic Control},
  2020.

\bibitem{rawlings2017model}
J.~B. Rawlings, D.~Q. Mayne, and M.~Diehl, \emph{Model Predictive Control:
  Theory, Computation, and Design}.\hskip 1em plus 0.5em minus 0.4em\relax Nob
  Hill Publishing, 2017.

\bibitem{kouvaritakis2016model}
B.~Kouvaritakis and M.~Cannon, \emph{Model predictive control}.\hskip 1em plus
  0.5em minus 0.4em\relax Springer, 2016.

\bibitem{grune2017nonlinear}
L.~Gr{\"u}ne and J.~Pannek, \emph{Nonlinear Model Predictive Control}.\hskip
  1em plus 0.5em minus 0.4em\relax Springer, 2017.

\bibitem{raimondo2009min}
D.~M. Raimondo, D.~Limon, M.~Lazar, L.~Magni, and E.~F. Camacho, ``Min-max
  model predictive control of nonlinear systems: A unifying overview on
  stability,'' \emph{European J.\ of Control}, vol.~15, pp. 5--21, 2009.

\bibitem{calafiore2013robust}
G.~C. Calafiore and L.~Fagiano, ``Robust model predictive control via scenario
  optimization,'' \emph{IEEE Transactions on Automatic Control}, vol.~58,
  no.~1, pp. 219--224, 2013.

\bibitem{schildbach2014scenario}
G.~Schildbach, L.~Fagiano, C.~Frei, and M.~Morari, ``The scenario approach for
  stochastic model predictive control with bounds on closed-loop constraint
  violations,'' \emph{Automatica}, vol.~50, no.~12, pp. 3009--3018, 2014.

\bibitem{lorenzen2017stochastic}
M.~Lorenzen, F.~Dabbene, R.~Tempo, and F.~Allg{\"o}wer, ``Stochastic {MPC} with
  offline uncertainty sampling,'' \emph{Automatica}, vol.~81, pp. 176--183,
  2017.

\bibitem{hewing2019cautious}
L.~Hewing, J.~Kabzan, and M.~N. Zeilinger, ``Cautious model predictive control
  using gaussian process regression,'' \emph{IEEE Transactions on Control
  Systems Technology}, 2019.

\bibitem{yu2014inherent}
S.~Yu, M.~Reble, H.~Chen, and F.~Allg{\"o}wer, ``Inherent robustness properties
  of quasi-infinite horizon nonlinear model predictive control,''
  \emph{Automatica}, vol.~50, pp. 2269--2280, 2014.

\bibitem{allan2017inherent}
D.~A. Allan, C.~N. Bates, M.~J. Risbeck, and J.~B. Rawlings, ``On the inherent
  robustness of optimal and suboptimal nonlinear {MPC},'' \emph{Systems \&
  Control Letters}, vol. 106, pp. 68--78, 2017.

\bibitem{grimm2004examples}
G.~Grimm, M.~J. Messina, S.~E. Tuna, and A.~R. Teel, ``Examples when nonlinear
  model predictive control is nonrobust,'' \emph{Automatica}, vol.~40, no.~10,
  pp. 1729--1738, 2004.

\bibitem{chisci2001systems}
L.~Chisci, J.~A. Rossiter, and G.~Zappa, ``Systems with persistent
  disturbances: predictive control with restricted constraints,''
  \emph{Automatica}, vol.~37, pp. 1019--1028, 2001.

\bibitem{mayne2005robust}
D.~Q. Mayne, M.~M. Seron, and S.~Rakovi{\'c}, ``Robust model predictive control
  of constrained linear systems with bounded disturbances,'' \emph{Automatica},
  vol.~41, pp. 219--224, 2005.

\bibitem{marruedo2002input}
D.~Limon, T.~Alamo, and E.~Camacho, ``Input-to-state stable {MPC} for
  constrained discrete-time nonlinear systems with bounded additive
  uncertainties,'' in \emph{Proc.\ 41st IEEE Conf.\ Decision and Control
  (CDC)}, 2002, pp. 4619--4624.

\bibitem{limon2005robust}
D.~Limon, J.~Bravo, T.~Alamo, and E.~Camacho, ``Robust {MPC} of constrained
  nonlinear systems based on interval arithmetic,'' \emph{IEE
  Proceedings-Control Theory and Applications}, vol. 152, no.~3, pp. 325--332,
  2005.

\bibitem{yu2013tube}
S.~Yu, C.~Maier, H.~Chen, and F.~Allg{\"o}wer, ``Tube {MPC} scheme based on
  robust control invariant set with application to lipschitz nonlinear
  systems,'' \emph{Systems \& Control Letters}, vol.~62, pp. 194--200, 2013.

\bibitem{DynamicTube_Lopez_19}
B.~T. Lopez, J.-J.~E. Slotine, and J.~P. How, ``Dynamic tube {MPC} for
  nonlinear systems,'' in \emph{Proc. American Control Conf.\ (ACC)}, 2019, pp.
  1655--1662.

\bibitem{bayer2013discrete}
F.~Bayer, M.~B\"urger, and F.~Allg\"ower, ``Discrete-time incremental {ISS}: A
  framework for robust {NMPC},'' in \emph{Proc.\ European Control Conf.\
  (ECC)}, 2013, pp. 2068--2073.

\bibitem{singh2017robust}
S.~Singh, A.~Majumdar, J.-J. Slotine, and M.~Pavone, ``Robust online motion
  planning via contraction theory and convex optimization,'' in \emph{Proc.
  Int.\ Conf.\ on Robotics and Automation (ICRA)}, 2017, pp. 5883--5890.

\bibitem{wang2019differential}
R.~Wang and J.~Bao, ``A differential {L}yapunov-based tube {MPC} approach for
  continuous-time nonlinear processes,'' \emph{Journal of Process Control},
  vol.~83, pp. 155--163, 2019.

\bibitem{villanueva2017robust}
M.~E. Villanueva, R.~Quirynen, M.~Diehl, B.~Chachuat, and B.~Houska, ``Robust
  {MPC} via min--max differential inequalities,'' \emph{Automatica}, vol.~77,
  pp. 311--321, 2017.

\bibitem{mayne2011tube}
D.~Q. Mayne, E.~C. Kerrigan, E.~Van~Wyk, and P.~Falugi, ``Tube-based robust
  nonlinear model predictive control,'' \emph{Int.\ J.\ Robust and Nonlinear
  Control}, vol.~21, pp. 1341--1353, 2011.

\bibitem{majumdar2017funnel}
A.~Majumdar and R.~Tedrake, ``Funnel libraries for real-time robust feedback
  motion planning,'' \emph{The Int.\ J.\ of Robotics Research}, vol.~8, pp.
  947--982, 2017.

\bibitem{manchester2017dirtrel}
Z.~Manchester and S.~Kuindersma, ``Dirtrel: Robust trajectory optimization with
  ellipsoidal disturbances and {LQR} feedback,'' \emph{Robotics: Science and
  Systems (RSS), Cambridge, MA}, 2017.

\bibitem{koehler2020nonlinear}
J.~K{\"o}hler, M.~A. M{\"u}ller, and F.~Allg{\"o}wer, ``A nonlinear model
  predictive control framework using reference generic terminal ingredients,''
  \emph{IEEE Transactions on Automatic Control}, 2020, extended preprint on
  arXiv: arXiv:1909.12765.

\bibitem{fleming2015robust}
J.~Fleming, B.~Kouvaritakis, and M.~Cannon, ``Robust tube {MPC} for linear
  systems with multiplicative uncertainty,'' \emph{IEEE Trans. Autom. Control},
  vol.~60, no.~4, pp. 1087--1092, 2015.

\bibitem{houska2016short}
B.~Houska, A.~Mohammadi, and M.~Diehl, ``A short note on constrained linear
  control systems with multiplicative ellipsoidal uncertainty,'' \emph{IEEE
  Trans. Autom. Control}, vol.~61, no.~12, pp. 4106--4111, 2016.

\bibitem{feng2019min}
X.~Feng, H.~Hu, M.~E. Villanueva, and B.~Houska, ``Min-max differential
  inequalities for polytopic tube {MPC},'' in \emph{Proc.\ American Control
  Conf.\ (ACC)}, 2019, pp. 1170--1174.

\bibitem{pin2009robust}
G.~Pin, D.~M. Raimondo, L.~Magni, and T.~Parisini, ``Robust model predictive
  control of nonlinear systems with bounded and state-dependent
  uncertainties,'' \emph{IEEE Transactions on Automatic Control}, vol.~54,
  no.~7, pp. 1681--1687, 2009.

\bibitem{RS_NMPC_18}
R.~Soloperto, M.~A. M\"uller, S.~Trimpe, and F.~Allg\"ower, ``Learning based
  robust model predictive control with state dependent uncertainty,'' in
  \emph{Proc.\ IFAC Conf.\ Nonlinear Model Predictive Control}, 2018, pp.
  538--543.

\bibitem{lucia2013multi}
S.~Lucia, T.~Finkler, and S.~Engell, ``Multi-stage nonlinear model predictive
  control applied to a semi-batch polymerization reactor under uncertainty,''
  \emph{J.\ Proc.\ Contr.}, vol.~23, no.~9, pp. 1306--1319, 2013.

\bibitem{kohler2018novel}
J.~K\"{o}hler, M.~A. M{\"u}ller, and F.~Allg{\"o}wer, ``A novel constraint
  tightening approach for nonlinear robust model predictive control,'' in
  \emph{Proc. American Control Conf.\ (ACC)}, 2018, pp. 728--734.

\bibitem{kohlernonlinear19}
------, ``Nonlinear reference tracking: An economic model predictive control
  perspective,'' \emph{IEEE Trans. Autom. Control}, vol.~64, pp. 254 -- 269,
  2019.

\bibitem{angeli2002lyapunov}
D.~Angeli, ``A {L}yapunov approach to incremental stability properties,''
  \emph{IEEE Trans. Autom. Control}, vol.~47, pp. 410--421, 2002.

\bibitem{manchester2017control}
I.~R. Manchester and J.-J.~E. Slotine, ``Control contraction metrics: Convex
  and intrinsic criteria for nonlinear feedback design,'' \emph{IEEE Trans.
  Autom. Control}, vol.~62, pp. 3046--3053, 2017.

\bibitem{yu2010robust}
S.~Yu, C.~B{\"o}hm, H.~Chen, and F.~Allg{\"o}wer, ``Robust model predictive
  control with disturbance invariant sets,'' in \emph{Proc. American Control
  Conf.\ (ACC)}, 2010, pp. 6262--6267.

\bibitem{wabersich2018safe}
K.~P. Wabersich and M.~N. Zeilinger, ``Safe exploration of nonlinear dynamical
  systems: A predictive safety filter for reinforcement learning,'' \emph{arXiv
  preprint arXiv:1812.05506}, 2018.

\bibitem{chen1998quasi}
H.~Chen and F.~Allg{\"o}wer, ``A quasi-infinite horizon nonlinear model
  predictive control scheme with guaranteed stability,'' \emph{Automatica},
  vol.~34, pp. 1205--1217, 1998.

\bibitem{faulwasser2018economic}
T.~Faulwasser, L.~Gr{\"u}ne, and M.~A. M{\"u}ller, ``Economic nonlinear model
  predictive control,'' \emph{Foundations and Trends{\textregistered} in
  Systems and Control}, vol.~5, pp. 1--98, 2018.

\bibitem{falugi2014getting}
P.~Falugi and D.~Q. Mayne, ``Getting robustness against unstructured
  uncertainty: a tube-based {MPC} approach,'' \emph{IEEE Transactions on
  Automatic Control}, vol.~59, no.~5, pp. 1290--1295, 2014.

\bibitem{loehning2014model}
M.~L{\"o}hning, M.~Reble, J.~Hasenauer, S.~Yu, and F.~Allg{\"o}wer, ``Model
  predictive control using reduced order models: {G}uaranteed stability for
  constrained linear systems,'' \emph{J.\ Proc.\ Contr.}, vol.~24, no.~11, pp.
  1647--1659, 2014.

\bibitem{hu2018real}
H.~Haimin, X.~Feng, R.~Quirynen, M.~E. Villanueva, and B.~Houska, ``Real-time
  tube {MPC} applied to a 10-state quadrotor model,'' in \emph{Proc. American
  Control Conf.\ (ACC)}, 2018, pp. 3135--3140.

\bibitem{schluter2020constraint}
H.~Schl{\"u}ter and F.~Allg{\"o}wer, ``A constraint-tightening approach to
  nonlinear stochastic model predictive control for systems under general
  disturbances,'' in \emph{Proc.\ 21th IFAC World Congress}, 2020, accepted.

\bibitem{bonzanini2019tube}
A.~D. Bonzanini, T.~L. Santos, and A.~Mesbah, ``Tube-based stochastic nonlinear
  model predictive control: {A} comparative study on constraint tightening,''
  in \emph{Proc. 12th IFAC Symposium on Dynamics and Control of Process
  Systems}, 2019, pp. 598--603.

\bibitem{bayer2014tube}
F.~A. Bayer, M.~A. M{\"u}ller, and F.~Allg{\"o}wer, ``Tube-based robust
  economic model predictive control,'' \emph{J.\ Proc.\ Contr.}, vol.~24, pp.
  1237--1246, 2014.

\bibitem{bayer2016robust}
F.~A. Bayer, M.~Lorenzen, M.~A. M{\"u}ller, and F.~Allg{\"o}wer, ``Robust
  economic model predictive control using stochastic information,''
  \emph{Automatica}, vol.~74, pp. 151--161, 2016.

\bibitem{bayer2018optimal}
F.~A. Bayer, M.~A. M{\"u}ller, and F.~Allg{\"o}wer, ``On optimal system
  operation in robust economic {MPC},'' \emph{Automatica}, vol.~88, pp.
  98--106, 2018.

\bibitem{brunner2018enhancing}
F.~D. Brunner, M.~A. M{\"u}ller, and F.~Allg{\"o}wer, ``Enhancing
  output-feedback {MPC} with set-valued moving horizon estimation,'' \emph{IEEE
  Trans. Autom. Control}, vol.~63, no.~3, pp. 2976--2986, 2018.

\bibitem{Kohler2019Output}
J.~K\"{o}hler, M.~A. M{\"u}ller, and F.~Allg{\"o}wer, ``A simple framework for
  nonlinear robust output-feedback {MPC},'' in \emph{Proc.\ European Control
  Conf.\ (ECC)}, 2019, pp. 793--798.

\bibitem{farina2012distributed}
M.~Farina and R.~Scattolini, ``Distributed predictive control: a
  non-cooperative algorithm with neighbor-to-neighbor communication for linear
  systems,'' \emph{Automatica}, vol.~48, pp. 1088--1096, 2012.

\bibitem{nubert2020safe}
J.~Nubert, J.~K{\"o}hler, V.~Berenz, F.~Allg{\"o}wer, and S.~Trimpe, ``Safe and
  fast tracking on a robot manipulator: Robust {MPC} and neural network
  control,'' \emph{IEEE Robotics and Automation Letters}, vol.~5, no.~2, pp.
  3050--3057, 2020.

\bibitem{jiang2001input}
Z.-P. Jiang and Y.~Wang, ``Input-to-state stability for discrete-time nonlinear
  systems,'' \emph{Automatica}, vol.~37, no.~6, pp. 857--869, 2001.

\bibitem{kohler2019robust}
J.~K{\"o}hler, P.~K{\"o}tting, R.~Soloperto, F.~Allg{\"o}wer, and M.~A.
  M{\"u}ller, ``A robust adaptive model predictive control framework for
  nonlinear uncertain systems,'' \emph{arXiv preprint arXiv:1911.02899}, 2019.

\bibitem{mesbah2018backoff}
J.~A. Paulson and A.~Mesbah, ``Nonlinear model predictive control with explicit
  backoffs for stochastic systems under arbitrary uncertainty,'' in
  \emph{Proc.\ 6th IFAC Conf.\ Nonlinear Model Predictive Control}, 2018, pp.
  622--633.

\bibitem{abbas2019tube}
H.~S. Abbas, G.~M{\"a}nnel, C.~H. n{\'e}~Hoffmann, and P.~Rostalski,
  ``Tube-based model predictive control for linear parameter-varying systems
  with bounded rate of parameter variation,'' \emph{Automatica}, vol. 107, pp.
  21--28, 2019.

\bibitem{rakovic2007minkowski}
S.~V. Rakovic, ``Minkowski algebra and banach contraction principle in set
  invariance for linear discrete time systems,'' in \emph{Proc.\ 46th IEEE
  Conf.\ Decision and Control (CDC)}.\hskip 1em plus 0.5em minus 0.4em\relax
  IEEE, 2007, pp. 2169--2174.

\bibitem{blanchini1999set}
F.~Blanchini, ``Set invariance in control,'' \emph{Automatica}, vol.~35,
  no.~11, pp. 1747--1767, 1999.

\bibitem{Koehler2019Adaptive}
J.~K\"{o}hler, E.~Andina, R.~Soloperto, M.~A. M{\"u}ller, and F.~Allg{\"o}wer,
  ``Linear robust adaptive model predictive control: Computational complexity
  and conservatism,'' in \emph{Proc.\ 58th IEEE Conf.\ Decision and Control
  (CDC)}, 2019, pp. 1383--1388, extended version on arXiv:1909.01813.

\bibitem{sturm1999using}
J.~F. Sturm, ``Using {SeDuMi} 1.02, a {MATLAB} toolbox for optimization over
  symmetric cones,'' \emph{Optimization methods and software}, vol.~11, pp.
  625--653, 1999.

\bibitem{andersson2019casadi}
J.~A. Andersson, J.~Gillis, G.~Horn, J.~B. Rawlings, and M.~Diehl, ``{CasADi}:
  A software framework for nonlinear optimization and optimal control,''
  \emph{Mathematical Programming Computation}, vol.~11, no.~1, pp. 1--36, 2019.

\bibitem{Soloperto2019Collision}
R.~Soloperto, J.~K\"{o}hler, M.~A. M{\"u}ller, and F.~Allg{\"o}wer, ``Collision
  avoidance for uncertain nonlinear systems with moving obstacles using robust
  model predictive control,'' in \emph{Proc.\ European Control Conf.\ (ECC)},
  2019, pp. 811--817.

\bibitem{boyd1994linear}
S.~P. Boyd, L.~El~Ghaoui, E.~Feron, and V.~Balakrishnan, \emph{Linear matrix
  inequalities in system and control theory}.\hskip 1em plus 0.5em minus
  0.4em\relax SIAM, 1994.

\bibitem{adetola2011robust}
V.~Adetola and M.~Guay, ``Robust adaptive {MPC} for constrained uncertain
  nonlinear systems,'' \emph{Int. J. Adapt. Control Signal Process.}, vol.~25,
  no.~2, pp. 155--167, 2011.

\bibitem{fontes2019guaranteed}
F.~A. Fontes and L.~T. Paiva, ``Guaranteed constraint satisfaction in
  continuous-time control problems,'' \emph{IEEE Control Systems Letters},
  vol.~3, no.~1, pp. 13--18, 2019.

\bibitem{magni2004model}
L.~Magni and R.~Scattolini, ``Model predictive control of continuous-time
  nonlinear systems with piecewise constant control,'' \emph{IEEE Transactions
  on Automatic Control}, vol.~49, no.~6, pp. 900--906, 2004.

\end{thebibliography}
%
\clearpage 
\appendix
\label{sec:app}
In Appendix~\ref{app:cont}, the main results of the paper are extended to continuous-time systems. 
In Appendix~\ref{app:nonlin_constraints}, the theory is extended to general (continuous) nonlinear constraint. 
%
\subsection{Continuous-time systems}
\label{app:cont}
In the following, we detail the exact conditions, functions and properties for the continuous-time formulation of the proposed nonlinear robust MPC framework.
\subsubsection*{Setup}
We consider a nonlinear perturbed continuous-time system, which is given by the following ordinary differential equation (ODE) 
\begin{align*}
\dfrac{d}{dt} x=\dot{x}=f_w(x,u,d)=f(x,u)+d_w(x,u,d),
\end{align*}
 with state $x\in\mathbb{R}^n$, control input $u\in\mathbb{R}^m$, disturbance $d\in\mathbb{D}\subset\mathbb{R}^q$,  perturbed system $f_w$, nominal model $f$, model mismatch $d_w$ and some initial condition $x_0$. 
We impose state and input constraints  
\begin{align}
\label{eq:constraint_con}
(x_t,u_t)\in \mathcal{Z},\quad t\in\mathbb{R}_{\geq 0}, 
\end{align}
with some compact nonlinear constraint set 
\begin{align*}
\mathcal{Z}=\{(x,u)\in\mathbb{R}^{n+m}|~g_j(x,u)\leq 0,~j=1,\dots,p\}\subset\mathbb{R}^{n+m}.
\end{align*}
Define the projected state constraint set
\begin{align*}
\mathcal{Z}_x=\{x\in\mathbb{R}^n|~\exists u\in\mathbb{R}^m,~(x,u)\in\mathcal{Z}\}. 
\end{align*}
\begin{remark}
For the theoretical analysis, we consider the case where the constraints~\eqref{eq:constraint_con} need to be satisfied for all continuous point in time $t\in \mathbb{R}_{\geq 0}$. 
In a standard implementation, the constraints are often only enforced at sampling time points $t_k=kh$,~$k\in\mathbb{N}$ with the sampling time $h>0$. 
Continuous-time ($t\in\mathbb{R}$) constraint satisfaction can be achieved by enforcing the constraints at sampling points $t_k$ in combination with an appropriate constraint tightening, compare~\cite{fontes2019guaranteed,magni2004model}. 
\end{remark}
\begin{assumption}
\label{ass:hat_w_cont}
For each $(x,u)\in\mathcal{Z}$, there exists a compact set $\mathcal{W}(x,u)\subset\mathbb{R}^n$, such that the model mismatch $d_w$ satisfies $d_w(x,u,d)\in\mathcal{W}(x,u)$ for all $d\in\mathbb{D}$. 
\end{assumption} 
 Such a description includes additive disturbances, multiplicative disturbances, more general nonlinear disturbances, and/or unmodeled nonlinearities. 
\subsubsection*{Local Incremental stabilizability}
In order to provide theoretical guarantees for robust stabilization, we assume that the system is locally incrementally stabilizable. 
\begin{assumption}
\label{ass:contract_cont}
There exist a control law $\kappa:\mathbb{R}^n\rightarrow\mathbb{R}^m$, an incremental Lyapunov function $V_{\delta}:\mathbb{R}^n\times\mathbb{R}^n\rightarrow\mathbb{R}_{\geq 0}$, which is continuous in the first argument and satisfies $V_{\delta}(z,z)=0$ for all $z\in\mathbb{R}^n$, and parameters $c_{\delta,l}$, $c_{\delta,u}$, $\delta_{\text{loc}}$, $\kappa_{\max}$, $\rho_c>0$, such that the following properties hold for all $(z,v+\kappa(z))\in\mathcal{Z}$ with $V_{\delta}(x,z)\leq \delta_{\text{loc}}$: 
\begin{subequations}
\begin{align}
c_{\delta,l}\|x-z\|^2\leq V_{\delta}(x,z)\leq& c_{\delta,u}\|x-z\|^2,\\
\|\kappa(x)-\kappa(z)\|^2\leq& \kappa_{\max}{{V_{\delta}(x,z)}},\\
\label{eq:contract_cont}
\dfrac{d}{dt}V_{\delta}(x,z)=&\dfrac{\partial V_{\delta}}{\partial x}|_{(x,z)}\dot{x}+\dfrac{\partial V_{\delta}}{\partial z}|_{(x,z)}\dot{z}\nonumber\\
\leq& -2\rho_c  V_{\delta}(x,z),
\end{align}
\end{subequations}
 with $\dot{x}=f(x,\kappa(x)+v)$, $\dot{z}=f(z,\kappa(z)+v)$.  
\end{assumption}
The dynamics and constraints can be reformulated with $f_{\kappa}(x,v)=f(x,v+\kappa(x))$, $g_{j,\kappa}(x,v)=g_j(x,v+\kappa(x))$.  
Thus, Assumption~\ref{ass:contract_cont} ensures that the system~$f_\kappa$ is exponentially incrementally stable, similar to~\cite{bayer2013discrete}. 
Note that this assumption is stronger than Assumption~\ref{ass:contract} due to the parametrization of $\kappa$, which will be essential to allow for a simple finite parametrization of the decision variable $v_{\cdot|t}$ in the MPC optimization problem~\eqref{eq:MPC_cont_kappa}. 
In the following, we consider the continuous-time input $u=v+\kappa(x)$ with a piece-wise constant input $v$ optimized online. 

 \subsubsection*{Efficient disturbance description}
\begin{assumption}
\label{ass:approx_V_cont}
There exists a function $\tilde{w}_{\delta}:\mathcal{Z}\times\mathbb{R}_{\geq 0}\rightarrow\mathbb{R}_{\geq 0}$, such that for any $(x,z,v+\kappa(z))\in\mathbb{R}^n\times\mathcal{Z}$ with $V_{\delta}(x,z)\leq c^2\leq \delta_{loc}$ and any $d_w\in\mathcal{W}(x,v+\kappa(x))$, we have
\begin{align}
\label{eq:w_delta_tilde_cont}
\dfrac{d}{dt} \sqrt{V_{\delta}(x,z)}\leq -\rho_c \sqrt{V_{\delta}(x,z)}+\tilde{w}_{\delta}(z,v,c),\\
\dot{x}=f_{\kappa}(x,v)+d_w,~\dot{z}=f_{\kappa}(z,v).  \nonumber
\end{align}
Furthermore, $\tilde{w}_{\delta}$ satisfies the monotonicity property in~\eqref{eq:w_delta_tilde}. 
\end{assumption}
\begin{proposition}
Let Assumption~\ref{ass:contract_cont} hold. 
Suppose there exists a function $\tilde{w}_{\delta}$ that satisfies the monotonicity property~\eqref{eq:w_delta_tilde} and the following condition for any $(x,z,v+\kappa(z))\in\mathbb{R}^n\times\mathcal{Z}$, $d_w\in\mathcal{W}(x,v+\kappa(x))$:
\begin{align}
\label{eq:w_delta_tilde_intermediate}
\dfrac{\partial V_{\delta}}{\partial x}|_{(x,z)}d_w\leq& 2\sqrt{V_{\delta}(x,z)}\tilde{w}_{\delta}(z,v,c).
\end{align}
Then this function $\tilde{w}_{\delta}$ satisfies Assumption~\ref{ass:approx_V_cont}. 
\end{proposition}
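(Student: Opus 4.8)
The plan is to derive the differential inequality~\eqref{eq:w_delta_tilde_cont} directly, by differentiating $\sqrt{V_{\delta}(x,z)}$ along the perturbed dynamics and decomposing the result into a nominal contraction contribution and a disturbance contribution. Since the monotonicity property~\eqref{eq:w_delta_tilde} is assumed outright in the hypothesis, nothing further is needed for it, and establishing~\eqref{eq:w_delta_tilde_cont} alone suffices to verify Assumption~\ref{ass:approx_V_cont}.

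First I would work at a point with $V_{\delta}(x,z)>0$, where $\sqrt{V_{\delta}}$ is differentiable. With $\dot{x}=f_{\kappa}(x,v)+d_w$ and $\dot{z}=f_{\kappa}(z,v)$, the chain rule gives
\begin{align*}
\frac{d}{dt}\sqrt{V_{\delta}(x,z)}=\frac{1}{2\sqrt{V_{\delta}(x,z)}}\left(\frac{\partial V_{\delta}}{\partial x}\dot{x}+\frac{\partial V_{\delta}}{\partial z}\dot{z}\right),
\end{align*}
with the partials evaluated at $(x,z)$. The key observation is that, splitting $\dot{x}=f_{\kappa}(x,v)+d_w$, the terms in $f_{\kappa}(x,v)$ and $f_{\kappa}(z,v)$ together form exactly the total derivative of $V_{\delta}$ along the \emph{nominal} dynamics appearing in~\eqref{eq:contract_cont}, leaving the single extra term $\frac{\partial V_{\delta}}{\partial x}d_w$.

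Next I would bound the two pieces separately. The nominal piece is controlled by the contraction estimate~\eqref{eq:contract_cont} of Assumption~\ref{ass:contract_cont}, which applies because $V_{\delta}(x,z)\le c^2\le\delta_{\mathrm{loc}}$ and $(z,v+\kappa(z))\in\mathcal{Z}$, yielding $\frac{\partial V_{\delta}}{\partial x}f_{\kappa}(x,v)+\frac{\partial V_{\delta}}{\partial z}f_{\kappa}(z,v)\le-2\rho_c V_{\delta}(x,z)$. The disturbance piece is bounded by the hypothesis~\eqref{eq:w_delta_tilde_intermediate}, i.e. $\frac{\partial V_{\delta}}{\partial x}d_w\le 2\sqrt{V_{\delta}(x,z)}\,\tilde{w}_{\delta}(z,v,c)$. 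Substituting both bounds and cancelling the common factor $2\sqrt{V_{\delta}(x,z)}$ gives
\begin{align*}
\frac{d}{dt}\sqrt{V_{\delta}(x,z)}\le-\rho_c\sqrt{V_{\delta}(x,z)}+\tilde{w}_{\delta}(z,v,c),
\end{align*}
which is precisely~\eqref{eq:w_delta_tilde_cont}.

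The main obstacle is the non-differentiability of $\sqrt{V_{\delta}}$ on the set $\{V_{\delta}(x,z)=0\}$, where the chain rule above fails. I would handle this by reading~\eqref{eq:w_delta_tilde_cont} as a bound on the upper right Dini derivative and treating the case $x=z$ separately: since $V_{\delta}(\cdot,z)\ge0$ attains its minimum at $x=z$, the gradient $\frac{\partial V_{\delta}}{\partial x}$ vanishes there, so~\eqref{eq:w_delta_tilde_intermediate} forces the disturbance term to be zero and the required inequality collapses to $0\le\tilde{w}_{\delta}(z,v,c)$, which holds by nonnegativity of $\tilde{w}_{\delta}$. A standard continuity/comparison argument then extends the estimate across $\{V_{\delta}=0\}$, so~\eqref{eq:w_delta_tilde_cont} holds on all of $\{V_{\delta}\le\delta_{\mathrm{loc}}\}$ and, together with the assumed monotonicity~\eqref{eq:w_delta_tilde}, Assumption~\ref{ass:approx_V_cont} is verified.
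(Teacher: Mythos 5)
Your argument on the set $\{V_{\delta}>0\}$ is exactly the paper's proof: differentiate $V_{\delta}$ along the perturbed dynamics, bound the nominal terms by the contraction condition~\eqref{eq:contract_cont}, bound $\frac{\partial V_{\delta}}{\partial x}|_{(x,z)}d_w$ by the hypothesis~\eqref{eq:w_delta_tilde_intermediate}, and divide by $2\sqrt{V_{\delta}}$ to get~\eqref{eq:w_delta_tilde_cont}; the paper carries out this computation without any comment on the set $\{V_{\delta}=0\}$, so your proposal is the same route with added rigor at the degenerate set. One caveat on that added part: your pointwise treatment of $x=z$ is not sound as stated. The vanishing of $\frac{\partial V_{\delta}}{\partial x}$ at the minimum controls $\frac{d}{dt}V_{\delta}$ (which is then $\leq 0$), but not the upper Dini derivative of $\sqrt{V_{\delta}}$: for instance $V_{\delta}(t)=K(t-t_0)^2$ has zero derivative at $t_0$ while $\sqrt{V_{\delta}}$ has upper Dini derivative $\sqrt{K}$ there. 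Concretely, at $x=z$ one has $D^+\sqrt{V_{\delta}}$ of the order of $\|d_w\|$ (up to the constants $c_{\delta,l},c_{\delta,u}$), and since~\eqref{eq:w_delta_tilde_intermediate} is vacuous at $x=z$ (both sides vanish), nothing forces this quantity to be $\leq \tilde{w}_{\delta}(z,v,c)$; the pointwise inequality~\eqref{eq:w_delta_tilde_cont} can genuinely fail on $\{x=z\}$, e.g.\ when $\mathcal{W}(\cdot)$ is discontinuous there. What makes this harmless is precisely the final tool you invoke: the inequality on $\{V_{\delta}>0\}$, together with continuity and nonnegativity of $\sqrt{V_{\delta}}$, suffices by the standard comparison argument to propagate the integral bound on $\sqrt{V_{\delta}}$ across the zero set, and that trajectory-wise bound is the only way~\eqref{eq:w_delta_tilde_cont} is used downstream (Theorem~\ref{thm:cont}, Parts I and II). So your conclusion stands; simply drop the claim that the required inequality ``collapses to $0\leq\tilde{w}_{\delta}$'' at $x=z$, and instead read~\eqref{eq:w_delta_tilde_cont} as holding on $\{V_{\delta}>0\}$ and appeal to the comparison lemma directly.
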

\begin{proof}
Using $\dfrac{d}{dt}\sqrt{V_{\delta}}=\dfrac{1}{2\sqrt{V_{\delta}}}\dfrac{d}{dt}V_{\delta}$, the condition in Assumption~\ref{ass:approx_V_cont} follows from
\begin{align*}
\dfrac{d}{dt}V_{\delta}(x,z)=&\dfrac{\partial V_{\delta}}{\partial x}|_{(x,z)}(f_{\kappa}(x,v)+d_w) +\dfrac{\partial V_{\delta}}{\partial z}|_{(x,z)}f_{\kappa}(z,v)\\
\stackrel{\eqref{eq:contract_cont},\eqref{eq:w_delta_tilde_intermediate}}{\leq}& -2\rho_c V_{\delta}(x,z)+2\tilde{w}_{\delta}(z,v,c)\sqrt{V_{\delta}(x,z)}.
\end{align*}
\end{proof}
We consider the problem of stabilizing the origin and assume that $(0,\kappa(0))\in\text{int}(\mathcal{Z})$, $f_{\kappa}(0,0)=0$. 
The open-loop cost over a prediction horizon $T\in\mathbb{R}$ of a predicted state and input sequence $x_{\cdot|t},~v_{\cdot|t}$ is defined as
\begin{align*}
&J_T(x_{\cdot | t},v_{\cdot|t})=\int_{\tau=0}^{T}\ell(x_{\tau|t},v_{\tau|t})+V_f(x_{T|t}),
\end{align*}
with some positive definite stage cost $\ell$ and terminal cost $V_f$. 

The MPC optimization problem is given by
\begin{subequations}
\label{eq:MPC_cont_kappa}
\begin{align}
&\min_{v_{\cdot|t},w_{\cdot|t},x_{\cdot|t},s_{\cdot|t}}J_T(x_{\cdot | t},v_{\cdot|t})\\ 
\label{eq:MPC_cont_init}
\text{s.t. }&x_{0|t}=x_t,~s_{0|t}=0,\\
\label{eq:MPC_cont_dyn}
&\dot{x}_{\tau|t}=f_{\kappa}(x_{\tau|t},v_{\tau|t}),\\
\label{eq:MPC_cont_s}
&\dot{s}_{\tau|t}=-\rho_cs_{\tau|t}+w_{\tau|t},\\
\label{eq:MPC_cont_w}
&w_{\tau|t}\geq \tilde{w}_{\delta}(x_{\tau|t},v_{\tau|t},s_{\tau|t}),\\
\label{eq:real_constraints_cont}
&g_{j,\kappa}(x_{\tau|t},v_{\tau|t})+c_js_{\tau|t}\leq 0,\\
\label{eq:tube_bound_MPC_cont}
&s_{\tau|t}\leq \overline{s},~
 {w}_{\tau|t}\leq \overline{w},\\
\label{eq:terminal_region_cont}
 &(x_{T|t},s_{T|t}) \in \mathcal{X}_f,\\
& \tau\in[0,T],\quad j=1,\dots, p,\nonumber
\end{align}
\end{subequations}
We consider a piece-wise constant parametrization of $v$ with a sampling time $h$ and implement the continuous-time input 
\begin{align}
\label{eq:close_cont}
u_{\tau+t}=v^*_{\tau|t}+\kappa(x_{\tau+t}),~\tau\in[0,h].
\end{align}
The conditions on the stage cost $\ell$ and the constraint set $\mathcal{Z}$ are equivalent to the discrete-time case with 
Assumptions~\ref{ass:stage_cost}--\ref{ass:gen_nonlin_con_Lipschitz}, with exponential summability replaced by exponential integrability, i.e. $\int_0^{\infty} \alpha_c(r e^{-\rho_c \tau})d\tau=:\alpha_{c,\rho_c}(r)\in\mathcal{K}_{\infty}$.
Correspondingly, Proposition~\ref{prop:contract} is also equivalent. 

\subsubsection*{Terminal ingredients}
The assumption regarding the terminal ingredients (Ass.~\ref{ass:term_2}) changes as follows.
\begin{assumption}
\label{ass:term_cont}
There exist a piece-wise constant feedback $k_f:\mathbb{R}^n\rightarrow\mathbb{R}^m$, a terminal cost function $V_f:\mathbb{R}^n\rightarrow \mathbb{R}_{\geq 0}$, a terminal set $\mathcal{X}_f\subset\mathbb{R}^{n+1}$, a constant $\overline{w} \in \mathbb{R}_{\geq 0}$ and a function $\alpha_f\in\mathcal{K}$, such that for any $(x_f,s_f)\in\mathcal{X}_f$, any $w\in\dfrac{1-e^{-\rho_c h}}{\rho_c}[\overline{w}_{\min},\overline{w}]$, and any trajectories $x_{\tau},s_{\tau},~\tau\in[0,h]$ satisfying
\begin{align*}
 \dot{s}_\tau\leq& -\rho_c s_\tau+\tilde{w}_{\delta}(x_\tau,k_f(x_f), s_\tau), \\
\dot{x}_\tau=&f_{\kappa}(x_\tau,k_f(x_f)),
\end{align*}
with initial conditions
\begin{align*}
s_0\leq& s_f-e^{-\rho_c (T-h)}w,\\
V_{\delta}(x_0,x_f)\leq& e^{-2\rho_c (T-h)}w^2,
\end{align*}
the following properties hold for any $\tau\in[0,h]$:
\begin{subequations}
\label{eq:term_dist2_cont}
\begin{align}
\label{eq:term_dist_dec2_cont}
V_f(x_h)- V_f(x_f)\leq&-\int_0^h\ell(x_{\tau},k_f(x_f))d\tau+\alpha_f(w),\\
\label{eq:term_dist_RPI2_cont}
(x_h,s_h)\in&\mathcal{X}_f, \\
\label{eq:w_f_cont}
\tilde{w}_{\delta}(x_\tau,k_f(x_f), s_\tau)\leq & \overline{w},\\ 
\label{eq:term_dist_con2_cont}
g_{j,\kappa}(x_\tau,k_f(x_f))+c_j {s}_\tau\leq & 0,\quad j=1,\dots p,\\
\label{eq:term_s_cont}
s_\tau\leq& \overline{s}.
\end{align}
\end{subequations}
\end{assumption}
Note that the requirement~\eqref{eq:term_dist_con2_cont} is stricter than assuming constraint satisfaction in the terminal set, since $(x_{\tau},s_{\tau})\in\mathcal{X}_f$, $\tau\in(0,h)$ does not necessarily hold. 
Furthermore, we restrict ourselves to piece-wise constant terminal controllers to allow for a simple implementation. 

The following theorem establishes the closed-loop properties of the proposed nonlinear robust MPC scheme. 
\begin{theorem}
\label{thm:cont} 
Let Assumptions~\ref{ass:stage_cost}--\ref{ass:gen_nonlin_con_Lipschitz}, \ref{ass:hat_w_cont}--\ref{ass:term_cont} hold, and suppose that~\eqref{eq:MPC_cont_kappa} is feasible at $t=0$. 
Then~\eqref{eq:MPC_cont_kappa} is recursively feasible, the constraints~\eqref{eq:constraint_con} are satisfied and the origin is practically asymptotically stable for the resulting closed-loop system~\eqref{eq:close_cont}.
\end{theorem}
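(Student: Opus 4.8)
The plan is to mirror the six-part structure of the proof of Theorem~\ref{thm:main}, replacing the discrete-time recursions by the corresponding continuous-time differential (in)equalities and using a sampling shift of length $h$ in place of the one-step shift.

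\textbf{Candidate construction.} At the sampling instant $t+h$ I would build a candidate for~\eqref{eq:MPC_cont_kappa} from the previous optimizer $(x^*_{\cdot|t},v^*_{\cdot|t},s^*_{\cdot|t},w^*_{\cdot|t})$ by shifting the piecewise-constant correction input, $v_{\tau|t+h}:=v^*_{\tau+h|t}$ for $\tau\in[0,T-h]$, and appending the terminal controller $k_f$ on $[T-h,T]$; the candidate state solves $\dot x_{\tau|t+h}=f_\kappa(x_{\tau|t+h},v_{\tau|t+h})$ from the measured $x_{0|t+h}=x_{t+h}$, and $s,w$ are regenerated by~\eqref{eq:MPC_cont_s}--\eqref{eq:MPC_cont_w}. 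Integrating the disturbance bound~\eqref{eq:w_delta_tilde_cont} along the first sampling interval with $s_{0|t}=0$ and applying the comparison lemma gives $\sqrt{V_\delta(x_{t+h},x^*_{h|t})}\le\int_0^h e^{-\rho_c(h-\sigma)}w^*_{\sigma|t}\,d\sigma=s^*_{h|t}$, the analogue of the discrete bound $\sqrt{V_\delta(x_{t+1},x^*_{1|t})}\le s^*_{1|t}$. Since the candidate shares its input with the shifted optimizer, the contraction~\eqref{eq:contract_cont} then yields $V_\delta(x_{\tau|t+h},x^*_{\tau+h|t})\le e^{-2\rho_c\tau}[s^*_{h|t}]^2$, playing the role of~\eqref{eq:prop_robust_1_2}.

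\textbf{Tube dynamics.} The heart of the argument is to establish the continuous counterparts of~\eqref{eq:tube_shrinking}--\eqref{eq:disturbance_shrinking}, namely $s_{\tau|t+h}\le s^*_{\tau+h|t}-e^{-\rho_c\tau}s^*_{h|t}$ and $w_{\tau|t+h}\le w^*_{\tau+h|t}$. Here the discrete induction is replaced by a coupled differential inequality: writing $\delta(\tau):=s^*_{\tau+h|t}-s_{\tau|t+h}$, the tube ODE~\eqref{eq:MPC_cont_s} gives $\dot\delta\ge-\rho_c\delta+(w^*_{\tau+h|t}-w_{\tau|t+h})$ with $\delta(0)=s^*_{h|t}$, so as long as $\delta(\tau)\ge e^{-\rho_c\tau}s^*_{h|t}\ge\sqrt{V_\delta(x_{\tau|t+h},x^*_{\tau+h|t})}$ the monotonicity property~\eqref{eq:w_delta_tilde} (with $c_1=s^*_{\tau+h|t}$, $c_2=s_{\tau|t+h}$ and common correction input $v^*_{\tau+h|t}$) forces $w_{\tau|t+h}\le\tilde w_\delta(x^*_{\tau+h|t},v^*_{\tau+h|t},s^*_{\tau+h|t})\le w^*_{\tau+h|t}$, which in turn keeps $\dot\delta\ge-\rho_c\delta$ and hence preserves $\delta(\tau)\ge e^{-\rho_c\tau}s^*_{h|t}$. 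I would close this self-referential loop by a standard bootstrapping/continuity argument (the set of $\tau$ on which the gap inequality holds is nonempty, closed, and relatively open in $[0,T-h]$), which I expect to be the main technical obstacle, since, unlike the clean discrete induction, it requires simultaneously propagating the state-gap bound and the tube-gap bound.

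\textbf{Constraints, tube bounds, terminal inclusion, and stability.} With the two tube inequalities in hand the remaining parts follow the discrete template. Constraint satisfaction on $[0,T-h]$ uses the Lipschitz-type bound~\eqref{eq:lipschitz} together with $e^{-\rho_c\tau}s^*_{h|t}+s_{\tau|t+h}\le s^*_{\tau+h|t}$ to reduce $g_{j,\kappa}(x_{\tau|t+h},v_{\tau|t+h})+c_js_{\tau|t+h}$ to the already feasible $g_{j,\kappa}(x^*_{\tau+h|t},v^*_{\tau+h|t})+c_js^*_{\tau+h|t}\le0$ from~\eqref{eq:real_constraints_cont}, while the terminal segment invokes~\eqref{eq:term_dist_con2_cont}; the bounds~\eqref{eq:tube_bound_MPC_cont} follow analogously via~\eqref{eq:w_f_cont} and~\eqref{eq:term_s_cont}. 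For the terminal inclusion~\eqref{eq:terminal_region_cont} I would verify that the candidate entering the terminal phase at $\tau=T-h$ meets exactly the initial-condition hypotheses of Assumption~\ref{ass:term_cont} with $w=s^*_{h|t}$: Part~I gives $V_\delta(x_{T-h|t+h},x^*_{T|t})\le e^{-2\rho_c(T-h)}[s^*_{h|t}]^2$, the tube inequality gives $s_{T-h|t+h}\le s^*_{T|t}-e^{-\rho_c(T-h)}s^*_{h|t}$, and since $w^*_{\sigma|t}\in[\overline w_{\min},\overline w]$ one checks $s^*_{h|t}\in\frac{1-e^{-\rho_c h}}{\rho_c}[\overline w_{\min},\overline w]$, so~\eqref{eq:term_dist_RPI2_cont} applies. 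Finally, practical asymptotic stability is obtained as in Part~VI of Theorem~\ref{thm:main}: bounding $\ell(x_{\tau|t+h},v_{\tau|t+h})-\ell(x^*_{\tau+h|t},v^*_{\tau+h|t})\le\alpha_u(e^{-\rho_c\tau}s^*_{h|t})$ via~\eqref{eq:stage_cost_bound}, integrating over $[0,T]$ using exponential integrability ($\alpha_{c,\rho_c}\in\mathcal K_\infty$), and combining with the terminal-cost decrease~\eqref{eq:term_dist_dec2_cont} yields $V_T(x_{t+h})-V_T(x_t)\le-\int_0^h\ell(x_{\tau+t},u_{\tau+t})\,d\tau+\alpha_w(\overline w)$ for some $\alpha_w\in\mathcal K_\infty$, whence practical asymptotic stability follows by the standard arguments cited in the discrete proof.
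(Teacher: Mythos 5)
Your proposal is correct and follows essentially the same route as the paper's proof: the same shifted candidate with the terminal controller appended, the same contraction bound $\sqrt{V_\delta(x_{\tau|t+h},x^*_{\tau+h|t})}\le e^{-\rho_c\tau}s^*_{h|t}$, the same two tube inequalities established jointly via the monotonicity property and a differential comparison argument, and the same Parts III--VI for constraints, terminal inclusion (with $w=s^*_{h|t}$), and practical stability. The only difference is presentational: what the paper loosely calls ``induction'' over $\tau\in[0,T]$ is exactly the bootstrapping/continuity closure you spell out, so your version is, if anything, a more explicit rendering of the same argument.
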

\begin{proof}
The proof is structured analogous to the  proof in Theorem~\ref{thm:main}.\\
\textbf{Part I.} Candidate solution: 
For convenience, define $v^*_{\tau|t}=k_f(x^*_{T|t})$, $w^*_{\tau|t}=\tilde{w}_{\delta}(x_{\tau|t},v_{\tau|t},s_{\tau|t})$, $\tau\in[T,T+h]$ with $k_f$ from Ass.~\ref{ass:term_cont} and $x^*_{\tau|t}$, $s^*_{\tau|t}$, $\tau\in[T,T+h]$ according to~\eqref{eq:MPC_cont_dyn}--\eqref{eq:MPC_cont_s}. 
Consider the candidate solution
\begin{subequations}
\label{eq:cand_cont}
\begin{align}
 x_{0|t+h}=&x_{t+h},\quad s_{0|t+h}=0,\\
v_{\tau|t+h}=&v^*_{\tau+h|t},~\tau\in[0,T],\\
\label{eq:cand_cont_w}
w_{\tau|t+h}=&\tilde{w}_{\delta}(x_{\tau|t+h},u_{\tau|t+h},s_{k|t+h}),~,~\tau\in[0,T],
\end{align}
\end{subequations}
with $x_{\cdot|t+h},~s_{\cdot|t+h}$ according to~\eqref{eq:MPC_cont_dyn}--\eqref{eq:MPC_cont_s}. 
Assumption~\ref{ass:approx_V_cont}, Inequality~\eqref{eq:w_delta_tilde_cont} and \eqref{eq:MPC_cont_s} ensure
\begin{align*}
&\sqrt{V_{\delta}(x_{t+h},x^*_{h|t})}\leq s^*_{h|t}{\leq} \overline{s}{\leq}\sqrt{\delta}_{loc}. 
\end{align*}
Using the contractivity in Assumption~\ref{ass:contract_cont} recursively, we get  
\begin{align}
\label{eq:prop_robust_1_2_cont}
&\sqrt{V_{\delta}(x_{\tau|t+h},x^*_{\tau+h|t})}
\leq e^{-\rho_c\tau}s^*_{h|t}\leq \sqrt{\delta_{loc}},
\end{align}
for any $\tau\in[0,T]$. \\
\textbf{Part II.} Tube dynamics: In the following, we show by induction that the following inequalities hold for $\tau\in[0,T]$:
\begin{align}
\label{eq:tube_shrinking_cont}
s_{\tau|t+h}\leq & {s}^*_{\tau+h|t}-e^{-\rho_c \tau}{s}^*_{h|t},\\
\label{eq:disturbance_shrinking_cont}
{w}_{\tau|t+h}\leq & {w}^*_{\tau+h|t}.
\end{align}
For $\tau=0$, Inequality~\eqref{eq:tube_shrinking_cont} is trivially satisfied with equality. 
Furthermore, Assumption~\ref{ass:approx_V_cont}, \eqref{eq:w_delta_tilde} ensures $w_{0|t+h}\stackrel{\eqref{eq:w_delta_tilde},\eqref{eq:cand_cont_w}}{\leq} \tilde{w}_{\delta}(x^*_{h|t},v^*_{h|t},s^*_{h|t})\stackrel{\eqref{eq:MPC_cont_w}}{\leq} w^*_{h|t}$, with $c_2 = s_{0|t+h} = 0$, $c_1 = {s}^*_{h|t}$,  
\begin{align*}
&V_{\delta}(x_{0|t+h},x^*_{h|t}) \leq   [s^*_{h|t}]^2  = (c_1-c_2)^2.
\end{align*}
Suppose~\eqref{eq:tube_shrinking_cont} holds at some $\tau$. 
Consider $c_2 = s_{\tau|t+h}$, $c_1 = {s}^*_{\tau+h|t}$. The incremental Lyapunov function satisfies
\begin{align*}
\sqrt{V_{\delta}(x_{\tau|t+h},x^*_{\tau+h|t})}\stackrel{\eqref{eq:prop_robust_1_2_cont}}{\leq} e^{-\rho_c \tau} s^*_{h|t} \stackrel{\eqref{eq:tube_shrinking_cont} }{\leq} c_1-c_2.
\end{align*}
Thus, Assumption \ref{ass:approx_V_cont}, \eqref{eq:w_delta_tilde}, and \eqref{eq:MPC_cont_w} imply ${w}_{\tau|t+h}\leq {w}^*_{\tau+h|t}$.
Consider the differentiation
\begin{align*}
&\dfrac{d}{d\tau}\left[ s_{\tau|t+h}-s^*_{\tau+h|t}+e^{-\rho_c\tau}s^*_{h|t}\right]\\
\stackrel{\eqref{eq:MPC_cont_s}}{=}&\rho_c(s^*_{\tau+h|t}-s_{\tau|t+h})+w_{\tau|t+h}-w^*_{\tau+h|t}\\
&-\rho_ce^{-\rho_c \tau}s^*_{h|t}\\
\leq &-\rho_c(s_{\tau|t+h}-s^*_{\tau+h|t}+e^{-\rho_c \tau}s^*_{h|t}).
\end{align*}
Given that~\eqref{eq:tube_shrinking_cont} holds for $\tau=0$, the Inequality~\eqref{eq:tube_shrinking_cont} holds for all $\tau\in[0,T]$.  \\
\textbf{Part III.} State and input constraint satisfaction: \\
For $\tau\in[0,T-\delta]$, we have  
\begin{align*}
&g_{j,\kappa}(x_{\tau|t+h},v_{\tau|t+h})+c_js_{\tau|t+h}\\
\stackrel{\eqref{eq:lipschitz}}\leq& g_{j,\kappa}(x^*_{\tau+h|t},v^*_{\tau+h|t})\\
&+ e^{-\rho_c \tau}c_j{s}^*_{h|t}+c_js_{\tau|t+h}\\
\stackrel{\eqref{eq:tube_shrinking_cont}}{\leq}&g_{j,\kappa}(x^*_{\tau+h|t},v^*_{\tau+h|t})
+c_js^*_{\tau+h|t}
\stackrel{\eqref{eq:real_constraints_cont}}\leq 0.
\end{align*}
The terminal condition~\eqref{eq:terminal_region_cont} ensures constraint satisfaction for $\tau\in[T-h,T]$ with  
\begin{align*}
&g_{j,\kappa}(x_{\tau|t+h},v_{\tau|t+h})+c_js_{\tau|t+h}\\
&\stackrel{\eqref{eq:lipschitz}\eqref{eq:prop_robust_1_2_cont}\eqref{eq:tube_shrinking_cont}}\leq g_{j,\kappa}(x^*_{\tau+h|t},v^*_{\tau+h|t})+c_js^*_{\tau+h|t}
\stackrel{\eqref{eq:term_dist_con2_cont}}{\leq}  0.  
\end{align*}
\textbf{Part IV.} Tube bounds \eqref{eq:tube_bound_MPC_cont}: 
Inequalities~\eqref{eq:tube_shrinking_cont} and \eqref{eq:disturbance_shrinking_cont} ensure that \eqref{eq:tube_bound_MPC_cont} holds for $\tau\in[0,T-h]$.
For $\tau\in[T-h,T]$,  $(x^*_{T|t},s^*_{T|t})\in\mathcal{X}_f$ implies
\begin{align}
&s_{\tau|t+h}\stackrel{\eqref{eq:tube_shrinking_cont}}{\leq}  s^*_{\tau+h|t}\stackrel{\eqref{eq:term_s_cont}}{\leq} \overline{s},\nonumber\\
&{w}_{\tau|t+h} \stackrel{\eqref{eq:disturbance_shrinking_cont}} \leq {w}^*_{\tau+h|t}\nonumber\\
=& w_\delta(x^*_{\tau+h|t}, v^*_{\tau+h|t}, s^*_{\tau+h|t})  
\label{eq:bound_w_N_cont}
\stackrel{\eqref{eq:w_f_cont}}\leq     \overline{w}. 
\end{align}
\textbf{Part V.} Terminal constraint satisfaction \eqref{eq:terminal_region_cont}: 
The terminal state and terminal tube size satisfy
\begin{align*}
&\sqrt{V_{\delta}(x_{T-h|t+h},x^*_{T|t})}
 \stackrel{\eqref{eq:prop_robust_1_2_cont}}{\leq} e^{-\rho_c(T-h)} {s}^*_{h|t},\\
&s_{T-h|t+h}\leq s^*_{T|t}-e^{-\rho_c (T-h)}s^*_{h|t}\\
&\dfrac{d}{d\tau} s_{\tau|t+h}\leq -\rho_cs_{\tau|t+h}+w_{\tau|t+h},\quad \tau\in[T-h,T].
\end{align*}
Using $w_{\tau|t}\in[\overline{w}_{\min},\overline{w}]$ and the linear dynamics in $s$~\eqref{eq:MPC_cont_s}, we have $s^*_{h|t}\in(1-e^{-\rho_c h})/\rho_c\cdot[\overline{w}_{\min},\overline{w}]$. 
Thus Ass.~\ref{ass:term_cont} and~\eqref{eq:term_dist_RPI2_cont} ensure $(x_{T|t+h},s_{T|t+h})\in\mathcal{X}_f$. \\
\textbf{Part VI.} 
Practical stability: For $\tau\in[0,T]$ we have:
\begin{align*}
&\ell(x_{\tau|t+h},u_{\tau|t+h})-\ell(x^*_{\tau+h|t},u^*_{\tau+h|t})\\
\stackrel{\eqref{eq:stage_cost_bound}\eqref{eq:prop_robust_1_2_cont}}{\leq}& \alpha_u(e^{-\rho_c \tau} s^*_{h|t}) 
\leq \alpha_u\left(\dfrac{1-e^{-\rho_c h}}{\rho_c}e^{-\rho_c \tau}\overline{w}\right).
\end{align*}
Practical stability follows similar to Theorem~\ref{thm:main}. 
\end{proof}

\begin{remark}
\label{remark:cont}
Compared to Assumption~\ref{ass:contract}, Assumption~\ref{ass:contract_cont} is stronger, as it requires the explicit knowledge of $\kappa$ to be used to compute the closed-loop input~\eqref{eq:close_cont} and we need the special structure $\kappa(x,z,v)=v+\kappa(x)-\kappa(z)$.  
It is possible to relax this assumption to the existence of an exponentially stabilizing open-loop input sequence for any feasible trajectory, similar to Assumption~\ref{ass:contract} (compare also Remark~\ref{rk:stab}). 
In this case, recursive feasibility can be shown similar to Theorem~\ref{thm:main} with the stabilizing input sequence as a feasible candidate solution. 
This, however, requires that the stabilizing input-sequence is in the set of considered input trajectories in the optimization problem~\eqref{eq:MPC_cont_kappa}, which causes difficulties if a simple input parametrization is considered, e.g. piece-wise constant. 
Furthermore, since in this case the feedback is only present at the next sampling instance the initial propagation of the tube changes to 
\begin{align}
&\dot{s}_{\tau|t}=L_c s_{\tau|t}+w_{\tau|t},~\tau\in[0,h],\\
&w_{\tau|t}=\tilde{w}_{\delta,L}(x_{\tau|t},u_{\tau|t},s_{\tau|t}),~\tau\in[0,h],
\end{align}
where the  (possibly positive) constant $L_c$ and function $\tilde{w}_{\delta,L}$ need to satisfy the following condition similar to~\eqref{eq:w_delta_tilde_cont} (in the absence of feedback $\kappa$):
\begin{align}
\label{eq:w_delta_tilde_cont_L}
\dfrac{d}{dt} \sqrt{V_{\delta}(x,z)}\leq w_L\sqrt{V_{\delta}(x,z)}+\tilde{w}_{\delta,L}(z,v,c),\\
\dfrac{d}{dt}x=f(x,v)+d_w,~\dfrac{d}{dt}z=f(z,v).  \nonumber
\end{align}
for all $(x,z,v)\in\mathbb{R}^n\times\mathcal{Z}$. 

In case of piece-wise constant input sequences $u$, the stabilizability condition reduces to Assumption~\ref{ass:contract}, where the discrete-time model is defined as the integration of the continuous-time model and $\rho=e^{-\rho_c h}$. 
Similarly, the discrete-time function $\tilde{w}_{\delta}$ can be defined implicitly based on the continuous-time function $\tilde{w}_{\delta}$ and the integration of $s$ using~\eqref{eq:MPC_cont_s}. 
Although the formulations are equivalent, it may often be more convenient to formulate the function $\tilde{w}_\delta$ in continuous-time and define the discrete-time functions implicitly based on a suitable discretization, compare the numerical example in Section~\ref{sec:num}.  
\end{remark}

\begin{remark}\label{rk:implement}
The dynamic propagation of $s_{\cdot|t}$ with $w_{\cdot|t}$ can pose numerical challenges, since the decision variable $w_{\cdot|t}$ is a continuous-time trajectory.
One solution to this problem is to replace~\eqref{eq:MPC_cont_w} with an equality constraint and thus eliminate $w_{\cdot|t}$ as a decision variable. 
In this case the dynamics of $(x,s)$ in~\eqref{eq:MPC_cont_init}--\eqref{eq:MPC_cont_w} can be jointly discretized. 

However, for many design choices of $\tilde{w}_{\delta}$ it is numerically more efficient to implement~\eqref{eq:MPC_cont_w} using multiple nonlinear inequality constraints, compare e.g.~\eqref{eq:w_LPV_poly}, \eqref{eq:w_tilde_quadrotor}.
In this case it is more natural to also impose a piece-wise constant structure on the decision variable $w_{\cdot|t}$, similar to $v_{\cdot|t}$. 
In this case the considered optimization problem~\eqref{eq:MPC_cont_kappa} has the complexity of a nominal continuous-time MPC with $m+1$ inputs $(v,w)$ and additional nonlinear inequality constraints~\eqref{eq:MPC_cont_w}, similar to the discrete-time case. 
\end{remark}
 
%
\subsection{General nonlinear constraints}
\label{app:nonlin_constraints}
In the following, we show how the proposed approach can be extended to more general nonlinear constraints. 
Consider the following nonlinear constraint set
\begin{align}
\label{eq:gc_new}
\tilde{\mathcal{Z}}_g=\{(x,u)\in\mathcal{Z}|~\tilde{g}_j(x,u)\leq 0,\quad j=1,\dots,q\},
\end{align}
where the functions $\tilde{g}_j$ need not satisfy Assumption~\ref{ass:gen_nonlin_con_Lipschitz}.  
\begin{assumption}
\label{ass:general_constraints}
The functions $\tilde{g}_j$ are continuous, i.e., there exist functions $\alpha_j \in \mathcal{K}_{\infty}$, such that for any $r,~\tilde{r} \in \mathcal{Z}$
\begin{align}
\label{eq:nonlin_con_NotLipschitz}
 \tilde{g}_j(\tilde{r})- \tilde{g}_j(r) \leq \alpha_j(\|r-\tilde{r}\|), \quad j = 1, \dots, q.
\end{align}
Furthermore, there exist constants $\lambda_j > 0$, such that
\begin{align}
\label{eq:lambda}
\alpha_j(\rho c) \leq \rho^{\lambda_j} \alpha_j(c).
\end{align}
\end{assumption}
Condition \eqref{eq:lambda} is satisfied by any polynomial function $\alpha_j$ with positive coefficients.
In the following, we show how the presented approach can be extended to such continuity bounds. 
The consideration of the more general bound~\eqref{eq:nonlin_con_NotLipschitz} instead of the Lipschitz bound~\eqref{eq:nonlin_con_Lipschitz} can also reduce the conservatism. 
\begin{proposition}
Suppose that Assumptions \ref{ass:contract} and \ref{ass:general_constraints} are satisfied, then there exist functions $\tilde{\alpha}_j \in \mathcal{K}_{\infty}$, $j=1, \dots, q$, such that the following inequalities hold for all $(x,z,v)\in \mathbb{R}^n\times\mathcal{Z}$ and all $c\in[0,\sqrt{\delta_{loc}}]$, with $V_\delta(x,z,v)\leq c^2$:
\begin{align}
\label{eq:gc_increase}
\tilde{g}_j(x,\kappa(x,z,v))\leq& \tilde{g}_j(z,v)+\tilde{\alpha}_j (c),\\
\label{eq:lambda_tilde}
\tilde{\alpha}_j(\rho c) \leq &\rho^{\lambda_j} \tilde{\alpha}_j(c).
\end{align}
\end{proposition}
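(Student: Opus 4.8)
The plan is to mirror the proof of Proposition~\ref{prop:contract}, replacing the Lipschitz bound~\eqref{eq:nonlin_con_Lipschitz} by the general continuity bound~\eqref{eq:nonlin_con_NotLipschitz} and then reading off the scaling property~\eqref{eq:lambda_tilde} directly from~\eqref{eq:lambda}. The one ingredient I reuse verbatim is the distance estimate~\eqref{eq:triangle_in} from that proof: for any $(x,z,v)$ with $V_\delta(x,z,v)\le c^2$,
\begin{align*}
\|x-z\|^2+\|\kappa(x,z,v)-v\|^2\le\left(\tfrac{1}{c_{\delta,l}}+\kappa_{\max}\right)c^2.
\end{align*}

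First I would apply~\eqref{eq:nonlin_con_NotLipschitz} with $r=(z,v)$ and $\tilde r=(x,\kappa(x,z,v))$ to get
\begin{align*}
\tilde g_j(x,\kappa(x,z,v))-\tilde g_j(z,v)\le\alpha_j\!\left(\sqrt{\|x-z\|^2+\|\kappa(x,z,v)-v\|^2}\right),
\end{align*}
and then, using monotonicity of $\alpha_j\in\mathcal{K}_\infty$ together with~\eqref{eq:triangle_in}, bound the argument to obtain~\eqref{eq:gc_increase} with the explicit choice
\begin{align*}
\tilde\alpha_j(c):=\alpha_j\!\left(c\sqrt{\tfrac{1}{c_{\delta,l}}+\kappa_{\max}}\right).
\end{align*}
This $\tilde\alpha_j$ is a composition of $\alpha_j\in\mathcal{K}_\infty$ with a positive linear scaling, hence itself lies in $\mathcal{K}_\infty$.

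To verify~\eqref{eq:lambda_tilde}, I set $\beta:=\sqrt{1/c_{\delta,l}+\kappa_{\max}}$ and apply~\eqref{eq:lambda} to the argument $\beta c$:
\begin{align*}
\tilde\alpha_j(\rho c)=\alpha_j(\rho\,\beta c)\le\rho^{\lambda_j}\alpha_j(\beta c)=\rho^{\lambda_j}\tilde\alpha_j(c),
\end{align*}
so the same exponent $\lambda_j$ transfers from $\alpha_j$ to $\tilde\alpha_j$. Computationally this is immediate; the only point deserving care is that~\eqref{eq:nonlin_con_NotLipschitz} is posed for both arguments in $\mathcal{Z}$, whereas the Lipschitz version~\eqref{eq:nonlin_con_Lipschitz} admits $\tilde r$ in a neighborhood of $\mathcal{Z}$. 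Thus I would make explicit that the realized pair $(x,\kappa(x,z,v))$ is taken in $\mathcal{Z}$, as is the case for the predicted and perturbed points at which~\eqref{eq:gc_increase} is invoked in the recursive feasibility argument. This domain bookkeeping, rather than any estimate, is the only genuine subtlety of the proof.
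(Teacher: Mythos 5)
Your proof is correct and essentially identical to the paper's: the paper likewise defines $\tilde{\alpha}_j(c):=\alpha_j\left(\sqrt{(1/c_{\delta,l}+\kappa_{\max})}\,c\right)$ and simply invokes the argument of Proposition~\ref{prop:contract} based on \eqref{eq:bound} and \eqref{eq:k_max}, leaving the scaling property \eqref{eq:lambda_tilde} implicit exactly as you verify it. Your closing remark on domains is a fair point the paper glosses over: since Assumption~\ref{ass:general_constraints} requires both arguments in $\mathcal{Z}$, the bound \eqref{eq:gc_increase} is strictly justified only when $(x,\kappa(x,z,v))\in\mathcal{Z}$, which does hold at the candidate points where it is invoked in Lemma~\ref{lemma:nonlin_con}.
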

\begin{proof}
The proof is analogous to Proposition \ref{prop:contract}, based on \eqref{eq:bound} and \eqref{eq:k_max}, with 
$\tilde{\alpha}_j(c):=
\alpha_j\left(\sqrt{({1}/{c_{\delta,l}}+\kappa_{max})}c \right)$.
\end{proof}
In order to ensure robust constraint satisfaction for the more general nonlinear constraint set $\tilde{\mathcal{Z}}_g$, the following constraints need to be added to the optimization problem \eqref{eq:MPC_real}:
\begin{subequations}
\label{eq:gc_MPC}
\begin{align}
\label{eq:gc_MPC_a}
&h_{j,k+1|t} = \sum_{i=0}^{k} \tilde{\alpha}_j(\rho^{k-i} w_{i|t}),~h_{j,0|t}=0,\\
\label{eq:gc_MPC_b}
&\tilde{g}_j(x_{k|t}, u_{k|t}) +h_{j,k|t}\leq 0,\\
\label{eq:gc_MPC_d}
&h_{j,N|t} \leq \overline{h}_{j,f},\quad
 k=0,\dots, N-1,\quad j=1,\dots, q.
\end{align}
\end{subequations}
\begin{assumption}
\label{ass:gc_terminal}
Consider the terminal ingredients in Assumption~\ref{ass:term_2}. 
There exist constants $\overline{h}_{j,f}$, $j=1,\dots,q$, such that the following properties hold for any $(x,s)\in \mathcal{X}_f$
\begin{subequations}
\begin{align}
\label{eq:terminal_ing_gc_a}
\tilde{g}_j(x,k_f(x))+\overline{h}_{j,f}\leq& 0,\\
\label{eq:terminal_ing_gc_c}
\tilde{\alpha}_j(\tilde{w}_{\delta}(x,k_f(x),s)) \leq& (1-\rho^{{\lambda}_j})\overline{h}_{j,f} + \rho^{{\lambda}_j} \tilde{\alpha}_j (\rho^{N-1} \overline{w}_{\min}).
\end{align}
\end{subequations}
\end{assumption}
The conditions on $\overline{h}_{j,f}$ can be viewed as an extension of the conditions on the tube size $\overline{s}$ in~\eqref{eq:term_dist_RPI2}, \eqref{eq:term_dist_con2},  \eqref{eq:term_s}. 
\begin{lemma}
\label{lemma:nonlin_con}
Suppose that the conditions in Theorem \ref{thm:main} and Assumptions \ref{ass:general_constraints} and \ref{ass:gc_terminal} are satisfied. Consider the closed-loop system \eqref{eq:close} based on the optimization problem \eqref{eq:MPC_real} with the additional constraints \eqref{eq:gc_MPC}.
 Then this optimization problem is recursively feasible, the constraints~\eqref{eq:gc_new} are satisfied and the origin is practically asymptotically stable for the resulting closed-loop system~\eqref{eq:close}.
\end{lemma}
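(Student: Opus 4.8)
The plan is to mirror the six-part structure of the proof of Theorem~\ref{thm:main} and simply augment the candidate solution with the additional variables $h_{j,\cdot|t+1}$. Since the cost $J_N$ does not depend on $h_j$ and the candidate for $(x,u,s,w)$ is unchanged, Parts~I--VI of Theorem~\ref{thm:main} carry over verbatim: the bound \eqref{eq:prop_robust_1_2}, the tube-shrinking relations \eqref{eq:tube_shrinking}--\eqref{eq:disturbance_shrinking}, satisfaction of the original constraints and the terminal set constraint, and the descent inequality that yields practical asymptotic stability. It therefore only remains to verify the new constraints \eqref{eq:gc_MPC_a}--\eqref{eq:gc_MPC_d} for the candidate, where $h_{j,\cdot|t+1}$ is defined through the recursion \eqref{eq:gc_MPC_a} with $w_{\cdot|t+1}$ taken from the candidate.

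The first key step is a shrinking property for the new tube analogous to \eqref{eq:tube_shrinking}. Using $w_{i|t+1}\leq w^*_{i+1|t}$ from \eqref{eq:disturbance_shrinking}, monotonicity of $\tilde{\alpha}_j$, and a reindexing of the sum in \eqref{eq:gc_MPC_a}, I would establish
\begin{align*}
h_{j,k|t+1}\leq h^*_{j,k+1|t}-\tilde{\alpha}_j(\rho^{k}w^*_{0|t}),\quad k=0,\dots,N-1.
\end{align*}
Combining this with \eqref{eq:gc_increase} and the bound $\sqrt{V_{\delta}(x_{k|t+1},x^*_{k+1|t},u^*_{k+1|t})}\leq \rho^k w^*_{0|t}$ from \eqref{eq:prop_robust_1_2}, the tightened constraint \eqref{eq:gc_MPC_b} for the candidate telescopes to $\tilde{g}_j(x^*_{k+1|t},u^*_{k+1|t})+h^*_{j,k+1|t}\leq 0$ for $k=0,\dots,N-2$, which holds by feasibility of the previous optimizer. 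For $k=N-1$ the same reduction yields $\tilde{g}_j(x^*_{N|t},k_f(x^*_{N|t}))+h^*_{j,N|t}\leq 0$, and since $h^*_{j,N|t}\leq\overline{h}_{j,f}$ by \eqref{eq:gc_MPC_d}, the terminal condition \eqref{eq:terminal_ing_gc_a} closes the argument.

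For the new terminal bound \eqref{eq:gc_MPC_d} I would first derive the contractive recursion
\begin{align*}
h_{j,k+1|t}\leq \rho^{\lambda_j}h_{j,k|t}+\tilde{\alpha}_j(w_{k|t}),
\end{align*}
obtained by peeling off the $i=k$ term in \eqref{eq:gc_MPC_a} and applying the subhomogeneity \eqref{eq:lambda_tilde} to the remaining sum. Applied at $k=N-1$ to the candidate, together with $w_{N-1|t+1}\leq w^*_{N|t}=\tilde{w}_{\delta}(x^*_{N|t},k_f(x^*_{N|t}),s^*_{N|t})$ and the terminal uncertainty bound \eqref{eq:terminal_ing_gc_c}, this gives $h_{j,N|t+1}\leq \rho^{\lambda_j}h_{j,N-1|t+1}+(1-\rho^{\lambda_j})\overline{h}_{j,f}+\rho^{\lambda_j}\tilde{\alpha}_j(\rho^{N-1}\overline{w}_{\min})$. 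The decisive ingredient is then the sharper bound $h_{j,N-1|t+1}\leq \overline{h}_{j,f}-\tilde{\alpha}_j(\rho^{N-1}\overline{w}_{\min})$, which follows from the displayed shrinking property at $k=N-1$ combined with $w^*_{0|t}\geq \overline{w}_{\min}$ (from \eqref{eq:w_min_bound_bar_s}) and monotonicity of $\tilde{\alpha}_j$. Substituting this into the previous estimate produces an exact cancellation of the two $\tilde{\alpha}_j(\rho^{N-1}\overline{w}_{\min})$ terms and of the $\overline{h}_{j,f}$ factors, leaving precisely $h_{j,N|t+1}\leq \overline{h}_{j,f}$.

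I expect the main obstacle to be exactly this last cancellation: condition \eqref{eq:terminal_ing_gc_c} is engineered so that the loss term $(1-\rho^{\lambda_j})\overline{h}_{j,f}+\rho^{\lambda_j}\tilde{\alpha}_j(\rho^{N-1}\overline{w}_{\min})$ appearing in the recursion matches the slack released by the tighter bound on $h_{j,N-1|t+1}$, and keeping correct track of the exponents $\lambda_j$ and of the worst-case lower bound $\overline{w}_{\min}$ is the only delicate point. Everything else---closed-loop satisfaction of \eqref{eq:gc_new} via \eqref{eq:gc_MPC_b} at $k=0$, recursive feasibility, and practical stability---is inherited directly from Theorem~\ref{thm:main}, since the augmentation changes neither the cost nor the dynamics of $(x,u,s,w)$.
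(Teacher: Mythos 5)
Your proposal is correct and takes essentially the same route as the paper's proof: the candidate solution is inherited unchanged from Theorem~\ref{thm:main}, the shrinking property $h_{j,k|t+1}\leq h^*_{j,k+1|t}-\tilde{\alpha}_j(\rho^k w^*_{0|t})$ (obtained via \eqref{eq:disturbance_shrinking} and reindexing) handles constraints \eqref{eq:gc_MPC_b} for $k\leq N-2$, the terminal conditions \eqref{eq:terminal_ing_gc_a}, \eqref{eq:gc_MPC_d} handle $k=N-1$, and the terminal tube bound follows from \eqref{eq:lambda_tilde}, \eqref{eq:terminal_ing_gc_c}, and $w^*_{0|t}\geq\overline{w}_{\min}$ with the exact cancellation you describe. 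The only cosmetic difference is that you route the final estimate through the one-step recursion $h_{j,k+1|t}\leq\rho^{\lambda_j}h_{j,k|t}+\tilde{\alpha}_j(w_{k|t})$ combined with the shrinking property at $k=N-1$, whereas the paper applies the subhomogeneity \eqref{eq:lambda_tilde} to the reindexed sum all at once; the inequalities invoked and the resulting bound are identical.
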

\begin{proof}
The proof follows the arguments of Theorem \ref{thm:main}.\\
\textbf{Part I.} Candidate solution: analogous to Thm. \ref{thm:main} with $h_{j,\cdot|t+1}$ according to~\eqref{eq:gc_MPC_a}. \\
\textbf{Part II.} Tube dynamics: The following inequality holds for $k=0, \dots, N-1$:
\begin{align}
\label{eq:h_evolution}
&h_{j,k|t+1} \stackrel{\eqref{eq:gc_MPC_a}} = \sum_{i=0}^{k-1} \tilde{\alpha}_j(\rho^{k-i-1} w_{i|t+1})   \\
&\stackrel{\eqref{eq:disturbance_shrinking}}\leq \sum_{i=0}^{k-1} \tilde{\alpha}_j(\rho^{k-i-1} w^*_{i+1|t}) 
\stackrel{\eqref{eq:gc_MPC_a}}=h^*_{j,k+1|t} -\tilde{\alpha}_j(\rho^{k} w^*_{0|t}).\nonumber
\end{align}
\textbf{Part III.} State and input constraint satisfaction \eqref{eq:gc_MPC_b}:\\For $k=0, \dots, N-2$ we have 
\begin{align*}
&\tilde{g}_j(x_{k|t+1}, u_{k|t+1})+h_{j,k|t+1}\\
\stackrel{\eqref{eq:prop_robust_1_2}, \eqref{eq:gc_increase}, \eqref{eq:h_evolution}}\leq&\tilde{g}_j(x^*_{k+1|t},u^*_{k+1|t})+h^*_{j,k+1|t}
\stackrel{\eqref{eq:gc_MPC_b}}{\leq} 0.
\end{align*}
The terminal condition \eqref{eq:gc_MPC_d} ensures constraint satisfaction for $k = N-1$ with
\begin{align*}
&\tilde{g}_j(x_{N-1|t+1}, u_{N-1|t+1})+h_{j,N-1|t+1}\\
 \stackrel{\eqref{eq:prop_robust_1_2}, \eqref{eq:gc_increase}, \eqref{eq:h_evolution}}\leq&\tilde{g}_j(x^*_{N|t}, u^*_{N|t})+h^*_{j,N|t}
 \stackrel{\eqref{eq:terminal_ing_gc_a}, \eqref{eq:gc_MPC_d}} \leq 0.
\end{align*}
\textbf{Part IV.} Terminal tube constraint \eqref{eq:gc_MPC_d}.
Given $(x^*_{N|t},s^*_{N|t})\in\mathcal{X}_f$ and $h^*_{j,N|t}\leq \overline{h}_{j,f}$, we have
\begin{align*}
&h_{j,N|t+1} 
\stackrel{\eqref{eq:gc_MPC_a}}{=} 
 \sum_{i=0}^{N-1}\tilde{\alpha}_j(\rho^{N-i-1}w_{i|t+1})\\
\stackrel{\eqref{eq:disturbance_shrinking}}{\leq} &\sum_{i=0}^{N-1}\tilde{\alpha}_j(\rho^{N-i-1}w^*_{i+1|t})
=  \sum_{i=1}^{N-1} \tilde{\alpha}_j(\rho^{N-i}w^*_{i|t}) + \tilde{\alpha}_j(w^*_{N|t})\\
\stackrel{\eqref{eq:lambda_tilde}}{\leq} & \rho^{{\lambda}_j} \sum_{i=1}^{N-1} \tilde{\alpha}_j(\rho^{N-i-1}w^*_{i|t}) + \tilde{\alpha}_j(w^*_{N|t})\\
\stackrel{\eqref{eq:gc_MPC_a}}{\leq} & \rho^{{\lambda}_j} [ \underbrace{h^*_{j,N|t}}_{ \stackrel{\eqref{eq:gc_MPC_d}}{ \leq }\overline{h}_{j,f}} - \tilde{\alpha}_j(\rho^{N-1}\underbrace{ w^*_{0|t}}_{\stackrel{\eqref{eq:w_min_bound_bar_s}}{\geq} \overline{w}_{\min}} ) ]  + \tilde{\alpha}_j(w^*_{N|t})
\stackrel{\eqref{eq:terminal_ing_gc_c}}{\leq} \overline{h}_{j,f}.
\end{align*}
\end{proof}
\begin{remark}
\label{remark:nonlin_con_Lipschitz}
Lemma~\ref{lemma:nonlin_con} shows that the proposed framework can also ensure robust constraint satisfaction for general nonlinear constraint. 
If $\tilde{\alpha}_j$ is of the form $\tilde{\alpha}_j(r)=\tilde{c}_jr^{\lambda_j}$ with positive constants $\tilde{c}_j,\lambda_j$, the constraints~\eqref{eq:gc_MPC_a} can be replaced by
\begin{align*}
&h_{j,k+1|t}=\rho^{\lambda_j}h_{j,k|t}+\tilde{c}_jw^{\lambda_j}_{i|t},~h_{j,0|t}=0.
\end{align*}
In this case, the computational demand with $q$ general nonlinear constraints and $p$ Lipschitz continuous constraints is equivalent to a nominal MPC scheme with $n+1+q$ states $(x,s,h_j)$ and $m+1$ inputs $(u,w)$. 
Thus, the computational complexity of the robust tube MPC increases with the number of nonlinear constraints $q$ (which are not Lipschitz continuous). 
The proposed framework can also be extended to robust collision avoidance, which requires additional (dual) decision variables, compare~\cite{Soloperto2019Collision}.
\end{remark}

\end{document}